\newtheorem{claim}{Claim}
\newtheorem{lemma}{Lemma}
\newtheorem{theorem}{Theorem}
\newtheorem{corollary}[claim]{Corollary}
\newtheorem{definition}[claim]{Definition}
\newcommand{\enp} {}
\newcommand{\p}{\partial}
\newcommand{\f}{\frac}
\newcommand{\salf}[2]{\widehat{\alpha}_{#1\backslash #2}}
\newcommand{\sbet}[2]{\widehat{\beta}_{#1\backslash #2}}
\newcommand{\alf}[2]{\alpha_{#1\backslash #2}}
\newcommand{\mass}[2]{\rho_{#1\backslash #2}}
\newcommand{\imass}[2]{\widetilde{\rho}_{#1\backslash #2}}
\newcommand{\sDel}[2]{\widehat{\Delta}_{#1\backslash #2}}
\newcommand{\bound}[2]{A_{#1\backslash #2}}
\newcommand{\tightbound}[2]{B_{#1\backslash #2}}
\newcommand{\tbound}[2]{\widetilde{A}_{#1\backslash #2}}
\newcommand{\bet}[2]{\beta_{#1\backslash #2}}
\newcommand{\soff}[2]{\widehat{m}_{#1\rightarrow #2}}
\newcommand{\off}[2]{m_{#1\rightarrow #2}}
\newcommand{\barbound}{\underline{A}}
\newcommand{\bartightbound}{\underline{B}}
\newcommand{\tbarbound}{\widetilde{\underline{A}}}
\newcommand{\baralf}{\underline{\alpha}}
\newcommand{\barbet}{\underline{\beta}}
\newcommand{\sbaralf}{\underline{\widehat{\alpha}}}
\newcommand{\sbarbet}{\underline{\widehat{\beta}}}
\newcommand{\bargamma}{\underline{\gamma}}
\newcommand{\baroff}{\underline{m}}
\newcommand{\barsoff}{\underline{\widehat{m}}}
\newcommand{\barmass}{\underline{\rho}}
\newcommand{\sbarDelta}{\widehat{\underline{\Delta}}}
\def\Times{{\mathfrak T}}
\def\reals{{\mathds R}}
\def\naturals{{\mathds N}}
\def\di{{\partial i}}
\def\cC{{\cal C}}
\def\cG{{\cal G}}
\def\cF{{\cal F}}
\def\eps{{\epsilon}}
\def\c{\textup{c}}
\def\s{\textup{s}}
\def\b{\textup{b}}
\def\L{\textup{\tiny L}}
\def\R{\textup{\tiny R}}
\def\P{\textup{\tiny P}}
\def\RW{\textup{\tiny RW}}
\def\top{\textup{top}}
\def\bot{\textup{bot}}
\def\up{\textup{up}}
\def\boundary{b}
\def\damp{\kappa}
\def\Bo{{\mathds B}}
\def\dj{{\partial j}}
\def\u0{\underline{0}}
\def\ux{\underline{x}}
\def\parti{i_\P}
\def\NB{{\textup {\tiny NB}}}
\def\len{{\ell}}
\definecolor{Red}{rgb}{1,0,0}
\definecolor{Blue}{rgb}{0,0,1}
\definecolor{Olive}{rgb}{0.41,0.55,0.13}
\definecolor{Green}{rgb}{0,1,0}
\definecolor{MGreen}{rgb}{0,0.8,0}
\definecolor{DGreen}{rgb}{0,0.55,0}
\definecolor{Yellow}{rgb}{1,1,0}
\definecolor{Cyan}{rgb}{0,1,1}
\definecolor{Magenta}{rgb}{1,0,1}
\definecolor{Orange}{rgb}{1,.5,0}
\definecolor{Violet}{rgb}{.5,0,.5}
\definecolor{Purple}{rgb}{.75,0,.25}
\definecolor{Brown}{rgb}{.75,.5,.25}
\definecolor{Grey}{rgb}{.5,.5,.5}
\definecolor{Pink}{rgb}{1,0,1}
\definecolor{DBrown}{rgb}{.5,.34,.16}
\definecolor{Black}{rgb}{0,0,0}
\begin{document}

\begin{titlepage}

\title{A Natural Dynamics for Bargaining on Exchange Networks}

\author{Yashodhan Kanoria\thanks{Department of Electrical Engineering,
Stanford University},\;\;
Mohsen Bayati${}^*$,\\
\\
Christian Borgs\thanks{Microsoft Research New England},\;\;
Jennifer Chayes${}^{\dagger}$\;
and Andrea Montanari${}^{*,}$\thanks{Department of Statistics,
Stanford University}}

\date{}

\maketitle
\thispagestyle{empty}

\begin{abstract}
Bargaining networks model the behavior of a set of players
that need to reach pairwise agreements for making profits.
Nash bargaining solutions are special outcomes of such
games that are both stable and balanced. Kleinberg
and Tardos proved a sharp algorithmic characterization of
such outcomes, but left open the problem of how the actual
bargaining process converges  to them.
A partial answer was provided by Azar et al. who
proposed a distributed algorithm for constructing
Nash bargaining solutions, but without polynomial bounds on its
convergence rate.
In this paper, we introduce a simple and natural model for this process, and
study its convergence rate to Nash bargaining solutions.

At each time step, each player proposes a deal to each
of her neighbors. The proposal consists of a share of
the potential profit in case of agreement. The share is
chosen to be balanced in Nash's sense
as far as this is feasible (with respect
to the current best alternatives for both players).
We prove that, whenever the Nash bargaining solution
is unique (and satisfies a positive gap condition)
this dynamics converges to it in polynomial time.

Our analysis is based on an approximate \emph{decoupling}
phenomenon between the dynamics on different substructures of the
network. This approach may be of general interest for the
analysis of local algorithms on networks.
\end{abstract}

\end{titlepage}

\section{Introduction and main results}

Exchange networks model social and economic relations among
individuals under the premise that any relationship has a potential
value for its partners. In a purely economic setting,
one can imagine that each relation corresponds to a trading
opportunity, and its value is the amount of money to be
earned from the trade.
A fascinating question in this context is that of
how network structure influences the power balance between
nodes (i.e. their earnings).

Controlled experiments \cite{NET,Lucas,Skvoretz} have been carried out by
sociologists in a set-up that can be summarized as follows.
A graph $G=(V,E)$ is defined, with positive weights
$w_{ij}>0$ associated  to the edges $(i,j)\in E$.
A player sits at each node of this network,
and two players connected by edge $(i,j)$ can
share a profit of $w_{ij}$ dollars if they agree to
trade with each other.
Each player can trade with at most one of her neighbors
(this is called the \emph{$1$-exchange rule}), so that
a set of valid trading pairs forms a matching $M$ in the graph $G$.
It is often the case that players are provided information
only about their immediate neighbors.

Network exchange theory studies the possible outcomes of such a process.
While each instance admits a multitude of outcomes,
special classes of outcomes are selected on the basis
of `desirable' properties. In this paper, we focus on
`balanced outcomes', a solution concept that dates back to Nash's bargaining
theory \cite{Nash}, and was generalized in \cite{Rochford,CookY,KT}.

A balanced outcome, or Nash bargaining (NB) solution, is a pair
$(M,\bargamma)$ where $M\subseteq E$ is a matching of $G$, and
$\bargamma = \{\gamma_i\, :\, i\in V\}$ is the vector of players'
profits.
Clearly $\gamma_i\ge 0$, and
$(i,j) \in M$ implies $\gamma_i + \gamma_j = w_{ij}$.
Denote by $\partial i$ the set of neighbors of node $i$ in $G$.
The pair $(M,\gamma)$  is a NB solution if it satisfies the following
requirements.

\noindent\textit{Stability.} If player $i$ is trading with $j$, then
she cannot earn more by simply changing her trading partner.
Formally $\gamma_i + \gamma_j \ge w_{ij}$ for all 
$(i,j) \in E\setminus M$.

\noindent\textit{Balance.} If player $i$ is trading with $j$, then the surplus of $i$ over his best alternative must be
equal to the surplus of $j$ over his best alternative. Mathematically,
\begin{align}
\gamma_i - \max_{k \in \di \backslash j} (w_{ik}- \gamma_k)_+ =
\gamma_j - \max_{l \in \dj \backslash i} (w_{jl}- \gamma_l)_+
\label{eq:balance}
\end{align}
for all $(i,j) \in M$.

It turns out that the interplay between the $1$-exchange rule
and the stability and balance conditions results in highly non-trivial
predictions regarding the influence of network structure on
individual earnings. Some of these predictions agree with experimental 
findings, but alternative predictive frameworks exist as well \cite{Skvoretz}.
%
%
\subsection{A natural dynamics}

It is a fundamental open question whether NB solutions
describe the outcomes of actual bargaining processes.
The stream of controlled experiments on small networks
will surely help to get an answer \cite{NET}.
On the other hand, an important step
forward was achieved by Kleinberg and Tardos \cite{KT} who proved
that NB solutions can be constructed in polynomial time.

However, even a superficial look at experimental conditions
reveals that players  cannot possibly run the algorithm described
in \cite{KT}.
There are two possibilities: Either there exists a realistic model
for the bargaining dynamics that converges to NB solutions,
or the solution concept has to be revised.
For the former possibility, the underlying dynamics should satisfy the following requirements:
$(1)$ It should converge to NB solutions in polynomial time;
$(2)$ It should be \emph{natural}.

While the first requirement is easy to define and motivate, the
second one is more subtle but not less important. A few properties
of a natural dynamics are the following ones: It should be
\emph{local},
i.e. involve limited information exchange along edges and processing
at nodes; It should be \emph{time invariant}, i.e. the players' behavior
should be the same/similar on identical local information at different times;
It should be \emph{interpretable}, i.e. the information  exchanged
along the edges should have a meaning for the players involved, and
should be consistent with reasonable behavior for players.

In the model we propose, at each time $t$, each player sends a message to
each of her neighbors. The message has the meaning of `best current
alternative'. We denote the message from player $i$ to player $j$
by $\alf{i}{j}^t$. Player $i$ is telling player $j$ that she (player $i$)
can currently earn $\alf{i}{j}^t$ elsewhere,
if she chooses not to trade with $j$.

The vector of all such messages is denoted by $\baralf^t \in \reals_+^{2|E|}$.
Each agent $i$ makes an `offer' to each of
her neighbors, based on her own `best alternative' and that of her neighbor. The offer from node $i$ to $j$ is denoted by $\off{i}{j}^t$
and computed according to
\vspace{-0.5cm}

\begin{align}
	\off{i}{j}^t =
 (w_{ij}-\alf{i}{j}^t)_+ - \frac{1}{2}(w_{ij}-\alf{i}{j}^t-\alf{j}{i}^t)_+\, .
	\label{eq:off_def}
\end{align}
\vspace{-0.5cm}

It is easy to realize that this definition corresponds to the following
policy: $(i)$ An offer is always non-negative, and a positive offer is never
larger than $w_{ij}-\alf{i}{j}^t$ (no player is interested in earning less
than what is currently being offered);
$(ii)$ Subject to the above constraints, the surplus
$(w_{ij}-\alf{i}{j}^t-\alf{j}{i}^t)$ (if non-negative) is shared equally.
We denote by $\baroff^t\in\reals_+^{2|E|}$ the vector of offers.

Notice that $\baroff^t$ is just a deterministic coordinate-by-coordinate
function of $\baralf^t$. In the rest of the paper we shall describe
the network status uniquely through the latter vector,  and use $\baroff|_{\baralf^t}$
to avoid ambiguity when required.
Each node can estimate its potential
earning based on the network status, using
\begin{align}
\gamma_i^t \equiv \max_{k \in \di} \, \off{k}{i}^t,\label{eq:bargamma}
\end{align}
the corresponding vector being denoted by $\bargamma^t\in\reals_+^{2|E|}$.

Messages are updated synchronously through the network,
according to the rule
\begin{align}
\alf{i}{j}^{t+1}= \damp \max_{k \in \di \backslash j} \off{k}{i}^{t}+(1-\damp)\,\alf{i}{j}^{t}\, .
\label{eq:update}
\end{align}
Here $\damp \in (0,1)$ is a `damping' factor: $(1-\damp)$ can be thought of as
the inertia on the part of the nodes to update their outgoing messages.
The use of  $\damp<1$ eliminates pathological behaviors
related to synchronous updates (such as oscillations on even-length
cycles). We expect that the use asynchronous updates also eliminates
such problems.

Throughout the paper we let $W\equiv\max_{(ij)\in E}w_{ij}$.
It is easy to see that this implies
$\baralf^{t} \in [0,W]^{2|E|}$, $\baroff^t \in [0,W]^{2|E|}$
and $\bargamma^t \in [0,W]^{|V|}$ at all times
(unless the initial condition violates
this bounds). Thus we call $\baralf$ a `valid'
message vector if $\baralf \in [0,W]^{2|E|}$.
%
%
\subsection{Main results: Fixed point properties and convergence}

Our first result is that fixed points of
the update equations (\ref{eq:off_def}), (\ref{eq:update})
(hereafter referred to as `natural dynamics')
are indeed in correspondence with Nash bargaining solutions 
when such solutions exist.
Recall the LP relaxation to the maximum weight matching problem
\begin{eqnarray}
\textup{maximize} && \sum_{(i,j) \in E} w_{ij} x_{ij},\nonumber\\
\textup{subject to}&&
\sum_{j\in \di} x_{ij} \le 1 \;\;\; \forall i \in V,\;\;\;\;\;\;\;
x_{ij}\ge 0\;\;\;\forall (i,j) \in E
\label{prob:mwm_relaxation}
\end{eqnarray}
\begin{theorem}\label{thm:fp_dualopt}
Let $G$ be an instance for which the LP (\ref{prob:mwm_relaxation})
admits a unique optimum, and this is integer.
Let $(\baralf,\baroff,\bargamma)$ be a fixed point of
the natural dynamics. Then  $\bargamma$ is the allocation
of a Nash bargaining solution (i.e. there exists a matching $M$,
such that the pair $(M,\gamma)$ is stable and balanced).
Conversely, every Nash bargaining solution $(M,\bargamma_{\NB})$,
corresponds to a unique
fixed point of the natural dynamics with $\bargamma=\bargamma_{\NB}$.
\end{theorem}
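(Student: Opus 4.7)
The plan is to prove both directions by a case analysis of the offer rule (\ref{eq:off_def}) together with LP duality for (\ref{prob:mwm_relaxation}). I will call an edge $(i,j)$ \emph{tight} (Case~1) when $w_{ij}\ge\alf{i}{j}+\alf{j}{i}$ and \emph{slack} (Case~2) otherwise. A direct computation will give: in Case~1, $\off{i}{j}=\tfrac{1}{2}(w_{ij}+\alf{j}{i}-\alf{i}{j})$ and hence $\off{i}{j}+\off{j}{i}=w_{ij}$; in Case~2, $\off{i}{j}=(w_{ij}-\alf{i}{j})_+$ and $\off{i}{j}+\off{j}{i}\le w_{ij}$. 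The fixed-point equations yield $\gamma_i=\max(\alf{i}{j},\off{j}{i})$, and comparing $\off{j}{i}-\alf{i}{j}=\tfrac{1}{2}(w_{ij}-\alf{i}{j}-\alf{j}{i})$ shows that on strict Case~1 edges $\gamma_i=\off{j}{i}$, $\gamma_j=\off{i}{j}$, so $\gamma_i+\gamma_j=w_{ij}$, while on strict Case~2 edges $\gamma_i=\alf{i}{j}$, $\gamma_j=\alf{j}{i}$, so $\gamma_i+\gamma_j>w_{ij}$. In particular $\bargamma$ is dual-feasible for (\ref{prob:mwm_relaxation}) at every fixed point.

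For the direction \emph{fixed point $\Rightarrow$ NB}, I will let $M^*$ denote the unique integer optimum and reduce everything to the lemma: every $(i,j)\in M^*$ is tight. Granted this lemma, $\sum_{i\in V(M^*)}\gamma_i=\sum_{(i,j)\in M^*}w_{ij}$, which combined with dual feasibility and $\gamma\ge 0$ forces $\gamma_i=0$ on every $M^*$-unmatched $i$ and makes $\bargamma$ dual-optimal. Balance on $M^*$ will then follow by substituting $\off{i}{j}=\gamma_j$ into the Case~1 formula and rearranging to $\gamma_i-\alf{i}{j}=\gamma_j-\alf{j}{i}$; a short check that $\off{k}{i}=(w_{ik}-\gamma_k)_+$ on edges off $M^*$ and $\off{k}{i}=\gamma_i$ when $k$ is the $M^*$-partner of $i$ will identify $\alf{i}{j}=\max_{k\in\di\setminus j}(w_{ik}-\gamma_k)_+$, casting this identity as (\ref{eq:balance}).

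For the direction \emph{NB $\Rightarrow$ fixed point}, given $(M,\bargamma_{\NB})$ I will set $\alf{i}{j}:=\max_{k\in\di\setminus j}(w_{ik}-\gamma_k)_+$, as suggested by (\ref{eq:balance}). A case split on whether $(i,j)\in M$ will evaluate (\ref{eq:off_def}) explicitly: on $M$-edges, balance forces Case~1 and gives $\off{i}{j}=\gamma_j$; off $M$-edges, stability implies $\alf{i}{j}=\gamma_i$ (attained at the $M$-partner of $i$, or $0=\gamma_i$ when $i$ is unmatched), so Case~2 applies and $\off{i}{j}=(w_{ij}-\gamma_i)_+$. These formulas directly give $\max_{k\in\di}\off{k}{i}=\gamma_i$ and $\max_{k\in\di\setminus j}\off{k}{i}=\alf{i}{j}$, verifying all fixed-point relations. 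Uniqueness of the fixed point for a given $\bargamma_{\NB}$ follows since these equations determine each $\alf{i}{j}$ purely as a function of $\bargamma_{\NB}$.

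The hardest step will be the tightness lemma in Direction~1. The naive approach of pairing each $i$ with a best-offerer $k^*(i)\in\arg\max_{k\in\di}\off{k}{i}$ does not directly yield a matching, since the functional graph of $k^*$ may contain cycles of length $\ge 3$. However, every node $i$ with $\di\ne\emptyset$ has at least one tight edge incident to it (namely $(i,k^*(i))$, which must be Case~1 by the dichotomy above). Uniqueness of $M^*$ is then indispensable: any slack $M^*$-edge would allow splicing along these tight alternatives---via an augmenting-path/alternating-cycle construction in the union of $M^*$ and the tight subgraph---to produce a distinct maximum-weight matching, contradicting uniqueness of $M^*$.
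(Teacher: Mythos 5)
Your overall architecture---establish that $\bargamma$ is dual feasible, identify a key lemma about which edges are ``tight'' (the paper's ``dotted''), leverage uniqueness of the LP optimum through an alternating path/cycle splicing argument, and close with the balance identity $\gamma_i-\alf{i}{j}=\gamma_j-\alf{j}{i}$---is the same as the paper's. The converse direction (NB solution $\Rightarrow$ fixed point) is essentially identical to the paper's. The preliminary dichotomy, including that $\gamma_i=\off{j}{i}$ on tight edges and $\gamma_i=\alf{i}{j}$ on slack ones, matches the paper's Lemmas~\ref{lemma:i-jpartners_equivalence}--\ref{lemma:when_alf_equals_gamma}.

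However, the reduction in the forward direction has a genuine gap. You state the key lemma as ``every $(i,j)\in M^*$ is tight,'' and claim that this, ``combined with dual feasibility and $\gamma\ge 0$,'' forces $\gamma_i=0$ on every $M^*$-unmatched node. It does not. From tightness of $M^*$-edges you get $\sum_{i\in V(M^*)}\gamma_i=\sum_{(i,j)\in M^*}w_{ij}$, and dual feasibility gives $\sum_{i\in V}\gamma_i\ge \sum_{(i,j)\in M^*}w_{ij}$. These two facts yield only $\sum_{\text{unmatched}}\gamma_i\ge 0$, which is vacuous. To force $\gamma_i=0$ for unmatched $i$ you need the reverse inequality $\sum_V\gamma_i\le\sum_{M^*}w_{ij}$, which is precisely dual optimality---the thing you are trying to establish. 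The paper avoids this circularity by proving the \emph{converse} inclusion first: that no \emph{non-solid} edge can be dotted (Lemma~\ref{lemma:no_extra_dotted}). That lemma implies unmatched nodes have no dotted edges, hence $\gamma_i=0$ there (Lemma~\ref{lemma:no_dotted_means_gamma0}), and dual optimality, and then your tightness of $M^*$-edges falls out as a corollary (Lemma~\ref{lemma:no_extra_solid}). Your proposed lemma has the ``wrong'' direction to close the argument on its own.

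A second, smaller error: the claim that ``every node $i$ with $\di\ne\emptyset$ has at least one tight edge incident to it, namely $(i,k^*(i))$'' is false when $\gamma_i=0$. Your own dichotomy shows that in the slack case $\gamma_i=\alf{i}{j}\ge\off{j}{i}$, which is perfectly consistent with $\off{k^*(i)}{i}=\gamma_i=0$ on a slack edge (consider a leaf $i$ whose only neighbor has a much better alternative). The correct statement is: if $\gamma_i>0$ then the best-offer edge is tight, which is a nontrivial consequence of the fixed-point equations and needs to be argued. Finally, the splicing step is stated too loosely: the paper's Lemma~\ref{lemma:no_extra_dotted} requires a case analysis over paths, even cycles, blossoms, and bicycles, with a careful accounting of half-integral solutions, and the boundary conditions at the endpoints of the alternating path (where $\gamma_v=0$ must be invoked) are exactly what makes the weight comparison close. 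Your sketch gives no indication of how to handle the odd-cycle (blossom/bicycle) cases, which is where the $\frac{1}{2}$-solid/weakly-dotted machinery is indispensable.
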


The natural dynamics appear to converge rapidly to
a fixed point on all the cases we studied.
In particular, we will prove this to be the case whenever
the NB solution is unique, and under the assumption
of a \emph{positive gap $\sigma>0$}.
The definition of `gap' is somewhat technical and is deferred to
Section \ref{sec:KT+gap}. There we will also argue that the conditions
is generic. Further, thanks to \cite{KT}, it can be checked efficiently.
In the rest of the paper we use $C$ to represent any constant that is
independent of the instance $G$, and $n, \sigma$ in particular.
\begin{theorem}\label{thm:Convergence}
Let $G$ be an instance having unique Nash bargaining solution with
gap $\sigma>0$, and let $\bargamma_{\NB}$ denote the corresponding allocation.
Then there exists
\vspace{-.2cm}
\begin{eqnarray}
T_*(n,\sigma,\eps)= C\, n^7
\left(\frac{W}{\sigma}\, +\log \Big(1+\frac{\sigma}{\eps}\Big)
\right)\, ,
\vspace{-.2cm}
\end{eqnarray}
such that, for any initial condition with
$\baralf^0\in[0,W]^{2|E|}$, and any $t\ge T_*$
the natural dynamics yields earning estimates $\bargamma^t$, with
$|\gamma^t_i-\gamma_{\NB,i}|\le \eps$ for all $i\in V$.
\end{theorem}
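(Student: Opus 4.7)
The plan is to decompose the convergence analysis into two phases matching the two terms in $T_*(n,\sigma,\eps)$. Let $(\baralf^*, \baroff^*, \bargamma^*)$ denote the unique fixed point of the natural dynamics corresponding to $\bargamma_{\NB}$, whose existence is guaranteed by Theorem \ref{thm:fp_dualopt}.

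\textbf{Phase 1 (argmax stabilization).} I would first show that there is a time $T_1 = Cn^{a} W/\sigma$ after which, for every edge and every subsequent time, the argmax in (\ref{eq:update}) is achieved by the same neighbor as at $\baralf^*$. The gap hypothesis $\sigma > 0$ is tailor-made for this: at the fixed point the argmax neighbor beats the runner-up by at least $\sigma$, so it suffices to drive $\|\baralf^t - \baralf^*\|_\infty$ below $\sigma/2$. This is precisely where the approximate decoupling announced in the abstract enters: by analyzing the dynamics on substructures of the network (e.g.\ short neighborhoods around matched edges, and alternating paths in the NB matching $M$), one shows that a perturbation at one edge propagates only polynomially fast and with controlled amplitude, rather than being amplified by the $\max$ operation. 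Chaining such estimates along the network brings every message within $\sigma/2$ of its target within $T_1$ steps.

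\textbf{Phase 2 (linear contraction).} Once the argmaxes are globally pinned down, the $\max$ in (\ref{eq:update}) is eliminated and the combined update (\ref{eq:off_def})--(\ref{eq:update}) becomes an affine map $F$ of $\baralf$, with $\baralf^*$ as its unique fixed point on the relevant linear branch. I would then show that $F$ is a contraction in $\ell_\infty$ with rate $1 - c n^{-b}$, using the damping $\damp \in (0,1)$ to rule out unit-modulus eigenvalues associated with even cycles, and using the alternating-path structure in $M$ to bound perturbation growth along chains of edges. Iterating the contraction gives $\|\baralf^t - \baralf^*\|_\infty \le \eps$ (hence $|\gamma_i^t - \gamma_{\NB,i}| \le \eps$ via (\ref{eq:bargamma})) after an additional $O(n^b \log(\sigma/\eps))$ steps. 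Choosing $a,b$ with $\max(a,b) \le 7$ yields the stated $T_*$.

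\textbf{Main obstacle.} The crux is Phase 1, i.e.\ quantifying the decoupling. The difficulty is that a priori the $\max$ in (\ref{eq:update}) can amplify message errors in a way no naive norm bound controls: an $O(1)$ change in one $\alf{i}{j}$ can swap the argmax elsewhere, causing a discontinuous propagation of error. The needed tool is either a carefully tailored potential function that is monotone under the nonlinear update, or a coupling argument between the true dynamics and a simpler reference dynamics run separately on the decoupled substructures, together with a quantitative bound on their discrepancy. I expect most of the $n^7$ factor in $T_*$ to arise here, reflecting the combinatorial overhead of propagating a local, gap-based contraction along all alternating paths in $G$, while the $W/\sigma$ factor absorbs the initial distance scaled by the local gap.
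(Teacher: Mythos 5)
Your two-phase plan has the right shape at the level of the time bound (a $W/\sigma$ contribution to collapse the initial error, then a $\log(1+\sigma/\eps)$ contribution for geometric refinement), and your instinct that the gap $\sigma$ should eventually pin down argmaxes is sound. But Phase 1 is not a step of the proof: it \emph{is} the proof, and your proposal does not identify the mechanism that makes it go. You write that ``decoupling'' across ``short neighborhoods around matched edges, and alternating paths in $M$'' controls error propagation, but this is a restatement of the goal. In particular, nothing in the proposal explains why a naive bound would fail to lose a factor per edge (giving exponential, not polynomial, time) or what combinatorial object you would induct over. The paper's actual decoupling is tied to a very specific structure: the Kleinberg--Tardos sequence $\cC_1,\dots,\cC_k$ of paths, blossoms, bicycles and cycles produced by the KT algorithm, with the gap condition (\ref{eq:sigma_cond_1})--(\ref{eq:extra_sigma_condition}) guaranteeing that once the dynamics has (approximately) settled on $\cC_1,\dots,\cC_{q-1}$, the offers entering $\cC_q$ from outside are dominated by $\sigma$ and the dynamics on $\cC_q$ is sandwiched between two explicit ``bounding processes'' analyzed via a killed lazy random walk on a finite segment (Lemma~\ref{lemma:KeyLemma}, Sections~\ref{sec:BasicStructures} and \ref{app:Path}--\ref{app:Bicycle}). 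This gives a decrease of $\delta/(5n)$ per structure and $\sigma/2$ per sweep of $k\le n$ structures, each sweep costing $Cn^6$ steps; chaining $O(W/\sigma + \log(1+\sigma/\eps))$ sweeps yields $T_*$ (Theorem~\ref{thm:TechnicalConvergence}). Your proposal never locates this decomposition, so the $n^a$ in $T_1$ is unsubstantiated.

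A second, smaller gap is in Phase 2. Even granting that the argmaxes stabilize once $\|\baralf^t-\baralf^*\|_\infty<\sigma/2$, the update is not yet affine: the offer formula (\ref{eq:off_def}) contains two $(\cdot)_+$ thresholds, so you must also show that the \emph{branch} (i.e.\ the sign of $w_{ij}-\alf{i}{j}^t-\alf{j}{i}^t$ and of $w_{ij}-\alf{i}{j}^t$) stabilizes, which does not follow from a crude $\sigma/2$ bound alone. The paper handles exactly this issue for its bounding processes via Lemma~\ref{lemma:CorrectMatching}, which shows the bounding process never crosses the matching/non-matching branch boundary; that lemma in turn relies on the fixed-point slack identity (\ref{eq:KTstruct_fp}) and the gap. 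Without an analogue, your claimed ``affine map $F$'' is not well defined, and the spectral-radius argument for a rate $1-cn^{-b}$ does not get off the ground. Finally, note the paper does not separate the two phases: the same KT-sweep lemma drives both the $W/\sigma$ and the $\log(1+\sigma/\eps)$ terms, because once $\Delta\le\sigma$ the per-sweep decrease $\min(\sigma,\Delta)/2=\Delta/2$ becomes geometric automatically.
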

It is worth stressing that the assumption of unique NB solution
seems to be a weakness of our proof technique, rather than a
necessary condition for fast convergence. In Section \ref{sec:Bipartite}
we show indeed that the natural dynamics \emph{always converges}
on bipartite graphs if run from extremal initial conditions.
On the other hand, the class of instances with unique NB solution
already includes a large class of cases.

The previous theorem provides a convergence guarantee for
the earnings $\bargamma^t$ that players expect.
Our last result shows the correct
pairing among players also emerges from the dynamics.
\begin{theorem}
Under the hypotheses of Theorem \ref{thm:Convergence}, assume
$t\ge T_*(n,\sigma,\sigma/3)$. Then, for each node
$i\in V$ receiving non-zero offers,
there exists a unique neighbor $P(i)\in \di$
such that the offer $\off{P(i)}{i}^i$ from $P(i)$ to $i$
is strictly larger than the offers $\off{l}{i}^t$ from other nodes
$l\in \di\setminus P(i)$.

Further $j=P(i)$ if and only if $i=P(j)$.
The pairs  $(i,P(i))$ thus defined coincide with the
ones in the unique Nash bargaining solution.
\end{theorem}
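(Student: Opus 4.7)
The plan is to upgrade Theorem \ref{thm:Convergence} from sup-norm convergence of the earnings vector $\bargamma^t$ to sup-norm convergence of the individual offers $\off{k}{i}^t$ to their fixed-point values, and then read off the claim from the positive gap condition.

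First, I would apply Theorem \ref{thm:Convergence} with $\eps = \sigma/3$. Inspecting its proof, what is really controlled along the way is the message perturbation $\|\baralf^t - \baralf^*\|_\infty$, where $\baralf^*$ is the unique fixed point associated (via Theorem \ref{thm:fp_dualopt}) with the NB solution $(M, \bargamma_{\NB})$; the earnings bound is derived from this. Here I need the stronger consequence that each individual offer $\off{k}{i}^t$, viewed via (\ref{eq:off_def}) as an $O(1)$-Lipschitz function of $\alf{k}{i}^t$ and $\alf{i}{k}^t$, is within $C\sigma$ of its fixed-point value $\off{k}{i}^*$ for every directed edge $(k,i)$.

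Next, I would invoke the positive gap condition from Section \ref{sec:KT+gap}, which I expect to be formulated precisely so that at the NB fixed point, each node $i$ with NB partner $j$ receives a strictly dominant offer: $\off{j}{i}^* - \off{k}{i}^* \ge \sigma$ for every $k \in \di \setminus j$. Choosing $\eps$ in the application of Theorem \ref{thm:Convergence} to be a sufficiently small constant multiple of $\sigma$ (which changes $T_*$ only by a constant factor and is absorbed into the form $T_*(n,\sigma,\sigma/3)$), the offer perturbation bound from the previous step preserves this strict separation at every $t \ge T_*$. Hence the maximizer $P(i)$ of $k \mapsto \off{k}{i}^t$ over $k \in \di$ is unique and coincides with the NB partner of $i$, and the reciprocal property $j = P(i) \Leftrightarrow i = P(j)$ is immediate from the symmetry of $M$. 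For nodes $i$ unmatched in $M$, stability forces $\gamma_{\NB,i} = 0$ and hence $\off{k}{i}^* = 0$ for every $k \in \di$; the perturbation bound then shows all incoming offers at time $t$ are bounded by $C\sigma$, and these are the nodes excluded by the ``receiving non-zero offers'' qualifier in the statement.

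The main obstacle is the strengthening in Step 1: Theorem \ref{thm:Convergence} as stated bounds only $\bargamma^t$, not $\baralf^t$ or $\baroff^t$. The fix should be routine given that the proof of Theorem \ref{thm:Convergence} must control individual messages (not merely their maxima) in order to reach its conclusion; one simply propagates the sup-norm message bound through the Lipschitz offer map (\ref{eq:off_def}). The secondary technical point is to confirm that the formal gap definition deferred to Section \ref{sec:KT+gap} indeed yields the required $\sigma$-separation at the fixed-point offers---which is essentially what the gap is designed to measure.
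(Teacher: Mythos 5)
Your proposal matches the paper's (very brief) argument: the proof of Theorem~\ref{thm:Convergence} in fact establishes $\|\baralf^t-\baralf^*\|_\infty\le\eps$ for $t\ge T_*(n,\sigma,\eps)$, and combining this with the $1$-Lipschitz offer map~(\ref{eq:off_def}) and the $\sigma_q\ge\sigma$ separation between best and second-best fixed-point offers at each matched node (from~(\ref{eq:edge_slack})) gives strict dominance of the partner's offer at $\eps=\sigma/3$. This is exactly the paper's reasoning, so your approach is correct and essentially identical.
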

This theorem follows immediately from the proof
of Theorem \ref{thm:Convergence}, as we will prove there that, for
$t\ge T_*(n,\sigma,\eps)$,  $|\alf{i}{j}^t-\alf{i}{j}|\le \eps$, where $\baralf$ is the
unique fixed point.
For $\eps\le\sigma/3$, this is sufficient to unambiguously determine pairings.

%
%
\subsection{Related work}

Following \cite{Rochford,CookY}, Kleinberg and Tardos
\cite{KT} first considered balanced
outcomes on general exchange networks and proved
that a network $G$ admits a balanced outcome if and only if
it admits a stable outcome. Further, the latter happens
if and only if a linear programming relaxation of the maximum
weight matching problem on $G$ admits an integral optimum.

The same paper describes a polynomial algorithm for
constructing balanced outcomes. This is in turn based on
the dynamic programming algorithm of
Aspvall and Shiloach \cite{Aspvall}
for solving systems of linear inequalities.
Our convergence proof exploits the structural decomposition
of the network that is produced by this algorithm.

Alternative solution concepts for bargaining on networks
were studied in \cite{Kearns1}.

Azar and co-authors \cite{Azar} first studied the question
as to whether a balanced outcome can be produced by a local dynamics,
and were able to answer positively. Their results left
however two outstanding challenges:
$(I)$ The bound on the convergence time proved in \cite{Azar}
is exponential in the network size, and therefore does not
provide a solid justification for convergence to NB solutions in large
networks; $(II)$ The algorithm analyzed by these authors
first selects a matching $M$ in $G$ \cite{Bayati}, corresponding to the
pairing of players that trade.
In a second phase the algorithm determines the profit of each player.
While such an algorithm
can be implemented in a distributed way, Azar et al. point out
that it is not entirely realistic. Indeed the rules of the dynamics change
after the matching is found. Further, if the pairing
is established at the outset, the players lose their
bargaining power.

The present paper aims at tackling these challenges.
%
%
\subsection{Outline of the proof}

We next describe the main steps in the proof of our convergence
result, Theorem \ref{thm:Convergence}. This is based on the following strategy:
\vspace{0.07cm}

\noindent {\bf 1.} Prove that, under the positive gap condition, the
natural dynamics on different substructures of the networks approximately
\emph{decouples}.

\noindent {\bf 2.}
Analyze the dynamics on each structure by comparison with an
appropriate random walk process.

\vspace{0.07cm}

An approach based on these two steps might be applicable to the analysis of a
wide class of local algorithms on networks.

Step 1 above requires recalling
the construction in \cite{KT}, which we do next.
Section \ref{sub:Convergence} describes the main steps of the proof.
The fixed point properties of the natural dynamics are summarized
in Section \ref{sec:FixedPoint}, which outlines the  proof of
Theorem \ref{thm:fp_dualopt}. A simple argument for convergence
on bipartite graph is provided in Section \ref{sec:Bipartite}, while the
much more challenging case of general graphs is contained in
Section \ref{sec:BasicStructures} and the appendices.

\subsubsection{The KT construction and the gap of a solution}
\label{sec:KT+gap}

Let $G$ be an instance which admits at least one stable outcome,
$M^*$ be the corresponding matching (recall that this is a maximum weight
matching), and consider the Kleinberg-Tardos (KT)
procedure for finding a NB solution \cite{KT}.
Any NB solution $\bargamma^*$ can be constructed by this procedure with
appropriate choices at successive stages. At each stage, a linear program
is solved with variables $\gamma_i$ attached to node $i$.
The linear program
maximizes the minimum `slack' of all unmatched edges and nodes,
whose values have not yet been set
(the slack of edge $(i,j)\not\in M$ is $\gamma_i+\gamma_j-w_{ij}$).

At the first stage, the set of nodes that
remain unmatched (i.e. are not part of $M^*$) is
found, if such nodes exist.
Call the set of unmatched nodes $\cC_0$.
\begin{figure}
\hspace{2cm}\includegraphics[scale=0.4]{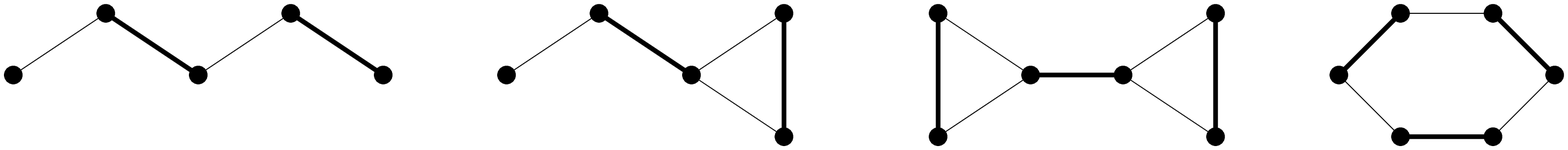}
\caption{Examples of basic structures: path, blossom, bicycle, and cycle
(matched edges in bold).}
\label{fig:structures}
\end{figure}
After this, at successive stages of the KT procedure,
a sequence of structures $\cC_1$, $\cC_2,\ldots, \cC_k$ characterizing the LP optimum are found.
We call this the \textit{KT sequence}.
Each such structure is a pair $\cC_q = (V(\cC_q),E(\cC_q))$
with $V(\cC_q)\subseteq V$, $E(\cC_q)\subseteq E$.
According to \cite{KT} $\cC_q$ belongs to one of four topologies:
alternating path, blossom, bicycle, alternating cycle (Figure \ref{fig:structures}).
The $q$-th linear program determines the value of $\gamma_i^*$
for $i\in V(\cC_q)$.
Further, one has the partition $E(\cC_q)=E_1(\cC_q)\cup E_2(\cC_q)$
with  $E_1(\cC_q)$ consisting of all matching edges along which
nodes in $V(\cC_q)$ trade, and
$E_2(\cC_q)$ consists of edges $(i,j)$ such that some $i\in V(\cC_q)$
receives its second-best, positive offer from $j$.

The $\gamma$ values for nodes on the limiting structure are uniquely
determined if the structure is an alternating path, blossom or
bicycle\footnote{In \cite{KT}
it is claimed that the $\gamma$ values `may not be fully determined'
also in the case of bicycles. However it is not hard to prove that
this is not the case.}.
In case of an
alternating cycle there is one degree of freedom -- setting
a value $\gamma_i^*$ for one node $i \in \cC_q$ fully
determines the values at the other nodes.

We emphasize that,
within the present definition,
$\cC_q$ is not necessarily a subgraph of $G$, in that it might contain
an edge $(i,j)$ but not both its endpoints.
On the other hand, $V(\cC_q)$ is always subset of the endpoints
of $E(\cC_q)$.
We denote by $V_{\textup{ext}}(\cC_q)\supseteq V(\cC_q)$
the set of nodes formed by all the endpoints of edges in
$E(\cC_q)$.

For all nodes $i\in V(\cC_q)$ the second best offer
is equal to $\gamma_i^* -\sigma_q$, where $\sigma_q$ is the slack of
$\cC_q$. Therefore
\begin{eqnarray}
\gamma_{i}^*+\gamma_j^*-w_{ij} = \left\{
\begin{array}{ll}
0&\mbox{ if $(i,j)\in E_1(\cC_q)$,}\\
\sigma_q &\mbox{ if $(i,j)\in E_2(\cC_q)$}\\
\end{array}\right.
\label{eq:edge_slack}
\end{eqnarray}
%

The slacks form an increasing sequence ($\sigma_1\le \sigma_2 \le \ldots \le \sigma_k$).
We say that $\baralf^*$ has a gap $\sigma$ if
\begin{eqnarray}
\sigma \le \min\big\{\sigma_1;\, \sigma_2-\sigma_1;\, \dots;\,\sigma_k-
\sigma_{k-1}\big\}\, ,
\label{eq:sigma_cond_1}
\end{eqnarray}
and if for each edge $(i,j)$ such that $i,j\in V_\textup{ext}(\cC_q)$
and $(i,j)\not\in E(\cC_q)$,
\begin{eqnarray}
\gamma_i^*+\gamma_j^* - w_{ij}\ge \sigma_q+\sigma\, .
\label{eq:extra_sigma_condition}
\end{eqnarray}
It is possible to prove that the positive gap condition is
generic in the following sense. The set of all
instances such that the NB solution
is unique can be regarded as a
subset ${\sf G}\subseteq[0,W]^{|E|}$ ($W$ being the maximum edge
weight). It turns out that ${\sf G}$ has dimension $|E|$ (i.e.
the class of instances having unique NB solution is large)
and that the subset of instances with gap $\sigma>0$ is both
\emph{open and dense in ${\sf G}$}.

%
%
 \subsubsection{Convergence}
\label{sub:Convergence}

We begin with a basic property.
Throughout the paper, for a vector $\ux$, we let
$||\ux||_{\infty}\equiv\max_i|x_i|$.
\begin{lemma}
The natural dynamics never leads to an expansion in the sup-norm.
More precisely, for any two initial message vectors $\baralf^0$ and
$\barbet^0$, we have
\begin{align}
||\baralf^1 -\barbet^1||_\infty \leq ||\baralf^0 -\barbet^0||_\infty\, .
\end{align}
\label{lem:infty_non_expansion}
\end{lemma}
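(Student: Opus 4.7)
The plan is to decompose one iteration of the natural dynamics into two sub-maps -- the offer map $\baralf\mapsto\baroff$ defined by (\ref{eq:off_def}) and the message update $\baroff\mapsto\baralf^{\mrm{new}}$ defined by (\ref{eq:update}) -- and show that each is non-expansive in the sup norm. Composing two non-expansive maps gives the lemma.

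The heart of the argument is the offer map. For each edge $(i,j)$, $\off{i}{j}$ depends on the message vector only through the pair $(\alf{i}{j},\alf{j}{i})$, via the fixed scalar function $\off{i}{j}=f_{ij}(\alf{i}{j},\alf{j}{i})$ with $f_{ij}(x,y)=(w_{ij}-x)_+-\tfrac12(w_{ij}-x-y)_+$. This $f_{ij}$ is continuous and piecewise linear on $\reals^2$ with three affine pieces: $f_{ij}\equiv 0$ when $x\ge w_{ij}$; $f_{ij}=w_{ij}-x$ when $x<w_{ij}$ and $y\ge w_{ij}-x$; and $f_{ij}=\tfrac12(w_{ij}-x+y)$ when $x<w_{ij}$ and $y<w_{ij}-x$. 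A direct computation on each piece shows $|\partial_x f_{ij}|+|\partial_y f_{ij}|\in\{0,1,1\}$, hence is at most $1$ everywhere. Integrating this bound along the straight segment joining $(\alf{i}{j}^0,\alf{j}{i}^0)$ to $(\bet{i}{j}^0,\bet{j}{i}^0)$ -- split into at most three sub-segments at the at most two boundary crossings, on each of which $f_{ij}$ is affine so the derivative bound is exact -- yields
\[
\big|f_{ij}(\alf{i}{j}^0,\alf{j}{i}^0)-f_{ij}(\bet{i}{j}^0,\bet{j}{i}^0)\big|\;\le\;\max\!\big(|\alf{i}{j}^0-\bet{i}{j}^0|,\;|\alf{j}{i}^0-\bet{j}{i}^0|\big)\;\le\;\|\baralf^0-\barbet^0\|_\infty.
\]

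To close the argument, set $M\equiv\|\baralf^0-\barbet^0\|_\infty$, fix an edge $(i,j)$, and write $\off{k}{i}^{\alpha}$ and $\off{k}{i}^{\beta}$ for the offers at time $0$ obtained from $\baralf^0$ and $\barbet^0$ respectively. The previous bound gives $|\off{k}{i}^{\alpha}-\off{k}{i}^{\beta}|\le M$ for every $k\in\di\setminus j$, and the elementary inequality $|\max_k a_k-\max_k b_k|\le\max_k|a_k-b_k|$ propagates this through the maximum appearing in (\ref{eq:update}). Since that update is a convex combination of this maximum with $\alf{i}{j}^0$, with weights $\damp$ and $1-\damp$, we conclude
\[
|\alf{i}{j}^1-\bet{i}{j}^1|\;\le\;\damp\cdot M+(1-\damp)\,|\alf{i}{j}^0-\bet{i}{j}^0|\;\le\;M,
\]
and maximizing over edges proves the claim.

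The only subtle point is the non-differentiability of $f_{ij}$ along the two boundary rays $\{x=w_{ij}\}$ and $\{y=w_{ij}-x\}$, but this is dispatched by the segmentation trick above (each sub-segment lies in a single affine piece, on which the derivative bound is exact, and continuity glues the pieces). I do not foresee any deeper obstacle: the two active ingredients are the piecewise-linear $1$-Lipschitz structure of the offer map and the well-known non-expansiveness of max and of convex combinations in the sup norm.
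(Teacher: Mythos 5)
Your proof is correct and follows essentially the same route as the paper's: the key step in both is the pointwise bound $\|\nabla f\|_1 = |\partial_x f| + |\partial_y f| \le 1$ for the piecewise-linear offer function $f(x,y)=(w_{ij}-x)_+-\tfrac12(w_{ij}-x-y)_+$, giving $L^\infty$-Lipschitz constant $1$, propagated through the pointwise maximum and the $\damp$-convex combination in the update rule. You spell out a couple of details the paper compresses (the segmentation argument across the non-differentiability loci, the non-expansiveness of $\max$), but the decomposition and the Lipschitz estimate coincide with the paper's Claim~\ref{claim:offer_non_expansion} and its use in the proof.
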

The proof of this fact is elementary and deferred to Appendix
\ref{sec:BasicConvergence}.
In view of this lemma, it is natural to consider the unique
(by assumption) fixed point $\baralf^*$ and consider the \emph{cost function} $U_G(\baralf) = ||\baralf -\baralf^*||_{\infty}$.

In order to prove Theorem \ref{thm:Convergence}, it is sufficient to
show that this cost decreases by a non-negligible amount in a polynomial
number of iterations. In particular, it is sufficient to prove the following.
\begin{theorem}\label{thm:TechnicalConvergence}
Let $G$ be an instance having unique solution $\baralf^*$ with gap $\sigma$
and consider an initial condition $\baralf^0$ with $U_G(\baralf^0)\le \Delta$.
Then there exists $C<\infty$ such that for all $t \geq C\, n^{7}$,
\begin{eqnarray}
U_G(\baralf^t)\le \Delta-\min(\sigma,\Delta)/2\, .
\label{eq:after_descent}
\end{eqnarray}
\end{theorem}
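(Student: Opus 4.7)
The strategy is to prove estimate~(\ref{eq:after_descent}) and iterate. Combined with Lemma~\ref{lem:infty_non_expansion}, this yields Theorem~\ref{thm:Convergence}: the regime $\Delta>\sigma$ contributes $O(W/\sigma)$ phases of additive decrease $\sigma/2$, the regime $\Delta\le\sigma$ contributes $O(\log(1+\sigma/\eps))$ phases of geometric halving, and each phase takes $Cn^7$ steps, recovering the stated $T_*$.

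To prove~(\ref{eq:after_descent}) I would proceed by induction over the KT structures $\cC_1,\cC_2,\ldots,\cC_k$, exploiting the decoupling the gap condition provides. The key observation is that at $\baralf^*$, for each $i\in V(\cC_q)$ the maximizer in~(\ref{eq:update}) is attained on $E_1(\cC_q)\cup E_2(\cC_q)$, and by~(\ref{eq:edge_slack})--(\ref{eq:extra_sigma_condition}) any other edge incident to $i$ is suboptimal by at least $\sigma$. Hence, so long as $U_G(\baralf^t)\le\sigma/3$, the $\max$ in~(\ref{eq:update}) is taken over the same edges as at $\baralf^*$, and the piecewise-linear dynamics~(\ref{eq:off_def})--(\ref{eq:update}) reduces, structure by structure, to a damped averaging operator on each basic structure, which is conjugate to a lazy random walk on a path / cycle / blossom / bicycle of size at most $n$.

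In the regime $\Delta\le\sigma$ I would apply this linearisation directly: each per-structure averaging operator is a contraction in $\ell_\infty$ at a rate governed by the spectral gap of the associated random walk, namely $\Omega(1/n^2)$; together with a factor-$n$ loss going from $\ell_2$ to $\ell_\infty$, this gives an $O(n^3)$ contraction time, with the inter-structure coupling harmless since we never leave the $\sigma/3$ regime; $Cn^7$ is comfortably sufficient. In the regime $\Delta>\sigma$, the maxima in~(\ref{eq:update}) need not be taken at the fixed-point edges, and I would argue structure by structure in order of slack. On $\cC_1$, any ``wrong'' maximizer is suboptimal by at least $\sigma$ at $\baralf^*$ and gets dominated as soon as the $\cC_1$ messages have moved by $\sigma/2$; thus after $O(n^6)$ iterations either the $\cC_1$ messages are within $\sigma/2$ of $\baralf^*$ --- which by the same decoupling frees $\cC_2$ to run essentially autonomously --- or $U_G$ has already dropped by $\sigma/2$ and we are done. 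Iterating through the $k\le n$ structures yields the $Cn^7$ bound.

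The main obstacle I expect is the quantitative control of the inter-structure coupling in the $\Delta>\sigma$ regime: tracking how residual error on already-processed structures propagates through the coupling vertices in $V_{\textup{ext}}(\cC_q)\setminus V(\cC_q)$ without spoiling the contraction on $\cC_q$. Alternating cycles will require a dedicated argument, since in isolation they admit a one-parameter family of fixed points; that free parameter is pinned down only through the boundary interaction with earlier structures, so the decoupling must be weak enough along this direction to transmit the pinning, while strong enough in the orthogonal directions to yield the contraction.
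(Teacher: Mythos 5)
Your high-level strategy matches the paper: iterate over the KT structures in order of increasing slack, exploit the gap-$\sigma$ decoupling, and compare the per-structure dynamics to a lazy random walk. The $O(n^6)$-per-structure, $O(n^7)$-total budget is also right. However, there are several places where the sketch as stated would not go through, and one concern that is actually unnecessary.

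First, the regime split into $\Delta\le\sigma$ and $\Delta>\sigma$ is not how the paper proceeds, and the $\Delta>\sigma$ branch contains a real gap. The dichotomy ``after $O(n^6)$ steps either the $\cC_1$ messages are within $\sigma/2$ of $\baralf^*$ or $U_G$ has already dropped by $\sigma/2$'' does not follow from anything you have established, and the second alternative cannot occur at intermediate stages (the global $\ell_\infty$ deviation only drops \emph{after} all structures are processed, since messages on $\cC_k$ can stay $\Delta$-far until the very end). The paper avoids this by setting $\delta_0=\min(\sigma,\Delta)$ once and for all, and by tracking a two-part invariant $\Bo_\Delta(q,\delta,t)$: the deviation is $\le\Delta$ everywhere and $\le\Delta-\delta$ on the already-processed subgraph $\cG_q$. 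The key lemma then advances $q\to q+1$ at the cost of shrinking $\delta$ by a factor $(1-(5n)^{-1})$, and $(1-(5n)^{-1})^n\ge 1/2$ gives the final $\min(\sigma,\Delta)/2$. This bookkeeping (a controlled \emph{geometric} degradation of the margin, not an either/or drop) is the step your sketch is missing and is needed to make the inter-structure coupling you correctly identified as an obstacle actually tractable.

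Second, in the $\Delta\le\sigma$ regime you claim the dynamics ``reduces to a damped averaging operator'' once the $\max$ in~(\ref{eq:update}) stabilizes. The $\max$ is only half the nonlinearity: the offers~(\ref{eq:off_def}) have $(\cdot)_+$ thresholds, and even with the right argmax you must show these thresholds never bite. The paper does this with bounding (sandwich) processes $\barbound^t(\pm)$ run under a \emph{simplified linear} dynamics, together with Lemma~\ref{lemma:CorrectMatching}, which verifies that along the bounding trajectory $\bound{i}{j}^t+\bound{j}{i}^t-w_{ij}$ keeps the correct sign on each edge (negative on matching edges, non-negative on non-matching ones), so the linear and nonlinear dynamics agree. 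Without some such sandwich argument the linearization claim is unjustified; moreover the sandwich is exactly what lets one feed boundary offers from $V_{\textup{ext}}(\cC_q)\setminus V(\cC_q)$ into the per-structure random walk without losing the contraction, which is the coupling issue you flagged.

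Third, your worry about alternating cycles is unfounded: if $\cC_q$ is an alternating cycle, the NB allocation on it has a free parameter, contradicting uniqueness. Under the hypotheses of Theorem~\ref{thm:TechnicalConvergence} only paths, blossoms and bicycles appear, which is why the paper's case analysis in Sections~\ref{app:Path}--\ref{app:Bicycle} covers exactly those three. (Blossoms and bicycles \emph{do} need a more delicate argument than paths, not because of a free parameter, but because the odd cycle has no free boundary and contraction must be extracted from the interaction with the stem via the custom boundary-forcing partition $\Times=(\Times_+,\Times_-)$ of Lemmas~\ref{lemma:blossom_cycle_bounding_descent}--\ref{lemma:BoundingBounds_pc}.)
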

Theorem \ref{thm:TechnicalConvergence} is proved by analyzing sequentially
each separate structure in the KT sequence. The key remark is that
such a separate analysis is possible because the convergence
of each structure \emph{decouples} from all subsequent structures
on a scale smaller than $\sigma$.

Given a subgraph $F=(V_F,E_F)\subseteq G$, we let
\begin{eqnarray}
U_{G,F}(\baralf) = \max_{(i,j)\in E_F}|\alf{i}{j} -\alf{i}{j}^*|\, .
\end{eqnarray}
Note that $U_{G,F}(\baralf)$ depends $F$ only through its edge set
$E_F$. Thus, there will be no ambiguity in the notation
$U_{G,\cC_q}(\baralf)$.
We let $\cG$ be the directed multi-graph with the same vertex set as $G$
and all edges obtained by directing the edges in $G$
(hence for each $(i,j)$ in $G$ we have two directed edges $i\to j$
and $j\to i$ in $\cG$).
With an abuse of notation, we shall write $U_{\cG,\cF}(\baralf)$
also when $\cF$ is a directed subgraph of $\cG$.
We denote by $\cG_q, q \in \{1,2, \ldots, k+1\}$ the  directed graph
including all directed edges between vertices in
$V(\cC_0)\cup V(\cC_1)\cup\dots\cup V(\cC_{q-1})$
and all directed edges with the first endpoint in this set of vertices.
We also let $\cG_0$ be the empty graph.
\begin{definition}
For $\Delta > 0$, $\delta \leq \min(\Delta, \sigma)$, we define
$\Bo_\Delta(q, \delta,t)$ to be the condition
\begin{eqnarray}
U_{\cG,\cG_q}(\baralf^t)\le \Delta-\delta\,,\;\;\;\;\;\;\;\;
U_{\cG}(\baralf^t)\le \Delta\, .\label{eq:ConditionStructures}
\end{eqnarray}
In the following, we will sometimes drop the subscript $\Delta$.
\end{definition}
We can now state the key lemma for analyzing the convergence of
structures in the KT sequence.
\begin{lemma}\label{lemma:KeyLemma}
Let $G$ be an instance having unique solution with gap $\sigma$.
Consider an initial
condition $\baralf^0$ such that $\Bo_\Delta(q,\delta,0)$ holds for some
$\Delta > 0$, $\delta \le \min(\Delta, \sigma)$
and for some $q \in \{0,1, \ldots, k\}$.
Then there exists $t_*(n) = C\,n^6$ such that $\Bo_\Delta(q+1,\delta
(1-(5n)^{-1}),t)$ holds  for all $t\ge t_*(n)$.
\end{lemma}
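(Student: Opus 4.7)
The plan is to separately establish the two inequalities of \eqref{eq:ConditionStructures} at level $(q+1,\delta',t)$ with $\delta'=\delta(1-(5n)^{-1})$. The bound $U_\cG(\baralf^t)\le\Delta$ follows for every $t\ge 0$ from Lemma \ref{lem:infty_non_expansion} applied to $\baralf^0$ and the fixed point $\baralf^*$. For the finer bound, I would first argue inductively that $U_{\cG,\cG_q}(\baralf^t)\le\Delta-\delta$ persists for all $t\ge 0$: the positive gap condition guarantees that the $\max$ in \eqref{eq:update} for an edge in $\cG_q$ is attained on another $\cG_q$-edge, so the $\cG_q$-restricted dynamics is effectively autonomous and one more application of non-expansion closes the induction. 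It then remains to bound $|\alf{i}{j}^t-\alf{i}{j}^*|$ on the new edges of $\cG_{q+1}\setminus\cG_q$, namely the directed edges whose tail $i$ lies in $V(\cC_q)$.

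The first main step is to convert the nonlinear $\max$-update \eqref{eq:update} into an effective linear recursion on $\cC_q$ via the gap condition. For $i\in V(\cC_q)$, the maximum defining $\alf{i}{j}^*$ is attained at the fixed point on some structural edge $(k^*,i)\in E_1(\cC_q)\cup E_2(\cC_q)$; by \eqref{eq:edge_slack} and \eqref{eq:extra_sigma_condition}, any competing offer is at least $\sigma$ smaller. Combining with $U_\cG(\baralf^t)\le\Delta$, the persistent bound $U_{\cG,\cG_q}(\baralf^t)\le\Delta-\delta$, and $\delta\le\sigma$, this forces the argmax in the update to remain on a structural parent of $i$ up to a small error. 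In the saturated regime the offer formula simplifies to $\off{k}{i}=\tfrac12(w_{ki}-\alf{k}{i}+\alf{i}{k})$, so that the deviation $\delta\alf{i}{j}^t\equiv\alf{i}{j}^t-\alf{i}{j}^*$ satisfies
\begin{align}
\delta\alf{i}{j}^{t+1}=(1-\damp)\,\delta\alf{i}{j}^t+\frac{\damp}{2}\bigl(\delta\alf{i}{k^*}^t-\delta\alf{k^*}{i}^t\bigr)+\eta_t\,,
\end{align}
where $\eta_t$ absorbs $\max$-switching error and boundary contributions, uniformly bounded by $\Delta-\delta$ on edges into $\cG_q$.

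The second main step is to interpret this linearized recursion as a lazy random walk on the directed edges of $\cC_q$. A walker at $i\to j$ stays put with probability $1-\damp$ and with probability $\damp/2$ each jumps to one of the two parent edges $k^*\to i$ or $i\to k^*$; upon landing on an edge of $\cG_q$ it is absorbed, contributing at most $\Delta-\delta$, whereas an unabsorbed walker contributes at most $\Delta$. Standard hitting-time estimates for lazy random walks on the four basic topologies --- path, blossom, bicycle, alternating cycle (cf. Fig.~\ref{fig:structures}) --- each of size $O(n)$, yield absorption probability at least $1-(5n)^{-1}$ within $t_*(n)=Cn^6$ steps, producing the claimed bound $|\alf{i}{j}^t-\alf{i}{j}^*|\le \Delta-\delta(1-(5n)^{-1})$.

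The main obstacle I anticipate is the alternating cycle topology. Here the linearized dynamics possesses a non-contracting mode, corresponding precisely to the one-parameter family of consistent $\gamma$-assignments on the cycle noted in Section~\ref{sec:KT+gap}; this mode is broken only by non-structural edges leaving the cycle, which must exist by uniqueness of the NB solution but may be sparse. The random walk must therefore traverse a cycle of length $\Theta(n)$ to reach absorption, yielding an $n^2$ hitting time, and boosting the absorption probability from constant to $1-(5n)^{-1}$ requires additional polynomial factors that accumulate to the $n^6$ bound, inherited similarly by blossom (odd cycle with pendant path) and bicycle (two cycles joined by a path) topologies. A secondary difficulty is verifying the linearization throughout the entire $Cn^6$ window, which rests on a careful interplay between the gap condition and the persistent $\cG_q$-bound.
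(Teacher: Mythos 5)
Your overall outline---coarse bound from non-expansion, persistence of the $\cG_q$-bound, linearize via the gap condition, analyze by random walk---matches the paper's strategy, but two of the technical claims have genuine gaps.

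\textbf{The linearization step does not go through as stated.} You write the update as a linear recursion on the structural parent $k^*$ plus a forcing term $\eta_t$ that ``absorbs $\max$-switching error and boundary contributions, uniformly bounded by $\Delta-\delta$.'' This is not justified: under the standing bound $U_{\cG}(\baralf^t)\le\Delta$, the structural offer $\off{k^*}{i}^t$ can drop as far as $\off{k^*}{i}^*-\Delta$ while a competing offer from a $\cG_q$-edge is only bounded above by $\gamma_i^*-\sigma_q+\Delta-\delta$ (Lemma~\ref{lemma:outsideOff_ub}); for $2\Delta>\sigma_q$ the argmax can defect, and the discrepancy between $\max_k\off{k}{i}^t$ and $\off{k^*}{i}^t$ can then be of order $\Delta$, not $\delta$. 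The saturation of the $(\cdot)_+$ operations in \eqref{eq:off_def} can also change, since the initial deviation $\Delta$ can far exceed the slack $\sigma_q$. What you label a ``secondary difficulty'' is in fact the central one. The paper never linearizes the actual dynamics: it introduces a separate linear \emph{simplified dynamics} \eqref{eq:simplified_offers} and runs it from the extremes of a path-adapted partial order to obtain monotone bounding processes. Because these bounding processes dominate or are dominated coordinate-wise, the argmax and the $(\cdot)_+$ never switch \emph{for them} (Lemma~\ref{lemma:CorrectMatching}), and the actual messages are controlled by sandwiching (Lemma~\ref{lemma:BoundingBounds}); only the bounding process, which is genuinely linear, is analyzed as a random walk. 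Your approach needs an analogue of this sandwiching to make the $\eta_t$ bookkeeping sound.

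\textbf{The alternating cycle is not the obstacle.} Under the hypothesis of a \emph{unique} NB solution, the KT sequence cannot contain an alternating cycle: as noted in Section~\ref{sec:KT+gap}, that topology leaves one free parameter in $\bargamma^*$, producing a one-parameter family of NB solutions. This is precisely why the paper's proof only treats path, blossom, and bicycle. The actual delicacy lies in the odd cycles of the blossom and bicycle: the odd length is what \emph{makes} the contraction work (Lemma~\ref{lemma:Pc_RW_bound} requires $\len$ odd), but the cycle is attached to the stem/rod only at a single vertex, so the analysis proceeds by an \emph{iterative two-phase descent}---improve the stem error via the cycle boundary, feed the improved stem error back as a shrinking boundary perturbation $\delta_\s$ to the cycle bounding process with a time-varying boundary partition $\Times$, and repeat $O(n^{3.1})$ times (Lemma~\ref{lemma:blossom_stem_bound}). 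This iteration, not a single hitting-time estimate, is the source of the $n^6$ in $t_*(n)$; the single-pass random-walk bound on a structure of size $O(n)$ is only $O(n^2\log n)$.

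Finally, note that the base case $q=0$ is handled directly from Lemma~\ref{lemma:outgoing_error_bound} without any structural analysis, since $\cC_0$ consists of isolated unmatched nodes; your proposal implicitly treats all $q$ uniformly.
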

The proof of this lemma is based on a case-by-case analysis of the
possible topologies of $\cC_q$ and can be found in Sections
\ref{sec:BasicStructures},
\ref{app:Path}, \ref{app:Blossom}, \ref{app:Bicycle}.
Using the lemma to prove Theorem \ref{thm:TechnicalConvergence},
and hence Theorem \ref{thm:Convergence} is immediate.
\begin{proof}[Proof (Theorem \ref{thm:TechnicalConvergence})]
We know that  $k \leq n$. Start with
 $\baralf^0$ such that
$U_{\cG,\cG_q}(\baralf^0)\le \Delta$ at $t=0$, i.e. $\Bo_{\Delta}(q,\delta,0)$.
Define $\delta_0=\min(\sigma, \Delta)$,
$\delta_1=\delta_0(1-(5n)^{-1})$,
\ldots, $\delta_k=\delta_0(1-(5n)^{-1})^k$.

We know that $\Bo(0,\delta_0,0)$ holds by assumption.
We deduce from Lemma \ref{lemma:KeyLemma}
that $\Bo(1,\delta_1,t_*)$ holds. Proceeding inductively, it follows that
$\Bo(k+1,\delta_k,(k+1)t_*)$ holds. Now, we only need to show
$\delta_k \geq \frac{1}{2}\delta_0$, which follows from
$(1-(5n)^{-1})^{n} \geq 1/2$ for all $n\geq 1$.
\end{proof}

%
%
\section{Fixed point properties: Proof of Theorem \ref{thm:fp_dualopt}}
\label{sec:FixedPoint}

The dual problem to (\ref{prob:mwm_relaxation}) is
\begin{eqnarray}
\textup{minimize} && \sum_{i \in V} y_i,\nonumber\\
\textup{subject to}&&
y_i+y_j \ge w_{ij} \;\;\; \forall (i,j) \in E,\;\;\;\;\;\;\;
y_i \ge 0 \;\;\;\forall i \in V
\label{prob:mwm_dual}
\end{eqnarray}
A feasible point $\underline{x}$ for LP \eqref{prob:mwm_relaxation} is called \emph{half-integral} if
for all $e\in E$, $x_e\in\{0,1,\frac{1}{2}\}$. It is well known that problem (\ref{prob:mwm_relaxation}) always has an optimum $\underline{x}^*$ that is half-integral. We also denote the subset of edges $e$ in $E$ with $x_e^* \in \{ \frac{1}{2}, 1 \}$ by $M^*$ and say $M^*$ is a \emph{half-integral matching}.

In this paper we consider only those problems that have unique $\underline{x}^*$, that has greater weight than any other corner $\underline{x}$ of the primal polytope by at least $\epsilon>0$, i.e. $\sum_{e \in E} w_e x_e^*-\sum_{e \in E} w_e x_e \geq \epsilon$ for all half-integral solutions $\underline{x} \neq \underline{x}^*$.
We call such LP an \emph{$\epsilon$-pointed problem}. Further, if $\underline{x}^*$ is, in fact, integral, we say that the LP relaxation is \emph{$\epsilon$-tight}. We will simply use \emph{pointed}
and \emph{tight}, whenever the value of $\eps>0$ is immaterial
(let us stress that, according to this terminology, tight implies pointed).

We call  $e\in E$ a \emph{$1$-solid edge} if $x_e^*=1$, a \emph{$\frac{1}{2}$-solid} edge if $x_e^*=\frac{1}{2}$,
and a \emph{non-solid edge} if $x_e^*=0$.
%
\paragraph{Proof of Theorem \ref{thm:fp_dualopt}: From fixed points to NB solutions.}
The direct part follows from the following set
of fixed point properties, which hold for pointed problems.
The proofs of these properties are given in Appendix \ref{sec:fixed_point_prop_proofs}. Throughout  $(\baralf,\baroff,\bargamma)$ is a fixed point of the
dynamics (\ref{eq:off_def}), (\ref{eq:update}) (with $\bargamma$
given by (\ref{eq:bargamma})).
\begin{enumerate}
\item Two players $(i,j) \in E$ are called \emph{partners} if $\gamma_i + \gamma_j = w_{ij}$. Then the following are equivalent:
(a) $i$ and $j$ are partners, (b) $w_{ij}-\alf{i}{j} - \alf{j}{i}\geq 0$, (c) $\gamma_i=\off{j}{i}$ and $\gamma_j=\off{i}{j}$.

\item Let $P(i)$ be the set of all partners of $i$. Then the following are equivalent: (a) $P(i) = \{ j \}$, (b) $P(j) = \{ i \}$, (c) $w_{ij}-\alf{i}{j} - \alf{j}{i} >0$.

\item We say that $(i,j)$ is a \emph{weak-dotted edge} if $w_{ij}-\alf{i}{j} - \alf{j}{i}=0$, \emph{a strong-dotted edge} if
$w_{ij}-\alf{i}{j} - \alf{j}{i} > 0$, and a \emph{non-dotted edge} otherwise. If $i$ has no adjacent dotted edges, then $\gamma_i=0$.

\item $\bargamma$ is an optimum solution for the dual LP (\ref{prob:mwm_dual}) and $\off{i}{j}=(w_{ij}-\gamma_i)_+$ holds for all $(i,j) \in E$.

\item The balance property \eqref{eq:balance}, holds at every edge $(i,j) \in E$ (with both sides being non-negative).

\item An edge is $1$-solid ($\frac{1}{2}$-solid) iff it is strongly (weakly) dotted.
\end{enumerate}

\begin{proof}[Proof of Theorem \ref{thm:fp_dualopt}, direct implication]
Assume that the LP (\ref{prob:mwm_relaxation}) is tight. Then, by property 6,
the set of strong-dotted edges form the unique maximum weight matching $M^*$.
By properties 4 and 5 the pair $(M^*,\gamma)$ is stable and balanced
and thus forms a NB solution.
\end{proof}

\noindent\emph{Remark :} Notice that properties 1-6 hold under the weaker condition
that the problem \eqref{prob:mwm_relaxation} is pointed. In this more general setting, fixed
point correspond to dual optima satisfying the unmatched
balance property (\ref{eq:balance}).
%
%
\paragraph{Proof of Theorem \ref{thm:fp_dualopt}: From NB solutions to fixed points.}
\begin{proof}[Proof of Theorem \ref{thm:fp_dualopt}, converse implication]
Consider any NB solution $(M,\bargamma_{\NB} )$. Construct a corresponding
FP as follows. Set $\off{i}{j}=(w_{ij}-\gamma_{\NB,i})_+$ for all $(i,j) \in E$.
Compute $\baralf$ using $\alf{i}{j}=\max_{k \in \di \backslash j} \off{k}{i}$.
We claim that this is a FP and that the corresponding $\bargamma$ is $\bargamma_{\NB}$.
To prove that we are at a fixed point, we imagine updated offers $\baroff^{\textup{upd}}$
based on $\baralf$, and show $\baroff^{\textup{upd}}=\baroff$.

Consider a matching edge $(i,j) \in M$. We know that $\gamma_{\NB,i} + \gamma_{\NB,j}=w_{ij}$.
Also stability and balance tell us $\gamma_{\NB,i} - \max_{k \in \di \backslash j} (w_{ik}- \gamma_{\NB,k})_+ =
\gamma_{\NB,j} - \max_{l \in \dj \backslash i} (w_{jl}- \gamma_{\NB,l})_+$ and both sides are non-negative.
Hence, $\gamma_{\NB,i} - \alf{i}{j} = \gamma_{\NB,j} - \alf{j}{i} \geq 0$.
Therefore $\alf{i}{j}+\alf{j}{i} \leq w_{ij}$,
\begin{align}
\off{i}{j}^\textup{upd}=\left( \frac{w_{ij}-\alf{i}{j}+\alf{j}{i}}{2} \right)
= \left( \frac{w_{ij}-\gamma_{\NB,i}+\gamma_{\NB,j}}{2} \right)= \gamma_{\NB,j} = w_{ij}-\gamma_{\NB,i} = \off{i}{j}.
\end{align}
By symmetry, we also have $\off{j}{i}^\textup{upd}= \gamma_{\NB,j} = \off{j}{i}$. Hence, the offers remain unchanged.
Moreover, $\gamma_i = \max(\off{j}{i}, \alf{i}{j}) =\gamma_{\NB,i}$. Similarly, $\gamma_j= \gamma_{\NB,j}$.  Now consider $(i,j) \notin M$. We have $\gamma_{\NB,i}+\gamma_{\NB,j} \geq w_{ij}$ and, $\gamma_{\NB,i} = \max_{k \in \di \backslash j} (w_{ik}- \gamma_{\NB,k})_+ = \alf{i}{j}$. Similar equation holds for $\gamma_{\NB,j}$. The validity of this identity can be checked individually in the cases when $i \in M$
and $i \notin M$. Hence, $\alf{i}{j}+\alf{j}{i} \leq w_{ij}$.
This leads to $\off{i}{j}^{\textup{upd}}=(w_{ij}-\alf{i}{j})_+=(w_{ij}-\gamma_{\NB,i})_+=\off{i}{j}$. By symmetry, we know that both the offers remain unchanged.

Finally, we show $\bargamma = \bargamma_{\NB}$. For all $(i,j) \in M$, we already found that $\off{i}{j}=\gamma_j$ and vice versa.
For any edge $(ij) \notin M$, we know $\off{i}{j} = (w_{ij}-\gamma_{\NB,i})_+ \leq \gamma_{\NB,j}$. This immediately leads
to $\bargamma = \bargamma_{\NB}$. It is worth noting that making use of the uniqueness of LP optimum, we can further show that
 $M=M^*$, $\gamma_i = \off{j}{i} > 0$ iff $(ij) \in M$ i.e. the fixed point reconstructs the pairing $M=M^*$.
\end{proof}
%
%
\section{An easier case: bipartite graphs}
\label{sec:Bipartite}

Bipartite graphs are natural graphs to consider, since they
include the case of `buyers' and `sellers' negotiating with each other.
This case is substantially easier because of a  natural partial ordering
on message vectors, and in particular,
on the Nash bargaining solutions. This ordering loosely corresponds to the
perceived strength or negotiating power of buyers relative to sellers.
Further, there are `extremal' NB solutions with respect to
this partial ordering,
which we call
the buyer/seller dominant solution. We show that if the dynamics starts with a sufficiently
buyer/seller dominant state, then it always converges to one of these extreme NB solutions.

Unfortunately, the partial ordering does not extend to
general graphs, and a much more technical analysis is required in that case.
Nevertheless, the bipartite graph case suggests that the natural dynamics
indeed converges even in instances with multiple NB solutions.

We partition the nodes on a bipartite graph $G$ as $V=(V_{\rm B}, V_{\rm S})$,
representing buyers and seller respectively.
Given message vectors $\baralf$ and $\barbet$, we define a partial ordering
as $\baralf \succeq \barbet$ if  inequalities $\alf{i}{j'} \geq \bet{i}{j'}$ and $\alf{j'}{i} \leq \bet{j'}{i}$ 
hold for all $ (i,j') \in E$, $i \in V_{\rm B}, j' \in V_{\rm S}$.
\begin{lemma}
For a bipartite graph $G$, if $\baralf^0 \succeq \barbet^0$ then
$\baralf^{t} \succeq \barbet^{t}$ for all $t \geq 0$
i.e. the dynamics preserves partial ordering.
\label{lemma:bipartite_monotonicity}
\end{lemma}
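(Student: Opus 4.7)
The plan is to prove this by induction on $t$. The base case $t=0$ is the hypothesis, and for the inductive step it is enough to show that the single-step update map preserves $\succeq$. To this end I would isolate the following monotonicity property of the offer formula \eqref{eq:off_def}, then combine it with bipartiteness.

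First I would analyze the scalar function $f(u,v;w) \equiv (w-u)_+ - \tfrac{1}{2}(w-u-v)_+$ that defines an offer, viewed as a function of the outgoing message $u=\alf{i}{j}$ and the incoming message $v=\alf{j}{i}$. A direct case split on the signs of $w-u$ and $w-u-v$ gives three regions: $f\equiv 0$ when $w-u\le 0$; $f=w-u$ when $w-u>0\ge w-u-v$; and $f=\tfrac{1}{2}(w-u+v)$ when $w-u-v>0$. In each region $f$ is non-increasing in $u$ and non-decreasing in $v$, and the pieces agree continuously on the boundaries, so $f$ is globally monotone in this sense on $[0,W]^2$.

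Next I would translate this into offer inequalities using bipartiteness. For a directed edge $k\to i$ with $k\in V_{\rm B}$ and $i\in V_{\rm S}$, the assumption $\baralf\succeq\barbet$ gives $\alf{k}{i}\ge\bet{k}{i}$ and $\alf{i}{k}\le\bet{i}{k}$, so the monotonicity of $f$ yields $\off{k}{i}|_{\baralf}\le \off{k}{i}|_{\barbet}$; by the symmetric reading of $\succeq$, for edges $k\to i$ from a seller to a buyer the inequality reverses, $\off{k}{i}|_{\baralf}\ge \off{k}{i}|_{\barbet}$.

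Finally I would plug these into the update rule \eqref{eq:update}. Fix an edge $(i,j)\in E$ with $i\in V_{\rm B}$, $j\in V_{\rm S}$. Because $G$ is bipartite, every $k\in\di\setminus j$ is a seller, so all offers $\off{k}{i}^t$ appearing in the max are seller-to-buyer offers, which are componentwise larger under $\baralf^t$ than under $\barbet^t$; hence $\max_{k\in\di\setminus j}\off{k}{i}^t|_{\baralf^t}\ge \max_{k\in\di\setminus j}\off{k}{i}^t|_{\barbet^t}$. Combined with $(1-\damp)\alf{i}{j}^t\ge (1-\damp)\bet{i}{j}^t$, this gives $\alf{i}{j}^{t+1}\ge \bet{i}{j}^{t+1}$, the correct direction for a buyer-to-seller message. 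The analogous argument with $i\in V_{\rm S}$, $j\in V_{\rm B}$ (where now all $k\in\di\setminus j$ are buyers, so the relevant offers move in the opposite direction) yields $\alf{i}{j}^{t+1}\le \bet{i}{j}^{t+1}$ for seller-to-buyer messages. Together these two estimates say exactly $\baralf^{t+1}\succeq\barbet^{t+1}$, closing the induction.

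The only technical step is the case analysis showing monotonicity of $f$, which is short. The real conceptual content is bipartiteness: it guarantees that each individual max in the update is taken over offers living on edges of a single orientation type (either all seller-to-buyer or all buyer-to-seller), so all terms feeding into a given message move consistently under $\succeq$. In a non-bipartite graph the same node's neighborhood would mix both types, and the two monotonicity directions in $f$ would fight each other — this is exactly why the argument does not extend and a more delicate analysis is needed later.
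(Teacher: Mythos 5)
Your proof is correct and follows essentially the same route as the paper: establish the coordinate-wise monotonicity of the offer map (non-increasing in the outgoing message, non-decreasing in the incoming one), observe that bipartiteness makes all offers entering a given max move in the same direction under $\succeq$, and close by induction on $t$. The paper states this more tersely, but your explicit case split on $f$ and the explicit check of the update rule \eqref{eq:update} are just the details it leaves implicit.
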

\begin{proof}
$\off{i}{j'}$ is non-increasing in $\alf{i}{j'}$ and non-decreasing in
$\alf{j'}{i}$ leading to $\off{i}{j'}|_{\baralf^0} \leq \off{i}{j'}|_{\barbet^0}$.
Similarly, $\off{j'}{i}|_{\baralf^0} \geq \off{j'}{i}|_{\barbet^0}$. It
follows that $\baralf^1 \succeq \barbet^1$. Induction completes the proof.
\end{proof}

\begin{corollary}\label{corr:monotone_convergence}
Assume $\baralf^0$ to be such that $\baralf^0 \succeq \baralf^1$. Then
$\lim_{t \rightarrow \infty} \baralf^t = \baralf^*$
for some NB solution $\baralf^*$.
\end{corollary}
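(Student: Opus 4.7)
The plan is to combine the monotonicity of Lemma \ref{lemma:bipartite_monotonicity} with a monotone-bounded-convergence argument, and then invoke Theorem \ref{thm:fp_dualopt} to identify the limit as an NB solution. Starting from the hypothesis $\baralf^0\succeq\baralf^1$, I apply Lemma \ref{lemma:bipartite_monotonicity} to the pair $(\baralf^0,\baralf^1)$: one step of the dynamics carries this pair to $(\baralf^1,\baralf^2)$ while preserving the partial order, so $\baralf^1\succeq \baralf^2$. Iterating inductively yields $\baralf^t\succeq \baralf^{t+1}$ for every $t\ge 0$. Unpacking the definition of $\succeq$, each buyer-to-seller coordinate $\alf{i}{j'}^t$ is non-increasing in $t$, while each seller-to-buyer coordinate $\alf{j'}{i}^t$ is non-decreasing. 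Since the dynamics preserves validity, all coordinates stay in $[0,W]$, so every coordinate converges, giving a well-defined limit $\baralf^*:=\lim_{t\to\infty}\baralf^t$.

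Next, the one-step map defined by (\ref{eq:off_def}) and (\ref{eq:update}) is a continuous function of $\baralf$, being a composition of $\max$, positive-part, and affine operations. Passing to the limit on both sides of (\ref{eq:update}), I conclude that $\baralf^*$ is a fixed point of the natural dynamics.

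Finally, to identify $\baralf^*$ with an NB solution I would appeal to Theorem \ref{thm:fp_dualopt}. What I expect to be the main obstacle is that Theorem \ref{thm:fp_dualopt} is stated for instances with unique integer LP optimum, whereas the bipartite section explicitly allows multiple NB solutions. The remedy is to use the underlying fixed-point properties 1--6, which the remark following the theorem notes hold whenever the LP is merely pointed. On a bipartite graph every vertex of the matching polytope is integral, so the set of strongly-dotted edges at $\baralf^*$ forms an integral matching $M$, and properties 4--5 imply that $(M,\bargamma^*)$ is stable and balanced, yielding an NB solution whose associated message vector is exactly $\baralf^*$.
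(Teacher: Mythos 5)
Your first two paragraphs reproduce exactly the paper's own argument: iterate Lemma \ref{lemma:bipartite_monotonicity} to get $\baralf^t\succeq\baralf^{t+1}$ for all $t$, note boundedness in $[0,W]^{2|E|}$ gives coordinatewise convergence, and pass to the limit using continuity of the update map to conclude $\baralf^*$ is a fixed point. This is the paper's complete proof.

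Your third paragraph goes beyond the paper's proof, which silently treats ``fixed point'' and ``NB solution'' as interchangeable without stating the hypotheses under which Theorem \ref{thm:fp_dualopt} applies. You are right to flag that the theorem's hypothesis (unique, integer LP optimum) is not stated in Section~\ref{sec:Bipartite}. Your proposed remedy, however, is not quite right: integrality of the bipartite matching polytope guarantees that an integral optimum exists, but not that the optimum is \emph{unique}, which is what properties 1--6 (and the pointedness condition used in their proofs) require. A degenerate bipartite instance can have several optimal matchings, in which case the fixed-point-to-NB correspondence is not established by anything in the paper. The honest reading is that the corollary inherits the standing assumption that the LP is tight (or at least pointed), and the paper simply doesn't restate it; with that assumption your argument closes cleanly, and without it the gap you noticed is real.
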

\begin{proof}
Since $\baralf^0 \succeq \baralf^1$, using Lemma \ref{lemma:bipartite_monotonicity}
we have $\baralf^t \succeq \baralf^{t+1}$ for all $t$.
We know that $\baralf \in [0,W]^{2|E|}$ for all valid message
vectors $\baralf$, so the monotone sequence $\{\baralf^t\}_{t\geq 0}$ converges
to a fixed point of the dynamics (as the update rules are continuous in
$\baralf^t$).
\end{proof}
Define the message vectors $\baralf^\top$ and $\baralf^\bot$ by: $\alf{i}{j'}^{\top}=W \;\;, \alf{j'}{i}^{\top}=0$ and
$\alf{i}{j'}^{\bot}=0\;\;,\alf{j'}{i}^{\bot}=W$ for all $(i,j') \in E$. Therefore, $\baralf^\top \succeq \baralf \succeq \baralf^\bot$ for any valid $\baralf \in [0,W]^{2|E|}$.

If $\baralf^0=\baralf^{\top}$ then $\baralf^0 \succeq \baralf^1$.
It follows from Corollary \ref{corr:monotone_convergence} that $\baralf^t$ converges, say to
$\baralf^{*,\up}$. Consider any other fixed point (i.e. NB solution) $\baralf^*$.
$\baralf^0 \succeq \baralf^* \Rightarrow \baralf^t \succeq \baralf^* \; \forall \, t \geq 0$.
Hence, $\baralf^{*,\up}\succeq \baralf^*$. Thus, there exists
a `maximal' fixed point
with respect to the partial ordering.
It is buyer optimal in a strong sense -- Each buyer
earns the most possible at $\baralf^{*,\up}$ among all NB solutions, and each
seller earns the least possible.
With a similar argument using $\alf{i}{j'}^{\bot}$, we can show the existence of a
`minimal' fixed point $\baralf^{*,\bot}$ that is seller optimal.

Although the simple monotonicity argument does not imply any 
bound on the convergence rate, we expect that the methods developed for 
general graphs (next Section) can be used to prove polynomial convergence 
in the present case.
%
%
\section{General convergence: Proof of Lemma \ref{lemma:KeyLemma}}
\label{sec:BasicStructures}

The condition $\Bo(q,\delta,t)$ from Section \ref{sec:KT+gap} means that at a fixed step $t$ of the natural dynamics the deviation of $\baralf^t$ from the unique fixed point $\baralf^*$ is at most $\Delta$ on all edges of $E$, and is at most $\Delta-\delta$ on all edges of $\cG_q$, or $U_{\cG}(\baralf^t)\le \Delta$,
$U_{\cG,\cG_q}(\baralf^t)\le \Delta-\delta$.

The goal now is to show that if we start with an initial condition that satisfies $\Bo(q,\delta,0)$ then after $Cn^6$ iterations of the natural dynamics we reach the condition  $\Bo(q+1,\delta(1-(5n)^{-1}),t)$.

First we state the following useful properties of the condition $\Bo(q,\delta,t)$ with respect to the natural dynamics. Their proofs are given in \ref{sec:BasicConvergence}. The first lemma gives an upper bound on offers to nodes in $V(\cC_q)$ from outside $\cC_q$. This allows us to restrict the analysis to messages on $\cC_q$.
\begin{lemma}
Let $q\in\{0,1, \ldots, k\}$ and $\delta \leq \min(\sigma, \Delta)$.
If $\Bo(q,\delta,t)$ holds then for all $(i,j) \in E$ with $i$ in $V(\cC_q)$ and $(i,j)\notin E(\cC_q)$:
$\off{j}{i}^t \leq \gamma_i^* -\sigma_q+\Delta -\delta$.
\label{lemma:outsideOff_ub}
\end{lemma}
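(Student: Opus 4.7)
The strategy is to bound $\off{j}{i}^t$ by first pinning down the fixed-point value $\off{j}{i}^*$ in closed form, then controlling the deviation via a $1$-Lipschitz comparison on the single scalar $\alf{j}{i}$. Since $i\in V(\cC_q)$, every matching edge incident to $i$ lies in $E_1(\cC_q)\subseteq E(\cC_q)$, so the hypothesis $(i,j)\notin E(\cC_q)$ forces $(i,j)\notin M^*$. Fixed-point property 6 then makes $(i,j)$ non-dotted, so the second term of (\ref{eq:off_def}) vanishes at $\baralf^*$ and $\off{j}{i}^*=(w_{ij}-\alf{j}{i}^*)_+$. Because $j$'s matching partner (if any) is not $i$, the maximum defining $\alf{j}{i}^*$ sees $j$'s best fixed-point offer, hence $\alf{j}{i}^*=\gamma_j^*$. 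Writing $s_{ij}\equiv\gamma_i^*+\gamma_j^*-w_{ij}$ for the LP slack, this gives $\off{j}{i}^*=(\gamma_i^*-s_{ij})_+$. Meanwhile the update rule (\ref{eq:off_def}) implies the crude bound $\off{j}{i}^t\le (w_{ij}-\alf{j}{i}^t)_+$, and $x\mapsto(w_{ij}-x)_+$ is $1$-Lipschitz, so for any $\eta\ge|\alf{j}{i}^t-\alf{j}{i}^*|$ I get $\off{j}{i}^t\le (\gamma_i^*-s_{ij}+\eta)_+$. The lemma reduces to producing an $\eta$ and a lower bound on $s_{ij}$ such that $(\gamma_i^*-s_{ij}+\eta)_+\le \gamma_i^*-\sigma_q+\Delta-\delta$.

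The core of the proof is then a case split on where $j$ sits in the KT decomposition. \emph{Case A:} if $j\in V(\cC_0)\cup\cdots\cup V(\cC_{q-1})$, the directed edge $j\to i$ has its first endpoint in this set and so belongs to $\cG_q$; condition $\Bo(q,\delta,t)$ furnishes the tight bound $\eta=\Delta-\delta$, while the $q$-th KT linear program forces $s_{ij}\ge \sigma_q$ for every non-matching edge incident to $V(\cC_q)$. Combining and using $\gamma_i^*\ge\sigma_q$ (since the second-best offer $\gamma_i^*-\sigma_q$ is non-negative) and $\Delta\ge\delta$ to resolve the positive part yields the claim. \emph{Case B:} otherwise $j\in V(\cC_q)$ or $j\in V(\cC_{q'})$ for some $q'>q$. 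In the first subcase both $i,j\in V_{\textup{ext}}(\cC_q)$, so the extra gap condition (\ref{eq:extra_sigma_condition}) gives $s_{ij}\ge \sigma_q+\sigma$; in the second, the $q'$-th LP together with the increasing-slack condition (\ref{eq:sigma_cond_1}) gives $s_{ij}\ge \sigma_{q'}\ge \sigma_q+\sigma$. Here only the global bound $\eta=\Delta$ is available from $\Bo(q,\delta,t)$, but since $\delta\le\sigma$ one still has $(\gamma_i^*-s_{ij}+\Delta)_+\le (\gamma_i^*-\sigma_q-\sigma+\Delta)_+\le \gamma_i^*-\sigma_q+\Delta-\delta$, with the positive part handled exactly as in Case A when its argument turns negative.

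The only mildly delicate point I expect is Case B, where the message on $j\to i$ is controlled only up to $\Delta$ rather than $\Delta-\delta$. The key structural observation is that precisely on the edges where this tight control is unavailable, the gap condition buys an extra $\sigma\ge\delta$ units of slack, and this exactly compensates. Note that no case-by-case analysis of the topology of $\cC_q$ (path, blossom, bicycle, cycle) enters here — the argument is uniform across all four structure types, which is what makes this estimate a natural first step that justifies restricting the subsequent, structure-dependent analysis of Lemma \ref{lemma:KeyLemma} to messages internal to $\cC_q$.
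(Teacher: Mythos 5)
Your proposal is correct and follows essentially the same route as the paper's proof. You unify into one Lipschitz comparison ($\off{j}{i}^t\le(w_{ij}-\alf{j}{i}^t)_+$ with $\alf{j}{i}^*=\gamma_j^*$) what the paper carries out separately for $q=0$ and for the three sub-cases of $q>0$; your Case A is the paper's Case (i), and your Case B merges the paper's Cases (ii) and (iii) and subsumes the paper's standalone $q=0$ argument. The decisive ingredients — the structural slack bound $s_{ij}\ge\sigma_q$ (resp.\ $\ge\sigma_q+\sigma$ via (\ref{eq:extra_sigma_condition}) or (\ref{eq:sigma_cond_1})), the message control $\Delta-\delta$ on $\cG_q$ versus $\Delta$ globally, $\gamma_i^*\ge\sigma_q$, and $\delta\le\min(\sigma,\Delta)$ — are exactly those used in the paper.
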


\begin{lemma}
If $\Bo(q,\delta,0)$ holds, then  $\Bo(q,\delta,t)$ holds for all  $t >0$.
\label{lemma:CondStrucPersistence}
\end{lemma}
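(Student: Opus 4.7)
The plan is a direct induction on $t$: the base case $t=0$ is given, so assume $\Bo(q,\delta,t)$ and deduce $\Bo(q,\delta,t+1)$. The global inequality $U_\cG(\baralf^{t+1})\le \Delta$ follows immediately from Lemma~\ref{lem:infty_non_expansion} applied with $\barbet^0=\baralf^*$: since $\baralf^*$ is a fixed point, $\|\baralf^{t+1}-\baralf^*\|_\infty\le \|\baralf^t-\baralf^*\|_\infty\le\Delta$.

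For the restricted bound $U_{\cG,\cG_q}(\baralf^{t+1})\le \Delta-\delta$, I would fix a directed edge $i\to j\in\cG_q$, so that $i\in S_q := V(\cC_0)\cup\cdots\cup V(\cC_{q-1})$, and take the unique $p\le q-1$ with $i\in V(\cC_p)$. Subtracting the fixed-point identity from the update rule \eqref{eq:update} gives
\[
\alf{i}{j}^{t+1}-\alf{i}{j}^* \;=\; \damp\bigl(M_i^t-M_i^*\bigr)+(1-\damp)\bigl(\alf{i}{j}^t-\alf{i}{j}^*\bigr),
\]
where $M_i^s := \max_{k\in\di\setminus j}\off{k}{i}^s$. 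The second summand has magnitude at most $(1-\damp)(\Delta-\delta)$ by the inductive hypothesis, so it suffices to show $|M_i^t-M_i^*|\le \Delta-\delta$. I would split the neighbors $k$ into ``inside'' ($(i,k)\in E(\cC_p)$) and ``outside'' ($(i,k)\notin E(\cC_p)$). For inside $k$, both $i$ and $k$ lie in $V_{\textup{ext}}(\cC_p)\subseteq V(\cC_0)\cup\cdots\cup V(\cC_p)\subseteq S_q$, so both directed messages $\alf{k}{i}$ and $\alf{i}{k}$ belong to $\cG_q$; the $\ell_\infty$ $1$-Lipschitz property of the map \eqref{eq:off_def} (verified by checking $\|\nabla f\|_1\le 1$ in each linearity region) then yields $|\off{k}{i}^t-\off{k}{i}^*|\le \Delta-\delta$. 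For outside $k$, the hypothesis $\Bo(q,\delta,t)$ implies $\Bo(p,\delta,t)$ (since $\cG_p\subseteq\cG_q$), and Lemma~\ref{lemma:outsideOff_ub} at stage $p$ gives $\off{k}{i}^t\le \gamma_i^*-\sigma_p+\Delta-\delta$; combined with the slack identity \eqref{eq:edge_slack}, which implies $M_i^*\ge \gamma_i^*-\sigma_p$ whether $j$ is the NB partner of $i$ or not, we get $\off{k}{i}^t\le M_i^* + (\Delta-\delta)$ for outside $k$ as well. For the matching lower bound, pick a $k^*$ attaining $M_i^*$, which by the KT construction sits inside, to get $M_i^t\ge \off{k^*}{i}^t\ge M_i^*-(\Delta-\delta)$. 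Combining yields $|M_i^t-M_i^*|\le\Delta-\delta$, and the bound on $\alf{i}{j}^{t+1}-\alf{i}{j}^*$ follows as a convex combination.

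The main obstacle I anticipate is the case-by-case structural bookkeeping over the four topologies (path, blossom, bicycle, cycle), needed to confirm both that $V_{\textup{ext}}(\cC_p)\subseteq V(\cC_0)\cup\cdots\cup V(\cC_p)$ and that a maximizer $k^*$ for $M_i^*$ can always be taken with $(i,k^*)\in E(\cC_p)$, so that its offer is controlled by the $\Delta-\delta$ bound on $\cG_q$. A secondary, minor point is the handling of $i\in V(\cC_0)$ (unmatched), where $\gamma_i^*=0$ and all fixed-point offers to $i$ vanish; there Lemma~\ref{lemma:outsideOff_ub} should be interpreted with the natural convention for $\sigma_0$, and the case collapses to showing that every $\off{k}{i}^t$ is at most $\Delta-\delta$, which follows from the same inside/outside split.
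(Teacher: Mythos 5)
Your proposal is correct and follows essentially the same strategy as the paper: both combine the sup-norm non-expansion (Lemma~\ref{lem:infty_non_expansion}) for the global bound with Lemma~\ref{lemma:outsideOff_ub} and the $1$-Lipschitz property of the offer map (Claim~\ref{claim:offer_non_expansion}) to handle the restricted bound, splitting neighbors into those inside $E(\cC_p)$ (controlled by $\Bo(q,\delta,t)$) and those outside (controlled by Lemma~\ref{lemma:outsideOff_ub}). The only difference is organizational: the paper runs a double induction on $(Q,t)$ and phrases the inside/outside comparison via the explicit quantities $\gamma_i^*$ and $\psi_i^*$, whereas you run a single induction on $t$ and use the fixed-point identity $\alf{i}{j}^*=M_i^*$ uniformly; your flagged ``anticipated obstacles'' (that $V_\textup{ext}(\cC_p)\subseteq S_{p+1}$ and that a maximizer for $M_i^*$ lives in $E(\cC_p)$, with the degenerate $M_i^*=0$ case being trivial) are exactly the structural facts the paper also relies on, just implicitly.
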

\begin{lemma}
Assume that $\Bo(q,\delta, 0)$ holds for some $\delta \leq \min(\sigma, \Delta)$, $0 \leq q \leq k$, and that $U_{G,\cC_q}(\baralf^t) \leq \Delta - \delta$ for all $t \geq 0$.  Then, for any $\epsilon > 0$, $\Bo(q+1, \delta - \epsilon, t)$ holds
for all $t\geq C \log(1+\delta/\epsilon)$.
\label{lemma:outgoing_error_bound}
\end{lemma}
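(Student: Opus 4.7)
The plan is to bound the message errors on the newly added directed edges of $\cG_{q+1}\setminus \cG_q$ by a contracting affine recursion, while Lemma \ref{lemma:CondStrucPersistence} takes care of persistence of the previous bounds. Because each vertex belongs to a unique structure of the KT sequence, $\cG_{q+1}\setminus \cG_q$ consists precisely of the directed edges $i\to j$ with $i\in V(\cC_q)$. On all other edges $\Bo(q,\delta,t)$ already yields $U_{\cG,\cG_q}(\baralf^t)\le\Delta-\delta$ and $U_{\cG}(\baralf^t)\le\Delta$ for every $t\ge 0$, so it suffices to handle these new edges.

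For such an edge $i\to j$ the update rule (\ref{eq:update}) and the fixed-point identity $\alf{i}{j}^{*}=\max_{k\in\di\setminus j}\off{k}{i}^{*}$ give
\begin{align}
|\alf{i}{j}^{t+1}-\alf{i}{j}^{*}|\le \damp\,\Bigl|\max_{k\in\di\setminus j}\off{k}{i}^{t}-\max_{k\in\di\setminus j}\off{k}{i}^{*}\Bigr|+(1-\damp)\,|\alf{i}{j}^{t}-\alf{i}{j}^{*}|\,.
\end{align}
The key step is to show that the first max-difference is at most $\Delta-\delta$. Observe that $\off{k}{i}$ is $1$-Lipschitz in the $\ell_\infty$ norm of its two input messages $(\alf{k}{i},\alf{i}{k})$. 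The KT construction forces the fixed-point argmax $k^{*}$ to lie on an edge of $E(\cC_q)$, with $\off{k^{*}}{i}^{*}\in\{\gamma_i^{*},\gamma_i^{*}-\sigma_q\}$; combined with the hypothesis $U_{G,\cC_q}(\baralf^{t})\le\Delta-\delta$, this yields the lower bound $\max_k\off{k}{i}^{t}\ge \max_k\off{k}{i}^{*}-(\Delta-\delta)$. For the matching upper bound, split over any time-$t$ argmax $k^{\#}$: if $(k^{\#},i)\in E(\cC_q)$, the Lipschitz property gives $\off{k^{\#}}{i}^{t}\le \off{k^{\#}}{i}^{*}+(\Delta-\delta)\le \max_k\off{k}{i}^{*}+(\Delta-\delta)$; if $(k^{\#},i)\notin E(\cC_q)$, Lemma \ref{lemma:outsideOff_ub} gives $\off{k^{\#}}{i}^{t}\le \gamma_i^{*}-\sigma_q+\Delta-\delta$, which is also at most $\max_k\off{k}{i}^{*}+(\Delta-\delta)$ since $\max_k\off{k}{i}^{*}\ge \gamma_i^{*}-\sigma_q$ in either branch.

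Substituting into the recursion yields the scalar iteration $|\alf{i}{j}^{t+1}-\alf{i}{j}^{*}|\le \damp(\Delta-\delta)+(1-\damp)|\alf{i}{j}^{t}-\alf{i}{j}^{*}|$, which has attractor $\Delta-\delta$. Starting from the global bound $|\alf{i}{j}^{0}-\alf{i}{j}^{*}|\le\Delta$, I get $|\alf{i}{j}^{t}-\alf{i}{j}^{*}|\le\Delta-\delta+(1-\damp)^{t}\delta$, which falls below $\Delta-(\delta-\epsilon)$ once $(1-\damp)^{t}\delta\le\epsilon$, i.e.\ for $t\ge\log(\delta/\epsilon)/\log(1/(1-\damp))$. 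Absorbing the constant $1/\log(1/(1-\damp))$ into $C$ and allowing for the trivial regime $\epsilon\ge\delta$, this is guaranteed by $t\ge C\log(1+\delta/\epsilon)$. Combined with the bounds on $\cG_q$ and globally on $\cG$ preserved by $\Bo(q,\delta,t)$, this establishes $\Bo(q+1,\delta-\epsilon,t)$.

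I expect the main obstacle to be the upper bound on $\max_k\off{k}{i}^{t}$ when the argmax at time $t$ comes from outside $\cC_q$: the raw deviation of such an offer from its fixed-point value need not be small, and what rescues the argument is precisely the gap condition (\ref{eq:extra_sigma_condition}) packaged into Lemma \ref{lemma:outsideOff_ub}, which caps these ``intruder'' offers at the level of the second-best fixed-point offer plus the current tolerance $\Delta-\delta$.
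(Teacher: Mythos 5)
Your proof is correct and follows essentially the same route as the paper's: invoke Lemma~\ref{lemma:CondStrucPersistence} for persistence, reduce to bounding $\big|\max_{k\in\di\setminus j}\off{k}{i}^t - \alf{i}{j}^*\big|$ by $\Delta-\delta$ via the Lipschitz property of offers on $E(\cC_q)$ together with Lemma~\ref{lemma:outsideOff_ub} for offers from outside $\cC_q$, and then iterate the $\damp$-damped recursion from~(\ref{eq:update}). You are a bit more explicit than the paper in writing out the scalar contraction and handling the argmax upper and lower bounds symmetrically for all outgoing edges (including $j=\parti$, which the paper dispatches implicitly since $(i,\parti)\in E(\cC_q)$ is already covered by the hypothesis $U_{G,\cC_q}(\baralf^t)\le\Delta-\delta$), but these are presentational differences, not a different argument.
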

Lemma \ref{lemma:KeyLemma} for $q=0$ follows directly from Lemma \ref{lemma:outgoing_error_bound}
with $\epsilon= \delta/(5n)$. For $q>0$, the Lemma \ref{lemma:outgoing_error_bound} reduces our problem to bounding $U_{G,\cC_q}(\baralf^t)$.
The proof is divided into three cases depending
on the structure $\cC_q$: path, blossom, bicycle.  Next we
discuss the main steps in the case of a path.
The blossom and bicycle require a more delicate analysis that uses the same ideas. These cases are deferred to
Appendix sections \ref{app:Blossom}, \ref{app:Bicycle} in the interest of space.

\subsection{Proof of Lemma \ref{lemma:KeyLemma}: Path}
\label{subsubsec:path_descent}

We prove the bound $U_{G,\cC_q}(\baralf^t) \leq \Delta - \delta(1-(10n)^{-1})$ in three steps:

\textbf{Step 1:} Define two different dynamics on the path that are simpler to analyze, we call these `bounding processes'.
Each bounding process has a unique fixed point and converges rapidly to it.

\textbf{Step 2:} Show that the message vector $\baralf^t$ of the natural dynamics on this path is bounded from above and below (sandwiched) by these two bounding processes.

\textbf{Step 3:} Show that the fixed points of the bounding processes have deviations $\Delta-\delta$ from $\baralf^*$.
Hence, the natural dynamics quickly achieves $U_{G,\cC_q}(\baralf^t) \leq \Delta - \delta(1-(10n)^{-1})$.

\paragraph{Bounding processes.}
Define a path as $P = (V_P,E_P)$ with $V_{P} = \{0,1,\dots,\len\}$, and $E_P = \{(i,i+1):\;i =0,\dots,\len-1\}$
with positive weights on the edges, and $\len \geq 1$.
Let $M_P$ be a matching on $P$ with alternate edges included in $M_P$.
We define the \emph{simplified dynamics} on $P$ wrt matching $M_P$, boundary conditions $\{\boundary_\L^t\}_{t\geq 0}, \{\boundary_\R^t\}_{t\geq 0}$ (arbitrary real numbers) by
\begin{eqnarray}
\forall \,(i,j) \in E_P,  \ \ \soff{i}{j}^{t} =
\left\{\begin{array}{ll}
\frac{1}{2}(w_{i,j}-\salf{i}{j}^t+\salf{j}{i}^t)&\mbox{ if $(i,j)\in M_P$,}\\
w_{i,j}-\salf{i}{j}^t& \mbox{ otherwise.}\\
\end{array}\right. \label{eq:simplified_offers}
\end{eqnarray}

\begin{align*}
\salf{0}{1}^{t+1}&=\damp\boundary_\L^t+(1-\damp)\salf{0}{1}^t\, ,& \salf{\len}{\len-1}^t&=\damp\boundary_\R^t+(1-\damp)\salf{\len}{\len-1}^t\\
\salf{i}{i+1}^{t+1}& = \damp \soff{i-1}{i}^t+(1-\damp)\salf{i}{i+1}^{t}\, ,&
\salf{i}{i-1}^{t+1}& = \damp \soff{i+1}{i}^t+(1-\damp)\salf{i}{i-1}^{t}\, ,~~~~ i = 1, 2, \ldots, \len -1
\end{align*}
for all $t\ge 0$.

Our bounding processes will be defined later as instances of this simplified dynamics with specially tuned boundary and initial conditions.  But first we state a crucial result on the convergence of the simplified dynamics. Its proof is given in \ref{app:Path} based on the analysis of a  random walk on a finite length segment.
\begin{lemma}\label{lemma:simplified_convergence}
Consider $P$ with $\len \geq 1$. Let $\boundary_\L^t=\boundary_\L, \boundary_\R^t=\boundary_\R, t\geq 0$ be arbitrary constant boundary conditions.
The simplified dynamics on $P$ wrt the given boundary conditions and matching $M_P$ has a unique fixed point $\sbaralf^*$ satisfying
$\salf{0}{1}^*=\boundary_\L, \salf{\len}{\len-1}^*=\boundary_\R$. Moreover, for any initial condition
$\sbaralf^0$,
\begin{eqnarray}
||\sbaralf^t-\sbaralf^*||_{\infty}\le
C \len^2 \exp\bigg(-\frac{t}{C \len^2}\bigg)||\sbaralf^0-\sbaralf^*||_{\infty} \, ,~~~ \mbox{ $C<\infty$}.
\end{eqnarray}
\end{lemma}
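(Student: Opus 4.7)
My approach is to linearize the dynamics around the fixed point, perform a sign change that converts the homogeneous deviation recursion into a genuine substochastic recursion (a lazy random walk on the directed edges of $P$ with absorption at the two boundary states), and then invoke standard bounds for a lazy walk on a segment of length $\len$.

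First I would establish existence and uniqueness of $\sbaralf^*$. The simplified dynamics (\ref{eq:simplified_offers}) is affine, so $\sbaralf^{t+1}=T\sbaralf^t+c$ for some linear $T$ and vector $c$; the two boundary updates force $\salf{0}{1}^*=\boundary_\L$ and $\salf{\len}{\len-1}^*=\boundary_\R$. With these two values pinned, the remaining $2(\len-1)$ fixed-point equations can be solved by a one-pass sweep inward from either boundary, using the alternation of $M_P$ along $P$; this exhibits $\sbaralf^*$ and simultaneously proves that $I-T$ is invertible.

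Next, define $\tilde z^t := \sbaralf^t-\sbaralf^*$, and for each edge $e_k=(k,k+1)$ of $P$ write $\tilde z_{k,R}^t$, $\tilde z_{k,L}^t$ for the deviations of the rightward and leftward messages on $e_k$. Set
\[
\xi_{k,R}^t := (-1)^k\,\tilde z_{k,R}^t\,,\qquad \xi_{k,L}^t := (-1)^{k+1}\,\tilde z_{k,L}^t\,.
\]
A short case analysis (four cases: update of $R$ or $L$, predecessor edge in $M_P$ or not) then shows that every interior update of $\xi$ becomes a true convex combination. For instance, if $e_{k-1}\in M_P$ one obtains $\xi_{k,R}^{t+1}=(1-\damp)\xi_{k,R}^t+\frac{\damp}{2}(\xi_{k-1,R}^t+\xi_{k-1,L}^t)$, while if $e_{k-1}\notin M_P$ one obtains $\xi_{k,R}^{t+1}=(1-\damp)\xi_{k,R}^t+\damp\,\xi_{k-1,R}^t$, and analogously for $\xi_{k-1,L}^{t+1}$. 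The two boundary equations reduce to $\xi_{0,R}^{t+1}=(1-\damp)\xi_{0,R}^t$ and $\xi_{\len-1,L}^{t+1}=(1-\damp)\xi_{\len-1,L}^t$, i.e.\ mass is lost at rate $\damp$ at each boundary state. Hence $\xi^{t+1}=S\xi^t$ with $S$ substochastic on the $2\len$ directed-edge states of $P$, and $\|\xi^t\|_\infty=\|\tilde z^t\|_\infty$.

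Finally I would appeal to standard random-walk estimates: $S$ is the transition matrix of a lazy walk on (essentially) a segment of length $O(\len)$ with absorbing boundaries, whose spectral gap is $\Theta(1/\len^2)$, yielding the uniform survival-probability bound $\|S^t\|_{\infty\to\infty}\le C\len^2\exp(-t/(C\len^2))$ (the $\len^2$ prefactor trivially covers the regime $t\lesssim \len^2$, and for larger $t$ it follows from the spectral estimate). Composing with the sign change produces the claimed bound on $\|\sbaralf^t-\sbaralf^*\|_\infty$. I expect the main technical hurdle to be the case analysis establishing the convex-combination property after the sign change; it relies crucially on the alternation of matched and unmatched edges along $P$, and each of the four subcases needs to be checked separately. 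Once this consistency is in hand, existence/uniqueness and the random-walk mixing bound are routine.
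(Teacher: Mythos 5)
Your approach is essentially the same as the paper's: reduce the linearized deviation dynamics to a nonnegative substochastic process on the $2\len$ directed edges with absorption at the two boundary states, then invoke standard survival-probability estimates for a lazy walk on a segment. The paper (via Lemma~\ref{lemma:RW}) does this by \emph{dominating} $|\sbarDelta^t|$ entrywise with a positive ``mass'' process $\barmass^t$ (induction on the triangle inequality), whereas you make the stronger observation that the linear operator is \emph{exactly} sign-conjugate to a substochastic matrix; your four-case verification of this is correct and is the cleaner version of the same step.

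One small inaccuracy in your plan: the claim that once $\salf{0}{1}^*$ and $\salf{\len}{\len-1}^*$ are pinned, the remaining $2(\len-1)$ fixed-point equations can be solved ``by a one-pass sweep inward from either boundary.'' The right-going equation $\salf{i}{i+1}^*=\soff{i-1}{i}^*$ depends on $\salf{i}{i-1}^*$ whenever $(i-1,i)\in M_P$, and that left-going quantity depends (backwards) on right-going ones; already for $\len=3$ with $M_P=\{(0,1),(2,3)\}$ the dependency $u_1\to v_0\to v_1\to u_2\to u_1$ is genuinely circular, so no sweep order is triangular. This does not affect the proof: once you have $\xi^{t+1}=S\xi^t$ with $\|S^t\|_{\infty\to\infty}\to 0$, the spectral radius of $T=D^{-1}SD$ is strictly less than one, so $I-T$ is invertible and the fixed point exists and is unique. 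You should derive existence/uniqueness that way rather than via a sweep.

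Minor remark: your boundary equations $\xi_{0,R}^{t+1}=(1-\damp)\xi_{0,R}^t$ and $\xi_{\len-1,L}^{t+1}=(1-\damp)\xi_{\len-1,L}^t$ are the correct ones (the appendix's $\sDel{0}{1}^{t+1}=\damp\,\sDel{0}{1}^t$ has the wrong coefficient; it should be $1-\damp$).
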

As a consequence, the error term will be smaller than any desired inverse polynomial factor in time $O(n^2 \log n)$.

Now we define the bounding processes. Suppose $\cC_q$ is a path $P$. Let $M_P^*$ be the restriction of $M^*$ to $P$. Assume $\Delta>0$, $\delta\le \min(\sigma,\Delta)$ and $s\in \{+1,-1\}$.
The $(s,\Delta,\delta)$-bounding process on $P$ wrt $M_P^*$ is
the sequence of message assignments $\{\barbound^t\}_{t\ge 0}$ on $P$
produced by the simplified dynamics on $P$ wrt the matching $M_P^*$,
the constant boundary conditions $b_\L= s\, (\Delta-\delta), b_\R= s(-1)^{\len}\,
(\Delta -\delta)$ with the initial condition
(for $i\in\{0,\dots,\len-1\}$) $\bound{i}{i+1}^0 = \alf{i}{i+1}^*+s(-1)^i\Delta$, $\bound{i+1}{i}^0 = \alf{i+1}{i}^*-s(-1)^i\Delta$. When convenient,
we include the sign $s$ explicitly in the notation for the
bounding process as $\{\barbound^t(s)\}_{t\ge 0}$.

As per Lemma \ref{lemma:simplified_convergence}, the bounding process has a unique fixed point to which it converges fast. Call this fixed point $\barbound^*$. It is easy to see that for all $i \in \{0,1,\ldots,\len-1\}$: $~\bound{i}{i+1}^* = \alf{i}{i+1}^*+s(-1)^i(\Delta-\delta)$, $\bound{i+1}{i}^* = \alf{i+1}{i}^*-s(-1)^{i}(\Delta -\delta)$.

The following lemma establishes that the simplified dynamics and natural dynamics produce the same results when acting on $\barbound^t$, except when $\soff{i}{j}^t<0$ for some $(i,j)\in P$. This property is critical in proving that $\barbound^t$ sandwiches the natural dynamics (Lemma \ref{lemma:BoundingBounds}). The proof is in Appendix \ref{app:Path}.
\begin{lemma}\label{lemma:CorrectMatching}
Let $\{\barbound^t\}_{t\ge 0}$ be the $(s,\Delta,\delta)$-bounding
process on $P$. Then for any $t\geq 0$, we have
$\bound{i}{j}^t +\bound{j}{i}^t-w_{ij}\le 0$ for $(i,j)\in M_P^*$ and $\bound{i}{j}^t +\bound{j}{i}^t-w_{ij}\ge 0$ for $(i,j)\not\in M_P^*$.
\end{lemma}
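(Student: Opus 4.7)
I would prove Lemma \ref{lemma:CorrectMatching} by induction on $t$, after a reparametrization of the deviations from the fixed point that makes the relevant invariant transparent. Because the simplified dynamics \eqref{eq:simplified_offers} is affine in the message vector, write $\barbound^t = \barbound^* + \barDelta^t$ with $\barDelta^t$ evolving by a homogeneous linear recursion. Set
\[
a_i^t := s(-1)^i\,\Delta_{i\to i+1}^t,\qquad b_i^t := -s(-1)^i\,\Delta_{i+1\to i}^t,\qquad i\in\{0,\dots,\len-1\}.
\]
The specified initial condition gives $a_i^0 = b_i^0 = \delta$, and the invariant I carry inductively is $a_i^t,\,b_i^t \in [0,\delta]$ for all $t\ge 0$ and all $i$.

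The point of the sign-alternation $s(-1)^i$ is that it absorbs the minus signs appearing in the non-matched branch of \eqref{eq:simplified_offers}, so the updates for $a,b$ become pure convex combinations of previous $a,b$ values. A short case analysis yields: on a matched interior edge, $a_i^{t+1} = \damp\,a_{i-1}^t + (1-\damp)\,a_i^t$; on a non-matched interior edge, $a_i^{t+1} = \tfrac{\damp}{2}(a_{i-1}^t + b_{i-1}^t) + (1-\damp)\,a_i^t$; the $b$-updates are analogous with $i+1$ replacing $i-1$; and the boundary updates are pure contractions $a_0^{t+1} = (1-\damp)\,a_0^t$ and $b_{\len-1}^{t+1} = (1-\damp)\,b_{\len-1}^t$. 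Since $[0,\delta]$ is closed under convex combinations, the invariant propagates.

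To conclude, observe that the $(\Delta-\delta)$-terms in the stated fixed point cancel in the edge sum, so $\bound{i}{j}^* + \bound{j}{i}^* = \alf{i}{j}^* + \alf{j}{i}^*$, and hence
\[
\bound{i}{i+1}^t + \bound{i+1}{i}^t - w_{i,i+1} = \bigl(\alf{i}{i+1}^* + \alf{i+1}{i}^* - w_{i,i+1}\bigr) + s(-1)^i(a_i^t - b_i^t).
\]
By \eqref{eq:edge_slack} together with the balance identity on $V(\cC_q)$, the bracket equals $-2\sigma_q$ on matched edges $(i,j)\in E_1(\cC_q)=M_P^*$ and $+\sigma_q$ on non-matched edges $(i,j)\in E_2(\cC_q)$, while $|a_i^t - b_i^t| \le \delta \le \sigma \le \sigma_q$ by the gap condition. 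The whole expression is therefore $\le -2\sigma_q + \delta \le 0$ on matched edges and $\ge \sigma_q - \delta \ge 0$ on non-matched edges, as required. The main delicate point is the bookkeeping at the boundary, where the update is driven by the constant $b_\L$ (resp.\ $b_\R$) rather than by an offer from a neighbor, a priori breaking the symmetry between the two directions of a boundary edge; the resolution is that the boundary update is a pure contraction toward the fixed point that only shrinks $a_0^t$ (resp.\ $b_{\len-1}^t$), while the opposite direction still follows the interior convex-combination rule, so the invariant propagates uniformly.
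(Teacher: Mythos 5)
Your proof is correct and takes essentially the same route as the paper: your invariant $a_i^t,b_i^t\in[0,\delta]$ is exactly the sandwich $\barbound^*(+)\preceq\barbound^t(+)\preceq\barbound^0(+)$ (Eq.~(\ref{eq:bounding_range})), rewritten in the sign-alternating coordinates underlying the ordering (\ref{def:ordering_even})--(\ref{def:ordering_odd}), and the closing step (combining with Eq.~(\ref{eq:KTstruct_fp}) and $\delta\le\sigma\le\sigma_q$) is identical. The only stylistic difference is that you re-derive the sandwich directly from the linear update equations as convex combinations, whereas the paper obtains it by invoking the monotonicity and non-expansion lemmas (\ref{lemma:monotonicity_simplified}) and (\ref{eq:simp_infty_non_expansion}).
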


\paragraph{Sandwiching the message vector.}
Given two message assignments $\baralf$ and $\barbet$ on
$P$, we shall say that $\baralf$ dominates $\barbet$
(and write $\baralf\succeq \barbet$) if
\begin{eqnarray}
\mbox{ for $i$ even }&~~ \alf{i}{i+1} \ge \bet{i}{i+1},&\alf{i+1}{i} \le \bet{i+1}{i}\label{def:ordering_even}\\
\mbox{ for $i$ odd }&~~ \alf{i}{i+1} \le \bet{i}{i+1},&\alf{i+1}{i} \ge \bet{i+1}{i}\label{def:ordering_odd}
\end{eqnarray}
The natural dynamics (and the simplified dynamics) on $P$ preserve this ordering (cf. Appendix \ref{app:Path}).

Note that $\barbound^0(-)\preceq \baralf^0_P\preceq \barbound^0(+)\,$ by our definition of the bounding processes, where
$\baralf_P$ denotes the restriction of $\baralf$ to $P$. The next lemma shows that this sandwich property continues to hold for all $t\geq 0$. The proof is given in Appendix \ref{app:Path}.
\begin{lemma}\label{lemma:BoundingBounds}
Let $G$ be an instance admitting a unique NB solution with
gap $\sigma$, and assume its KT sequence
to be $(\cC_0, \cC_1,\cC_2,\dots,\cC_k)$, with $\cC_q, q \in \{1,\ldots, k\}$ a path. Let
$P=(V_\textup{ext}(\cC_q), E(\cC_q))$.
Further assume $\Bo_\Delta(q,\delta,0)$ holds.
If we denote by $\{\barbound^t(s)\}_{t\ge 0}$
the $(s,\Delta,\delta)$ bounding process on $P$, then for any
$t\ge 0$:
\begin{eqnarray}
\barbound^t(-)\preceq \baralf^t_P\preceq \barbound^t(+)\, .
\label{eq:real_path_bound}
\end{eqnarray}
\end{lemma}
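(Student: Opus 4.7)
The proof proceeds by induction on $t$. The base case $t=0$ is immediate: by construction of $\barbound^0(\pm)$ and the fact that $\Bo_\Delta(q,\delta,0)$ forces $|\alf{i}{i+1}^0-\alf{i}{i+1}^*|\le\Delta$ (and the analogous bound for reverse-direction messages), the sandwich $\barbound^0(-)\preceq\baralf^0_P\preceq\barbound^0(+)$ is exactly the defining inequalities of $\succeq$ in \eqref{def:ordering_even}--\eqref{def:ordering_odd}.

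For the inductive step, I view the natural dynamics restricted to $P$ as a generalized simplified dynamics whose boundary messages $\alf{0}{1}^t$ and $\alf{\len}{\len-1}^t$ are updated from time-varying inputs $\max_{k\in\di\setminus\{1\}}\off{k}{0}^t$ and $\max_{k\in\partial\len\setminus\{\len-1\}}\off{k}{\len}^t$, in place of the constants $b_\L,b_\R$ used by the bounding processes. This generalized dynamics is monotone with respect to $\succeq$ in both the internal message vector and the two boundary inputs, by the same argument that yields the monotonicity asserted just before the statement of the lemma (and effectively a path analogue of Lemma \ref{lemma:bipartite_monotonicity}). The key fact about the bounding processes themselves is Lemma \ref{lemma:CorrectMatching}: its sign conditions guarantee that all $(\cdot)_+$ clippings in the natural update are inactive on $\barbound^t(\pm)$, so that the natural update applied to $\barbound^t(\pm)$ with constant boundary inputs $b_\L(s),b_\R(s)$ coincides with the simplified update and produces precisely $\barbound^{t+1}(\pm)$. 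Combining monotonicity, the inductive hypothesis, and this last observation, the sandwich at time $t+1$ is immediate \emph{as soon as} the external offers at time $t$ are shown to be sandwiched by $b_\L(\pm)$ and $b_\R(\pm)$.

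The remaining step, sandwiching the external offers, is where the gap condition enters. The endpoints $0,\len$ lie in $V_{\textup{ext}}(\cC_q)\setminus V(\cC_q)$, hence in some previously processed structure $V(\cC_p)$ with $p<q$, so every directed edge originating from them lies in $\cG_q$. Lemma \ref{lemma:CondStrucPersistence} extends $\Bo(q,\delta,0)$ to all $t\ge 0$, and in particular forces $|\alf{0}{k}^t-\alf{0}{k}^*|\le\Delta-\delta$ for every neighbor $k$ of $0$ (and analogously at $\len$). Combining this with Lemma \ref{lemma:outsideOff_ub} and the gap inequality \eqref{eq:extra_sigma_condition} then yields $\off{k}{0}^t\in[-(\Delta-\delta),\Delta-\delta]$ for every non-path neighbor $k$ of $0$, which is precisely the interval spanned by $b_\L(\pm)$; the same argument controls the right endpoint against $b_\R(\pm)$. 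This closes the induction.

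I expect the boundary step just described to be the main obstacle. Non-path neighbors of the endpoints may lie in $V(\cC_q)$ or in later structures, so their outgoing messages $\alf{k}{0}^t$ are controlled only by the weaker global bound $U_\cG(\baralf^t)\le\Delta$, not by the sharper $\Delta-\delta$. Upgrading this into the required $\Delta-\delta$ bound on the \emph{incoming} offers relies on the gap providing a buffer of size at least $\sigma_q$ between the fixed-point external offers and the value $b_\L(\pm)$ (or $b_\R(\pm)$), together with careful sign bookkeeping inside the $(\cdot)_+$ clippings on each of the edge types enumerated by \eqref{eq:edge_slack}. Once the boundary sandwich is in hand, the rest of the argument is a routine monotonicity propagation, and Lemma \ref{lemma:BoundingBounds} follows by induction on $t$.
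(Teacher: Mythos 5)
Your high-level strategy (induction on $t$, monotonicity of the update, using Lemma~\ref{lemma:CorrectMatching} to control clipping, and a gap-based control of external offers) is the same skeleton the paper uses, but several load-bearing claims are either wrong or unproved, so the induction does not actually close.

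First, the ``generalized simplified dynamics with time-varying boundary inputs at $0$ and $\len$'' picture is incorrect. The update rule \eqref{eq:update} for an \emph{interior} message $\alf{i}{i+1}^{t+1}$ takes a max over \emph{all} neighbors $k\in\partial i\setminus\{i+1\}$ in $G$, not just the path neighbor $i-1$; interior nodes of $\cC_q$ generically have non-path neighbors, and their offers enter this max. So exogenous input reaches the path at every node, not only at the endpoints. The paper's proof deals with this by showing, at every interior node $i$, that the simplified path offer dominates the whole max: $\soff{i-1}{i}^t(+)\geq\max_{j\in\partial i\setminus\{i+1\}}\off{j}{i}^t|_{\bartightbound^t}$, and this is exactly where Lemma~\ref{lemma:outsideOff_ub} and the gap are used. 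You never establish this, so the step ``the natural update on $\barbound^t$ coincides with the simplified update'' is not justified even at interior nodes.

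Second, the coincidence claim itself does not hold. $\barbound^t(\pm)$ can leave $[0,W]^{2|E_P|}$ (its components are $\alf{}{}^{\ast}\pm O(\Delta)$), so it is not a valid message vector and one cannot feed it into the natural dynamics. The paper's proof introduces the thresholded vector $\tbarbound^t$, embeds it into a valid global vector $\barbet^t=(\baralf^t_{\cG\setminus E_P},\tbarbound^t)$, runs one step of the \emph{natural} dynamics on $\barbet^t$, and then proves only the \emph{inequality} $\soff{i-1}{i}^t(+)\geq\off{i-1}{i}|_{\bartightbound^t}$ (steps (a)--(d) in the paper's chain), not an equality. The Lemma~\ref{lemma:CorrectMatching} sign conditions remove one $(\cdot)_+$ on matching edges, but the other $(\cdot)_+$ can still be active once $\barbound^t$ escapes $[0,W]$, and on non-matching edges the natural offer is $(w_{ij}-\bound{i}{j})_+$ while the simplified offer is $w_{ij}-\bound{i}{j}$, which can be negative. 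Your argument silently assumes equality where only a one-sided bound is true, and only after thresholding.

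Third, the assumption ``$0,\len\in V_{\textup{ext}}(\cC_q)\setminus V(\cC_q)$, hence in an earlier $\cC_p$'' is false in general. An endpoint of the path may be a matched node of $\cC_q$ itself (the paper's proof explicitly splits into the cases $\len\notin V(\cC_q)$ and $\len\in V(\cC_q)$, with $\gamma_\len^*=\sigma_q$, $\alf{\len}{\len-1}^*=0$ in the latter). Your boundary argument relies on the endpoint being in $\cG_q$, so it simply does not apply in the second case; the paper gives a separate argument there, using Lemma~\ref{lemma:outsideOff_ub} with $\gamma_0^*=\sigma_q$ and the sign of $\barbound$ depending on the parity of $\len$. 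Finally, you acknowledge the boundary sandwich as ``the main obstacle'' but leave it at a sketch; since that step is where the gap hypothesis \eqref{eq:extra_sigma_condition}, the case analysis for the endpoints, and the offer bounds from Lemma~\ref{lemma:outsideOff_ub} actually do the work, the proof is not complete without it.
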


\begin{proof}[Proof (Lemma \ref{lemma:KeyLemma}: Path)]
From Lemma \ref{lemma:BoundingBounds} we have
\begin{eqnarray}
||\baralf^t-\baralf^*||_{\infty}
\le
\max\Big\{||\barbound^t(+)-\baralf^*||_{\infty},\;
||\barbound^t(-)-\baralf^*||_{\infty}\Big\}
\ \, .
\end{eqnarray}
We know that $\len\leq n$.
Since $||\barbound^*-\baralf^*||_{\infty} = \Delta-\delta$ by Lemma \ref{lemma:simplified_convergence} applied to the bounding processes we can show the right hand side becomes smaller than $\Delta-\left(1-\frac{1}{10n}\right)\delta$ for all $t\geq c_\P n^3 $ for some $c_\P$ finite.
Finally, we use Lemma \ref{lemma:outgoing_error_bound} with
$\eps = \frac{\delta}{10n}$ to show that $\Bo(q+1, \delta(1-(5n)^{-1}), t)$ holds
for all $t \geq t_{\P,*}$, where $t_{\P,*} \leq C n^6$ as required.
\end{proof}

\vspace{0.05cm}

{\bf\large Acknowledgements.}
We thank Eva Tardos for introducing us to network exchange theory
and Daron Acemoglu for several insightful discussions. A large part of this
work was done while YK, MB and AM were at MSR New England.
This research was partially supported by NSD, grants CCF-0743978 
and CCF-0915145, by a Terman fellowship. YK is supported by a 3Com
Corporation Stanford Graduate Fellowship. 

\newpage

\bibliographystyle{amsalpha}

\newpage

\appendix

\section{Proofs of fixed point properties}
\label{sec:fixed_point_prop_proofs}
In this section we state and prove the fixed point properties that were used for the proof of Theorem \ref{thm:fp_dualopt} in Section \ref{sec:FixedPoint}.

\begin{lemma}
$\bargamma$ satisfies the constraints of the dual problem (\ref{prob:mwm_dual}).
\label{lemma:gamma_sat_dual_const}
\end{lemma}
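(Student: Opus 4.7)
The plan is to verify the two dual constraints $\gamma_i\ge 0$ and $\gamma_i+\gamma_j\ge w_{ij}$ directly from the definitions, exploiting that at a fixed point the message update \eqref{eq:update} collapses to the identity
\[
\alf{i}{j} \;=\; \max_{k\in\di\setminus j}\off{k}{i},
\]
so that $\gamma_i=\max_{k\in\di}\off{k}{i}=\max\bigl(\off{j}{i},\,\alf{i}{j}\bigr)$ for every neighbor $j$ of $i$. Non-negativity is the easy half: inspection of \eqref{eq:off_def} shows $\off{k}{i}\ge 0$ always (it is either $0$, or $w_{ki}-\alf{k}{i}$, or $\tfrac12(w_{ki}-\alf{k}{i}+\alf{i}{k})$, all non-negative), and the maximum of non-negative numbers is non-negative.

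For the edge constraint on $(i,j)\in E$ my plan is to split on the sign of $w_{ij}-\alf{i}{j}-\alf{j}{i}$. In the case $w_{ij}-\alf{i}{j}-\alf{j}{i}\ge 0$, one automatically has $w_{ij}\ge\alf{i}{j}$, so \eqref{eq:off_def} simplifies to $\off{i}{j}=\tfrac12(w_{ij}-\alf{i}{j}+\alf{j}{i})$, and symmetrically $\off{j}{i}=\tfrac12(w_{ij}+\alf{i}{j}-\alf{j}{i})$. Adding these gives $\off{i}{j}+\off{j}{i}=w_{ij}$, and the fixed-point identity above yields $\gamma_i\ge\off{j}{i}$ and $\gamma_j\ge\off{i}{j}$, which chain to $\gamma_i+\gamma_j\ge w_{ij}$.

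In the complementary case $w_{ij}-\alf{i}{j}-\alf{j}{i}<0$, the same fixed-point identity gives the even stronger $\gamma_i\ge\alf{i}{j}$ and $\gamma_j\ge\alf{j}{i}$, so $\gamma_i+\gamma_j\ge\alf{i}{j}+\alf{j}{i}>w_{ij}$ already forces the constraint. Combining the two cases concludes the proof.

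There is no real obstacle here: the lemma is essentially a bookkeeping check, and the only thing worth double-checking is that the piecewise formula \eqref{eq:off_def} is being combined correctly in the first case (i.e.\ that the indicator $w_{ij}\ge\alf{i}{j}$ is implied by $w_{ij}\ge\alf{i}{j}+\alf{j}{i}$ together with $\alf{j}{i}\ge 0$, which follows since all messages at a valid fixed point lie in $[0,W]$).
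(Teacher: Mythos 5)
Your proof is correct and follows essentially the same route as the paper: non-negativity of $\gamma_i$ from non-negativity of the offers, then a case split on the sign of $w_{ij}-\alf{i}{j}-\alf{j}{i}$ using $\gamma_i\ge\max(\off{j}{i},\alf{i}{j})$ at the fixed point. The paper presents the two cases in the opposite order but the argument is identical.
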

\begin{proof}
Since offers $\off{i}{j}$ are by definition non-negative therefore for all $v\in V$ we have $\gamma_v\geq 0$. So we only need to show $\gamma_i+\gamma_j\geq w_{ij}$ for any edge $(ij)\in E$.  It is to see that $\gamma_i\geq \alf{i}{j}$ and $\gamma_j\geq \alf{i}{j}$. Therefore, if $\alf{i}{j}+\alf{i}{j}\geq w_{ij}$ then $\gamma_i+\gamma_j\geq w_{ij}$ holds and we are done.  Otherwise, for $\alf{i}{j}+\alf{i}{j}<w_{ij}$ we have
$\off{i}{j} = \frac{w_{ij}-\alf{i}{j}+\alf{j}{i}}{2}$ and $\off{j}{i} = \frac{w_{ij}-\alf{j}{i}+\alf{i}{j}}{2}$ which gives
$\gamma_i+\gamma_j\geq \off{i}{j}+\off{j}{i}=w_{ij}$.\enp
\end{proof}

Recall that for any $(ij) \in E$, we say that $i$ and $j$ are `partners' if $\gamma_i + \gamma_j = w_{ij}$ and
$P(i)$ denotes the partners of node $i$. In other words $P(i) = \{ j: j \in \partial i, \gamma_i + \gamma_j = w_{ij}\}$.

\begin{lemma}\label{lemma:i-jpartners_equivalence}
The following are equivalent:
\begin{itemize}
\item[(a)] $i$ and $j$ are partners,
\item[(b)] $w_{ij}-\alf{i}{j} - \alf{j}{i}\geq 0$,
\item[(c)] $\gamma_i=\off{j}{i}$ and $\gamma_j=\off{i}{j}$.
\end{itemize}
Moreover, if $\gamma_i=\off{j}{i}$ and $\gamma_j>\off{i}{j}$ then $\gamma_i=0$.
\end{lemma}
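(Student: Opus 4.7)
The plan is to run through the cycle (a) $\Rightarrow$ (b) $\Rightarrow$ (c) $\Rightarrow$ (b) $\Rightarrow$ (a) and then dispatch the moreover clause by a short case split. Two standing observations will do almost all the work. First, since the damping factor $\damp$ is strictly positive, the fixed point version of (\ref{eq:update}) collapses to $\alf{i}{j} = \max_{k\in\di\setminus j}\off{k}{i}$; combined with (\ref{eq:bargamma}) this gives $\gamma_i = \max(\alf{i}{j},\off{j}{i})$, so in particular $\gamma_i\ge\alf{i}{j}$ and $\gamma_i\ge\off{j}{i}$ (and symmetrically for $j$). Second, the offer formula (\ref{eq:off_def}) reduces to the symmetric expression $\off{i}{j} = \tfrac{1}{2}(w_{ij}-\alf{i}{j}+\alf{j}{i})$ (with $\off{i}{j}+\off{j}{i}=w_{ij}$) exactly when (b) holds, and $\off{i}{j}$ collapses to $0$ whenever $w_{ij}-\alf{i}{j}<0$.

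With these in hand, (a) $\Rightarrow$ (b) is the one-line chain $w_{ij}=\gamma_i+\gamma_j\ge \alf{i}{j}+\alf{j}{i}$. For (b) $\Rightarrow$ (c), the simplified formula rearranges (b) into the inequality $\off{j}{i}\ge\alf{i}{j}$, so $\gamma_i=\max(\alf{i}{j},\off{j}{i})=\off{j}{i}$, and symmetrically $\gamma_j=\off{i}{j}$. Moreover, once (b) is available, the identity $\off{i}{j}+\off{j}{i}=w_{ij}$ immediately yields $\gamma_i+\gamma_j=w_{ij}$, closing the loop (b) $\Rightarrow$ (a). It therefore only remains to show (c) $\Rightarrow$ (b), which I would argue by contradiction: if (b) failed then either (i) $w_{ij}-\alf{i}{j}\ge 0$, in which case $\off{i}{j}=w_{ij}-\alf{i}{j}$ combined with $\gamma_j=\off{i}{j}\ge\alf{j}{i}$ reinstates (b); or (ii) $w_{ij}-\alf{i}{j}<0$, forcing $\off{i}{j}=0=\gamma_j$ and hence $\alf{j}{i}=0$, from which a direct computation gives $\off{j}{i}=w_{ij}$ and the contradictory pair $\gamma_i=\off{j}{i}=w_{ij}$ versus $\gamma_i\ge\alf{i}{j}>w_{ij}$.

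For the moreover clause, suppose $\gamma_i=\off{j}{i}$ and $\gamma_j>\off{i}{j}$. The strict inequality rules out (c), hence (b) fails, i.e., $w_{ij}<\alf{i}{j}+\alf{j}{i}$. I then split on the sign of $w_{ij}-\alf{j}{i}$: if $w_{ij}-\alf{j}{i}<0$, the offer formula gives $\off{j}{i}=0$ and therefore $\gamma_i=0$; if $w_{ij}-\alf{j}{i}\ge 0$, then $\off{j}{i}=w_{ij}-\alf{j}{i}$, and $\gamma_i=\off{j}{i}\ge\alf{i}{j}$ rearranges to $w_{ij}\ge\alf{i}{j}+\alf{j}{i}$, contradicting the failure of (b); so this second subcase is vacuous.

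The only mildly delicate point is the implication (c) $\Rightarrow$ (b), where one has to exclude the ``clipped'' regime $w_{ij}-\alf{i}{j}<0$ introduced by the $(\cdot)_+$ in (\ref{eq:off_def}). I expect this case check, together with its twin in the moreover clause, to be the main (though still elementary) obstacle; everything else is direct algebra on the fixed-point identities.
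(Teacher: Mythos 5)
Your proof is correct, and while the overall skeleton (a chain of implications among (a), (b), (c)) is the same as the paper's, you handle the ``hard'' implication differently in a way that is worth noting. The paper proves $(a)\Rightarrow(b)\Rightarrow(c)\Rightarrow(a)$, and closes the cycle at $(c)\Rightarrow(a)$ by appealing to Lemma~\ref{lemma:gamma_sat_dual_const}, i.e.\ dual feasibility of $\bargamma$: when (b) fails it bounds $\gamma_i+\gamma_j$ strictly below $w_{ij}$ and derives a contradiction with $\gamma_i+\gamma_j\ge w_{ij}$. You instead prove $(c)\Rightarrow(b)$ directly by splitting on the sign of $w_{ij}-\alf{i}{j}$ and deriving a contradiction purely from the fixed-point identities $\alf{i}{j}=\max_{k\in\di\setminus j}\off{k}{i}$, $\gamma_i=\max(\alf{i}{j},\off{j}{i})$, and the piecewise form of (\ref{eq:off_def}); no dual-feasibility input is needed. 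This buys a self-contained, more elementary argument that doesn't depend on the order in which the fixed-point properties are established, and it also avoids the slightly delicate strict-inequality chain the paper uses (whose strictness is not obvious in the boundary subcase where both positive parts vanish). Your case analysis for the ``moreover'' clause is also sound and is essentially the same as the paper's, just phrased as an explicit two-case split on the sign of $w_{ij}-\alf{j}{i}$ rather than via the bound $\off{j}{i}\le(w_{ij}-\alf{j}{i})_+$. One minor presentational point: the step you label $(b)\Rightarrow(a)$ actually uses $(b)$ together with the just-established $(c)$ (since $\gamma_i+\gamma_j=\off{j}{i}+\off{i}{j}$ requires $(c)$); that is fine logically because $(c)$ has been derived from $(b)$ at that point, but it is worth saying so explicitly.
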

\begin{proof} We will prove $(a) \Rightarrow (b) \Rightarrow (c) \Rightarrow (a)$.

$(a)\Rightarrow (b)$: Since $\gamma_i\geq \alf{i}{j}$ and $\gamma_j\geq \alf{j}{i}$ always holds then $w_{ij}=\gamma_i + \gamma_j\geq \alf{i}{j} + \alf{j}{i}$.

$(b)\Rightarrow (c)$: If $w_{ij}- \alf{i}{j} - \alf{j}{i}\geq 0$ then $(w_{ij}-\alf{i}{j}+\alf{j}{i})/2\geq \alf{j}{i}$. But $\off{i}{j}=(w_{ij}-\alf{i}{j}+\alf{j}{i})/2$ therefore $\gamma_j=\off{i}{j}$. The argument for $\gamma_i=\off{j}{i}$ is similar.

$(c)\Rightarrow (a)$: If $w_{ij} \geq \alf{i}{j} + \alf{j}{i}$ then $\off{i}{j}=(w_{ij}-\alf{i}{j}+\alf{j}{i})/2$ and $\off{j}{i}=(w_{ij}-\alf{j}{i}+\alf{i}{j})/2$ which gives $\gamma_i+\gamma_j=\off{i}{j}+\off{j}{i}=w_{ij}$ and we are done.  Otherwise, we have
$\gamma_i+\gamma_j=\off{i}{j}+\off{j}{i}\leq (w_{ij}-\alf{i}{j})_+ + (w_{ij}-\alf{j}{i})_+ < \max \bigg[(w_{ij}-\alf{i}{j})_+, (w_{ij}-\alf{j}{i})_+, 2w_{ij}-\alf{i}{j}-\alf{j}{i}\bigg]\leq w_{ij}$ which contradicts Lemma \ref{lemma:gamma_sat_dual_const}
that $\bargamma$ satisfies the constraints of the dual problem (\ref{prob:mwm_dual}).

Finally, we need to show that $\gamma_i=\off{j}{i}$ and $\gamma_j>\off{i}{j}$ give $\gamma_i=0$. First note that by equivalence of $(b)$ and $(c)$ we should have $w_{ij} < \alf{i}{j} + \alf{j}{i}$. On the other hand $\alf{i}{j}\leq \gamma_i  =\off{j}{i}\leq (w_{ij}-\alf{j}{i})_+$.  Now if  $w_{ij}-\alf{j}{i}>0$ we get $\alf{i}{j}\leq w_{ij}-\alf{j}{i}$ which is a contradiction. Therefore $\gamma_i=(w_{ij}-\alf{j}{i})_+=0$.
\enp
\end{proof}

\begin{lemma}
The following are equivalent:
\begin{itemize}
\item[(a)] $P(i) = \{ j \}$,
\item[(b)] $P(j) = \{ i \}$,
\item[(c)] $w_{ij}-\alf{i}{j} - \alf{j}{i} >0$.
\end{itemize}
\label{lemma:strong_dotted}
\end{lemma}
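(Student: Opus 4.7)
The plan is to reduce Lemma \ref{lemma:strong_dotted} to the partner equivalence just proved (Lemma \ref{lemma:i-jpartners_equivalence}), combined with the fixed-point identity $\alf{i}{j} = \max_{k \in \di\setminus j} \off{k}{i}$, which follows from the update rule \eqref{eq:update} because $\damp \in (0,1)$. Since condition (c) is symmetric in $i$ and $j$, it is enough to prove (a)$\Leftrightarrow$(c); then (b)$\Leftrightarrow$(c) is the same statement with the roles of $i$ and $j$ swapped.

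For (c)$\Rightarrow$(a), Lemma \ref{lemma:i-jpartners_equivalence} already gives $j \in P(i)$, so only uniqueness of the partner is at issue. I will argue by contradiction: suppose some $k \in P(i) \setminus \{j\}$ exists. By Lemma \ref{lemma:i-jpartners_equivalence}(c), $\gamma_i = \off{k}{i}$, and the fixed-point identity forces $\off{k}{i} \leq \alf{i}{j}$; combined with the trivial bound $\gamma_i \geq \alf{i}{j}$ this pins $\gamma_i = \alf{i}{j}$. On the other hand, under (c) the offer formula \eqref{eq:off_def} collapses to $\off{j}{i} = \tfrac12(w_{ij}-\alf{j}{i}+\alf{i}{j})$, which equals $\gamma_i$ by Lemma \ref{lemma:i-jpartners_equivalence}. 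Equating the two expressions for $\gamma_i$ yields $\alf{i}{j}+\alf{j}{i}=w_{ij}$, directly contradicting (c).

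For (a)$\Rightarrow$(c), Lemma \ref{lemma:i-jpartners_equivalence} supplies the weak inequality $w_{ij} - \alf{i}{j} - \alf{j}{i} \geq 0$, so only equality must be ruled out. Assuming equality, one has $\off{j}{i} = (w_{ij}-\alf{j}{i})_+ = \alf{i}{j}$, whence $\gamma_i = \alf{i}{j}$. Picking $k^* \in \arg\max_{l\in\di\setminus j}\off{l}{i}$, the fixed-point identity gives $\off{k^*}{i}=\alf{i}{j}$ and, by re-examining the maximum defining $\alf{i}{k^*}$, also $\alf{i}{k^*}=\alf{i}{j}$. Substituting these into the definition of $\off{k^*}{i}$ and doing a short algebraic check typically yields $w_{ik^*}-\alf{i}{k^*}-\alf{k^*}{i} \geq 0$, so that $k^* \in P(i)$ by Lemma \ref{lemma:i-jpartners_equivalence}, contradicting $P(i)=\{j\}$.

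The hard part is the degenerate subcase $\alf{i}{j}=0$ of the last step, where $\gamma_i = 0$ and the algebraic chase admits $\alf{k^*}{i} \geq w_{ik^*}$, so that $k^*$ is no longer automatically a partner and the local contradiction evaporates. To close this case I will invoke the pointedness of the primal LP \eqref{prob:mwm_relaxation}: the symmetric situation at $j$ (where $\alf{j}{i}=w_{ij}$ forces some $l \in \dj\setminus i$ to offer $j$ the full weight $w_{ij}$), combined with Lemma \ref{lemma:gamma_sat_dual_const}, can be used to construct an alternate dual-feasible vector with the same objective value, violating uniqueness of $\underline{x}^*$. Making this reduction watertight without appealing to properties 3--6 (which are proved only later in the appendix) is the most delicate part of the argument.
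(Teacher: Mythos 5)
Your overall architecture is sound and very close to the paper's: both arguments run through Lemma \ref{lemma:i-jpartners_equivalence} plus the fixed-point identity $\alf{i}{j}=\max_{k\in\di\setminus j}\off{k}{i}$, and both exploit the symmetry of (c). The paper proves $(a)\Rightarrow(c)\Rightarrow(b)$ and appeals to symmetry for the reverse chain; you prove $(a)\Leftrightarrow(c)$ and use symmetry for $(b)\Leftrightarrow(c)$. Your $(c)\Rightarrow(a)$ argument (suppose a second partner $k$, deduce $\gamma_i=\alf{i}{j}$, compare with the explicit formula $\off{j}{i}=\tfrac12(w_{ij}-\alf{j}{i}+\alf{i}{j})=\gamma_i$, get $w_{ij}-\alf{i}{j}-\alf{j}{i}=0$, contradiction) is correct and clean; it differs cosmetically from the paper's $(c)\Rightarrow(b)$ step ($\off{i}{j}>\alf{j}{i}$), but both are two-line algebraic deductions from the offer formula under (c).

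The genuinely interesting point is your identified degenerate subcase $\alf{i}{j}=0$ (equivalently $\gamma_i=0$). You should know that the paper's proof does not treat it either. The paper simply opens the $(a)\Rightarrow(c)$ argument with the flat assertion ``$(a)$ means $\alf{i}{j}<\off{j}{i}$'', from which $\off{j}{i}>0$ follows; but as you correctly worked out, when $\gamma_i=0$ one has $\off{j}{i}=\alf{i}{j}=0$ even under $P(i)=\{j\}$, so that assertion is not a direct consequence of $(a)$ and needs an argument. Your diagnosis that closing it requires pointedness is plausible (the degenerate configuration forces $\alf{j}{i}=w_{ij}$ and hence a second partner $l\in P(j)$, spawning a chain of weak-dotted edges ending at a zero-$\gamma$ node, exactly the situation Lemma \ref{lemma:no_extra_dotted} excludes). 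However, note the circularity hazard you hinted at: Lemma \ref{lemma:no_extra_dotted} as written \emph{invokes} Lemma \ref{lemma:strong_dotted} to extend the alternating path, so you cannot simply cite it. To make the degenerate case watertight you would need a self-contained variant of the alternating-path/alternate-optimum construction that does not rely on Lemma \ref{lemma:strong_dotted}, or else restructure the chain of lemmas. Your write-up does not supply this, and neither does the paper; so the honest assessment is that you have reproduced the paper's argument in the non-degenerate case, proved one direction more cleanly, and flagged (but not closed) a gap that the paper itself leaves implicit.
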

\begin{proof}
$(a)\Rightarrow (c) \Rightarrow (b)$: $(a)$ means $\alf{i}{j}<\off{j}{i}$. It also follows that $\off{j}{i}>0$ or
$(w_{ij}-\alf{j}{i})_+=w_{ij}-\alf{j}{i}$. Therefore, $\off{j}{i} \leq (w_{ij}-\alf{j}{i})_+ = w_{ij}-\alf{j}{i}$ which gives $w_{ij}-\alf{i}{j} - \alf{j}{i} >0$ or $(c)$.  From this we can explicitly write $\off{i}{j}=(w_{ij}-\alf{i}{j}+\alf{j}{i})/2$ which is strictly bigger than $\alf{j}{i}$. Hence we obtain $(b)$.

By symmetry $(b)\Rightarrow (c) \Rightarrow (c)$.  This finishes the proof.
\enp
\end{proof}

Recall that $(ij)$ is a weak-dotted edge if $w_{ij}-\alf{i}{j} - \alf{j}{i}=0$, a strong-dotted edge if
$w_{ij}-\alf{i}{j} - \alf{j}{i} > 0$, and a non-dotted edge otherwise.
Basically, for any dotted edge $(ij)$ we have $j\in P(i)$ and $i\in P(j)$.

\begin{corollary} One corollary of Lemmas \ref{lemma:i-jpartners_equivalence}-\ref{lemma:strong_dotted} is that strong-dotted edges are only adjacent to non-dotted edges.  Also each weak-dotted edge is adjacent to at least one weak-dotted edge at each end.
\end{corollary}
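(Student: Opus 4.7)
The plan is to unpack each of the three edge-types using the equivalences established in Lemmas \ref{lemma:i-jpartners_equivalence} and \ref{lemma:strong_dotted}, and then chase the partner sets. Recall that $(i,j)$ strong-dotted means $w_{ij}-\alf{i}{j}-\alf{j}{i}>0$, which by Lemma \ref{lemma:strong_dotted} is equivalent to $P(i)=\{j\}$ (and equivalently $P(j)=\{i\}$). Weak-dotted means $w_{ij}-\alf{i}{j}-\alf{j}{i}=0$, so by Lemma \ref{lemma:i-jpartners_equivalence} $i,j$ are partners but, since this value is not strictly positive, Lemma \ref{lemma:strong_dotted} forbids $P(i)=\{j\}$ or $P(j)=\{i\}$. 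Non-dotted means $i,j$ are not partners at all.

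For the first claim, suppose $(i,j)$ is strong-dotted and let $(i,k)$ be any other edge incident to $i$. Since $P(i)=\{j\}$, we have $k\notin P(i)$, i.e. $i$ and $k$ are not partners. By the equivalence $(a)\Leftrightarrow(b)$ in Lemma \ref{lemma:i-jpartners_equivalence}, this is precisely $w_{ik}-\alf{i}{k}-\alf{k}{i}<0$, so $(i,k)$ is non-dotted. The same argument at the endpoint $j$ finishes this claim.

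For the second claim, let $(i,j)$ be weak-dotted. Then $j\in P(i)$ but $P(i)\neq\{j\}$, forcing $|P(i)|\ge 2$, so there exists $k\in P(i)\setminus\{j\}$. Since $i,k$ are partners, the edge $(i,k)$ is either weak- or strong-dotted. If it were strong-dotted, the first claim applied to $(i,k)$ would force every other edge incident to $i$ — including $(i,j)$ — to be non-dotted, contradicting that $(i,j)$ is weak-dotted. Hence $(i,k)$ is weak-dotted, giving a weak-dotted neighbor of $(i,j)$ at endpoint $i$. Symmetrically, applying the same argument at $j$ (with $P(j)\neq\{i\}$) produces a weak-dotted edge incident to $j$ distinct from $(i,j)$.

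No serious obstacle is anticipated: the whole statement is a direct bookkeeping consequence of the two lemmas, and the only subtle point is the correct use of the equivalence between ``$P(i)=\{j\}$'' and the strict positivity $w_{ij}-\alf{i}{j}-\alf{j}{i}>0$, which rules out $P(i)=\{j\}$ in the weak-dotted case and hence guarantees the existence of a second partner of $i$.
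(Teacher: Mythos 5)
Your proof is correct and is precisely the direct unpacking of Lemmas \ref{lemma:i-jpartners_equivalence} and \ref{lemma:strong_dotted} that the paper invokes (the corollary is stated there without a separate proof). The one minor stylistic note is that in the second claim you could reach the contradiction even more directly — if $(i,k)$ were strong-dotted then $P(i)=\{k\}$, contradicting $j\in P(i)$ — but routing it through the first claim as you do is equally valid.
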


\begin{lemma}
If $i$ has no adjacent dotted edges, then $\gamma_i=0$
\label{lemma:no_dotted_means_gamma0}
\end{lemma}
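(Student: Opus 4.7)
The plan is a short argument by contradiction that hinges on two simple observations: (i) the fixed-point form of the update rule \eqref{eq:update}, and (ii) the explicit form of the offer \eqref{eq:off_def} on a non-dotted edge.

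First I would spell out what ``fixed point'' buys us. Solving \eqref{eq:update} at steady state (using $\damp\in(0,1)$) gives the identity
\begin{equation*}
\alf{i}{j} \;=\; \max_{k\in\di\setminus j}\off{k}{i}\qquad\text{for every $(i,j)\in E$.}
\end{equation*}
In particular, for any fixed neighbour $j\in\di$, $\alf{i}{j}$ is bounded above by $\gamma_i=\max_{k\in\di}\off{k}{i}$.

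Next I would exploit the hypothesis that every edge incident to $i$ is non-dotted: for all $k\in\di$ we have $w_{ik}-\alf{k}{i}-\alf{i}{k}<0$, so the second term in \eqref{eq:off_def} vanishes and
\begin{equation*}
\off{k}{i}\;=\;(w_{ik}-\alf{k}{i})_+\qquad\text{for every $k\in\di$.}
\end{equation*}
Pick $j^{*}\in\di$ attaining $\gamma_i=\off{j^{*}}{i}$, and suppose for contradiction that $\gamma_i>0$. Then $\off{j^{*}}{i}=w_{ij^{*}}-\alf{j^{*}}{i}>0$. Combining this with the non-dotted inequality $w_{ij^{*}}-\alf{j^{*}}{i}<\alf{i}{j^{*}}$ gives
\begin{equation*}
\gamma_i \;=\; \off{j^{*}}{i} \;<\; \alf{i}{j^{*}}.
\end{equation*}
But the fixed-point identity from the first paragraph yields $\alf{i}{j^{*}}=\max_{k\in\di\setminus j^{*}}\off{k}{i}\le\max_{k\in\di}\off{k}{i}=\gamma_i$, contradicting the strict inequality above. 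Hence $\gamma_i=0$.

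I do not anticipate any real obstacle here. The only thing one must be careful about is correctly reading off the fixed-point version of \eqref{eq:update} (which requires $\damp>0$, built into the model), and remembering that the second, ``surplus-sharing'' term of \eqref{eq:off_def} is automatically switched off on a non-dotted edge. Everything else is a one-line contradiction comparing $\gamma_i$ with $\alf{i}{j^{*}}$.
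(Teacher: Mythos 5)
Your proof is correct and follows essentially the same route as the paper's: both exploit the fixed-point identity $\alf{i}{j}=\max_{k\in\di\setminus j}\off{k}{i}$, the simplification $\off{k}{i}=(w_{ik}-\alf{k}{i})_+$ on a non-dotted edge, and derive a contradiction between the non-dotted inequality and the chain $\alf{i}{j^*}\le\gamma_i=\off{j^*}{i}$. The only cosmetic difference is that you frame the argument as a global proof by contradiction on $\gamma_i>0$, whereas the paper directly rules out the case $w_{ij}-\alf{j}{i}>0$.
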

\begin{proof}
Assume that the largest offer to $i$ comes from $j$. Therefore, $\alf{i}{j}\leq \off{j}{i}\leq (w_{ij}-\alf{j}{i})_+$.
Now if $w_{ij}-\alf{j}{i}>0$ then $\alf{i}{j}\leq w_{ij}-\alf{j}{i}$ or $(ij)$ is dotted edge which is impossible. Thus, $w_{ij}-\alf{j}{i}=0$ and $\gamma_i=0$.
\enp
\end{proof}

\begin{lemma}
The following are equivalent:
\begin{itemize}
\item[(a)] $\alf{i}{j}=\gamma_i$,
\item[(b)] $w_{ij}-\alf{i}{j} - \alf{j}{i} \le 0$,
\item[(c)] $\off{i}{j}=(w_{ij}-\alf{i}{j})_+$.
\end{itemize}
\label{lemma:when_alf_equals_gamma}
\end{lemma}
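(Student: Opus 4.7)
The plan is to show $(b) \Leftrightarrow (c)$ directly from the definition of the offers, and then handle $(a) \Leftrightarrow (b)$ using the fixed-point identity $\alf{i}{j}=\max_{k\in\di\setminus j}\off{k}{i}$ together with computations already used in Lemmas \ref{lemma:i-jpartners_equivalence} and \ref{lemma:strong_dotted}.

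For $(b)\Leftrightarrow(c)$ I would just unpack \eqref{eq:off_def}: $\off{i}{j} = (w_{ij}-\alf{i}{j})_+ - \tfrac{1}{2}(w_{ij}-\alf{i}{j}-\alf{j}{i})_+$. If $w_{ij}-\alf{i}{j}-\alf{j}{i} \le 0$ then the subtracted term is zero, giving (c). Conversely, if $(b)$ fails, then in particular $w_{ij}-\alf{i}{j} > \alf{j}{i} \ge 0$, so $(w_{ij}-\alf{i}{j})_+ = w_{ij}-\alf{i}{j}$ and the subtracted term is strictly positive, so $\off{i}{j} < (w_{ij}-\alf{i}{j})_+$, violating (c).

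For $(a)\Rightarrow(b)$ I would argue by contrapositive. If $w_{ij}-\alf{i}{j}-\alf{j}{i}>0$, then (as in the proof of Lemma \ref{lemma:strong_dotted}) we have $w_{ij}-\alf{j}{i}>\alf{i}{j}\ge 0$, so both $(w_{ij}-\alf{j}{i})_+$ and $(w_{ij}-\alf{j}{i}-\alf{i}{j})_+$ simplify and
\begin{equation*}
\off{j}{i} = \tfrac{1}{2}\bigl(w_{ij}-\alf{j}{i}+\alf{i}{j}\bigr), \qquad \off{j}{i}-\alf{i}{j} = \tfrac{1}{2}(w_{ij}-\alf{i}{j}-\alf{j}{i}) > 0.
\end{equation*}
At a fixed point $\gamma_i = \max(\alf{i}{j},\off{j}{i})$ so $\gamma_i \ge \off{j}{i} > \alf{i}{j}$, contradicting $(a)$.

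For $(b)\Rightarrow(a)$ I would show that $\off{j}{i}\le \alf{i}{j}$ under $(b)$, which immediately yields $\gamma_i = \max(\max_{k\in\di\setminus j}\off{k}{i},\,\off{j}{i}) = \max(\alf{i}{j},\off{j}{i}) = \alf{i}{j}$. Splitting on the sign of $w_{ij}-\alf{j}{i}$: if $w_{ij}-\alf{j}{i}\le 0$ both terms in $\off{j}{i}$ vanish so $\off{j}{i}=0\le\alf{i}{j}$; if $w_{ij}-\alf{j}{i}>0$ the second positive part is zero by $(b)$ and $\off{j}{i}=w_{ij}-\alf{j}{i}\le\alf{i}{j}$, again by $(b)$.

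I do not expect any real obstacle here: the statement is essentially a bookkeeping lemma asserting that ``$\alf{i}{j}$ already captures the best alternative'' is the same as saying edge $(i,j)$ is non-dotted, which is the same as the formula for $\off{i}{j}$ collapsing to the $(w_{ij}-\alf{i}{j})_+$ branch. The only mild care is distinguishing the two sign cases for $w_{ij}-\alf{j}{i}$ in the last implication.
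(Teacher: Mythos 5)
Your proof is correct and follows essentially the same route as the paper: the $(b)\Leftrightarrow(c)$ step is the same unpacking of \eqref{eq:off_def}, your contrapositive for $(a)\Rightarrow(b)$ reproduces the key computation behind Lemma~\ref{lemma:strong_dotted} (which the paper cites directly), and your $(b)\Rightarrow(a)$ is the same case split on the sign of $w_{ij}-\alf{j}{i}$ used in the paper's $(c)\Rightarrow(a)$ step. The only difference is cosmetic: you organize the argument as $(b)\Leftrightarrow(c)$ plus $(a)\Leftrightarrow(b)$ rather than the cycle $(a)\Rightarrow(b)\Rightarrow(c)\Rightarrow(a)$.
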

\begin{proof}
$(a)\Rightarrow (b)$: Follows from Lemma \ref{lemma:strong_dotted}, since $\alf{i}{j}=\gamma_i$ gives $|P(i)|>1$.

$(b)\Rightarrow (c)$: Follows from the definition of $\off{i}{j}$.

$(c)\Rightarrow (a)$: From $\off{i}{j}=(w_{ij}-\alf{i}{j})_+$ we have $w_{ij}-\alf{i}{j} - \alf{j}{i} \le 0$. Therefore, $\off{j}{i}=(w_{ij}-\alf{j}{i})_+\leq \max\big[w_{ij}-\alf{j}{i},0\big]\leq \alf{i}{j}$.
\enp
\end{proof}
Note that (c) is symmetric in $i$ and $j$, so (a) and (b) can be transformed by interchanging $i$ and $j$.
\begin{corollary}
$\alf{i}{j}=\gamma_i$ iff $\alf{j}{i}=\gamma_j$
\label{corr:alf_gamma_eq}
\end{corollary}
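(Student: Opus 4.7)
The plan is to derive the corollary as an immediate consequence of Lemma \ref{lemma:when_alf_equals_gamma}, by exploiting the fact that condition (b) of that lemma, namely $w_{ij}-\alf{i}{j}-\alf{j}{i}\le 0$, is manifestly symmetric under the interchange of $i$ and $j$ (using $w_{ij}=w_{ji}$). Condition (a) is not symmetric---it references $\alf{i}{j}$ and $\gamma_i$---and the corollary simply asserts that swapping the endpoints does not change whether (a) holds.

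Concretely, I would apply Lemma \ref{lemma:when_alf_equals_gamma} once to the ordered pair $(i,j)$ to get $\alf{i}{j}=\gamma_i$ iff $w_{ij}-\alf{i}{j}-\alf{j}{i}\le 0$, and then again with the roles of $i$ and $j$ interchanged to get $\alf{j}{i}=\gamma_j$ iff $w_{ji}-\alf{j}{i}-\alf{i}{j}\le 0$. Since the two right-hand inequalities are literally the same, composing the two equivalences yields
\[
\alf{i}{j}=\gamma_i \;\Longleftrightarrow\; w_{ij}-\alf{i}{j}-\alf{j}{i}\le 0 \;\Longleftrightarrow\; \alf{j}{i}=\gamma_j,
\]
which is exactly the claim.

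There is no real obstacle: Lemma \ref{lemma:when_alf_equals_gamma} has already done all of the substantive work, and the corollary is essentially a one-line remark on the symmetry of the common intermediate condition (b). As a sanity check, one could equivalently route through condition (c): by the offer definition \eqref{eq:off_def}, both $\off{i}{j}=(w_{ij}-\alf{i}{j})_+$ and $\off{j}{i}=(w_{ij}-\alf{j}{i})_+$ reduce to $(w_{ij}-\alf{i}{j}-\alf{j}{i})_+=0$, so each holds iff the other does; but the argument via (b) is the most direct.
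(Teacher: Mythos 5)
Your proof is correct and matches the paper's approach: the corollary is obtained by applying Lemma \ref{lemma:when_alf_equals_gamma} to the ordered pairs $(i,j)$ and $(j,i)$ and observing that a common intermediate condition is symmetric under the swap. You route through condition (b), which is manifestly symmetric; the paper's remark cites (c), but as you observe in your sanity check these amount to the same thing, and (b) is arguably the cleaner choice.
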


\begin{lemma}
$\off{i}{j}=(w_{ij}-\gamma_i)_+$ holds $\forall \; (ij) \in E$
\label{lemma:off_from_gamma}
\end{lemma}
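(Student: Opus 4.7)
The plan is to split into two cases according to whether $\alf{i}{j}$ equals $\gamma_i$ or is strictly smaller, using the fact that at a fixed point $\alf{i}{j} = \max_{k \in \di\setminus j}\off{k}{i}$ while $\gamma_i = \max_{k\in \di}\off{k}{i}$, so that $\alf{i}{j} \le \gamma_i$ always.

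First I would handle the easy case $\alf{i}{j} = \gamma_i$. Here Lemma~\ref{lemma:when_alf_equals_gamma} applies directly: it gives $\off{i}{j} = (w_{ij}-\alf{i}{j})_+$, and substituting $\alf{i}{j} = \gamma_i$ yields exactly $\off{i}{j} = (w_{ij}-\gamma_i)_+$.

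Next I would treat the case $\alf{i}{j} < \gamma_i$. Since $\gamma_i = \max_{k\in \di}\off{k}{i}$ and the maximum over $\di\setminus j$ is strictly smaller, the maximum must be attained uniquely at $k=j$, so $\gamma_i = \off{j}{i} > \alf{i}{j}$. By the equivalence in Lemma~\ref{lemma:strong_dotted} (specifically the reasoning that $\alf{i}{j} < \off{j}{i}$ implies $w_{ij}-\alf{i}{j}-\alf{j}{i} > 0$), the edge $(i,j)$ is then strongly dotted, so $P(i) = \{j\}$ and $P(j) = \{i\}$. In particular, Lemma~\ref{lemma:i-jpartners_equivalence} gives that $i$ and $j$ are partners, which means both $\gamma_i+\gamma_j = w_{ij}$ and $\gamma_j = \off{i}{j}$. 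Combining these yields $\off{i}{j} = \gamma_j = w_{ij}-\gamma_i$, and since $\gamma_j \ge 0$ this common value equals $(w_{ij}-\gamma_i)_+$, as required.

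I do not anticipate a genuine obstacle here: the statement is essentially the conjunction of the two branches of Lemma~\ref{lemma:when_alf_equals_gamma} and Lemma~\ref{lemma:i-jpartners_equivalence}, and the only verification needed is that the inequality $\alf{i}{j}\le \gamma_i$ holds at the fixed point (which is immediate from the definitions of $\alf{i}{j}$ and $\gamma_i$) and that the dichotomy $\alf{i}{j}=\gamma_i$ versus $\alf{i}{j}<\gamma_i$ corresponds exactly to the non-strongly-dotted versus strongly-dotted dichotomy used in the earlier fixed-point lemmas.
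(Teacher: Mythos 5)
Your proof is correct and follows essentially the same approach as the paper: both split on the dichotomy $w_{ij}-\alf{i}{j}-\alf{j}{i}\le 0$ versus $>0$ (you phrase it as $\alf{i}{j}=\gamma_i$ versus $\alf{i}{j}<\gamma_i$, which is the same split by Lemma~\ref{lemma:when_alf_equals_gamma}), and both dispatch the first case via Lemma~\ref{lemma:when_alf_equals_gamma}. In the strongly-dotted case you invoke the partnership identity $\gamma_i+\gamma_j=w_{ij}$ together with $\off{i}{j}=\gamma_j$ from Lemma~\ref{lemma:i-jpartners_equivalence}, whereas the paper plugs the explicit fixed-point formulas directly; this is a cosmetic difference, not a different route.
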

\begin{proof}
If $w_{ij}-\alf{i}{j} - \alf{j}{i} \le 0$ then the result follows from Lemma \ref{lemma:when_alf_equals_gamma}.  Otherwise, $(ij)$ is strongly dotted and $\gamma_i=\off{j}{i}=(w_{ij}-\alf{j}{i} + \alf{i}{j})/2$,  $\gamma_j=\off{i}{j}=(w_{ij}-\alf{i}{j} + \alf{j}{i})/2$. From here we can explicitly calculate $w_{ij}-\gamma_i=(w_{ij}-\alf{i}{j} + \alf{j}{i})/2=\off{i}{j}$.
\enp
\end{proof}

\begin{lemma}
The unmatched balance property, equation \eqref{eq:balance}, holds at every edge $(ij) \in E$, and both sides of the equation
are non-negative.
\label{lemma:balance_at_FP}
\end{lemma}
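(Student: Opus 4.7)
The plan is to reduce the balance equation to the much simpler identity $\gamma_i-\alf{i}{j}=\gamma_j-\alf{j}{i}$, and then verify it by a brief case analysis on whether the edge $(i,j)$ is dotted.

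First I would exploit the fact that we are at a fixed point of the update (\ref{eq:update}): setting $\alf{i}{j}^{t+1}=\alf{i}{j}^{t}=\alf{i}{j}$ forces $\alf{i}{j}=\max_{k\in\di\setminus j}\off{k}{i}$ (since $\damp\in(0,1)$). Combined with Lemma \ref{lemma:off_from_gamma}, which gives $\off{k}{i}=(w_{ik}-\gamma_k)_+$, this yields
\begin{equation*}
\max_{k\in\di\setminus j}(w_{ik}-\gamma_k)_+ \;=\; \alf{i}{j}\, ,
\end{equation*}
and symmetrically for the other side of (\ref{eq:balance}). So the balance equation at edge $(i,j)$ is equivalent to $\gamma_i-\alf{i}{j}=\gamma_j-\alf{j}{i}$, and the non-negativity claim becomes $\gamma_i\ge \alf{i}{j}$, which is immediate since $\gamma_i=\max_{k\in\di}\off{k}{i}\ge \alf{i}{j}$ (the latter being a max over a smaller set, and itself bounded below by $\alf{i}{j}$ via Lemma \ref{lemma:when_alf_equals_gamma}, or directly from $\gamma_i\ge\off{k}{i}$ combined with $\alf{i}{j}=\max_{k\ne j}\off{k}{i}\le \gamma_i$).

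Next I would split on the status of the edge $(i,j)$. If $(i,j)$ is dotted, i.e. $w_{ij}-\alf{i}{j}-\alf{j}{i}\ge 0$, then by Lemma \ref{lemma:i-jpartners_equivalence} we have $\gamma_i=\off{j}{i}$ and $\gamma_j=\off{i}{j}$. Writing out the offers explicitly from (\ref{eq:off_def}) in this regime gives $\off{j}{i}=\tfrac12(w_{ij}+\alf{i}{j}-\alf{j}{i})$ and $\off{i}{j}=\tfrac12(w_{ij}+\alf{j}{i}-\alf{i}{j})$, so
\begin{equation*}
\gamma_i-\alf{i}{j}\;=\;\tfrac12(w_{ij}-\alf{i}{j}-\alf{j}{i})\;=\;\gamma_j-\alf{j}{i}\, ,
\end{equation*}
and both sides are non-negative by the dotted-edge condition. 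If instead $(i,j)$ is non-dotted, i.e. $w_{ij}-\alf{i}{j}-\alf{j}{i}<0$, then Lemma \ref{lemma:when_alf_equals_gamma} (together with Corollary \ref{corr:alf_gamma_eq}) gives $\alf{i}{j}=\gamma_i$ and $\alf{j}{i}=\gamma_j$, so both sides of the balance equation equal $0$, which is trivially non-negative.

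The two cases together exhaust all possibilities and cover every edge of $E$, so the lemma follows. I do not expect any real obstacle here: the work is essentially bookkeeping, the only non-trivial input being Lemma \ref{lemma:off_from_gamma}, which lets one rewrite the ``best alternative'' $\max_{k\in\di\setminus j}(w_{ik}-\gamma_k)_+$ as the simpler quantity $\alf{i}{j}$ and thereby collapses (\ref{eq:balance}) into a symmetric identity between $\alf{i}{j},\alf{j}{i},\gamma_i,\gamma_j$ whose two sides can be computed directly from the fixed-point formula for the offers.
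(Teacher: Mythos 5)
Your proposal is correct and takes essentially the same route as the paper: both reduce \eqref{eq:balance} to $\gamma_i-\alf{i}{j}=\gamma_j-\alf{j}{i}$ via Lemma \ref{lemma:off_from_gamma} and then split on whether $w_{ij}-\alf{i}{j}-\alf{j}{i}$ is non-negative or negative, invoking the same partner/fixed-point identities in each branch. You are slightly more explicit than the paper about how the fixed-point update rule makes $\alf{i}{j}=\max_{k\in\di\setminus j}\off{k}{i}$, and about where the non-negativity claim comes from, but the underlying argument is identical.
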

\begin{proof}
In light of lemma \ref{lemma:off_from_gamma}, \eqref{eq:balance} can be rewritten at a fixed point as
\begin{align}
\gamma_i - \alf{i}{j} = \gamma_j - \alf{j}{i}
\label{eq:FP_balance}
\end{align}
which is easy to verify. The case $w_{ij}-\alf{i}{j} - \alf{j}{i} \le 0$ leads to both
sides of Eq.~(\ref{eq:FP_balance}) being $0$ by Corollary \ref{corr:alf_gamma_eq}. The
other case $w_{ij}-\alf{i}{j} - \alf{j}{i} >0$ leads to
\begin{align}
\off{i}{j}-\alf{j}{i}=\off{j}{i}-\alf{i}{j}=\frac{w_{ij}-\alf{i}{j} - \alf{j}{i} }{2}
\end{align}
Clearly, we have $\gamma_i=\off{j}{i}$ and $\gamma_j=\off{i}{j}$. So Eq.~(\ref{eq:FP_balance}) holds.
\enp
\end{proof}
Next lemmas show that dotted edges are in correspondence with the solid edges that were defined in Section \ref{sec:FixedPoint}.

\begin{lemma}
A non-solid edge cannot be a dotted edge, weak or strong.
\label{lemma:no_extra_dotted}
\end{lemma}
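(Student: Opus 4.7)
The approach is proof by contradiction, exploiting the pointedness (unique LP optimum with positive gap) of the primal \eqref{prob:mwm_relaxation}. Suppose $(i,j)$ is non-solid ($x^*_{ij}=0$) yet dotted. By Lemma \ref{lemma:i-jpartners_equivalence}, dottedness gives $\gamma_i+\gamma_j=w_{ij}$; by Lemma \ref{lemma:off_from_gamma} and $\bargamma$ being dual-feasible (Lemma \ref{lemma:gamma_sat_dual_const}), $\bargamma$ is in fact dual-optimal with $\sum_v\gamma_v=\sum_e w_e\,x^*_e$ by strong duality. My aim is to produce a primal-feasible half-integral $\tilde{x}\neq x^*$ with $\sum_e w_e\tilde{x}_e=\sum_e w_e x^*_e$, contradicting uniqueness.

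For the \emph{strongly-dotted} case I would proceed directly, without explicitly constructing $\tilde x$. The strict inequality $w_{ij}-\alf{i}{j}-\alf{j}{i}>0$ forces $\off{j}{i}=\tfrac{1}{2}(w_{ij}-\alf{j}{i}+\alf{i}{j})>\alf{i}{j}\ge 0$, so $\gamma_i\ge\off{j}{i}>0$. Dual complementary slackness then gives $\sum_{k\in\partial i}x^*_{ik}=1$, so some neighbor $k$ of $i$ satisfies $x^*_{ik}>0$. Primal complementary slackness forces $(i,k)$ to be dotted, so $k\in P(i)$. But Lemma \ref{lemma:strong_dotted} asserts $P(i)=\{j\}$, hence $k=j$, contradicting $x^*_{ij}=0$.

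For the \emph{weakly-dotted} case I would invoke the corollary to Lemma \ref{lemma:strong_dotted}: the subgraph $H_w$ of weakly-dotted edges has minimum degree at least $2$ at every vertex it touches. Starting at $(i,j)$ and walking in $H_w$, always exiting each vertex along a fresh edge (possible by the degree condition), we produce a cycle $C\subset H_w$ containing $(i,j)$. If $C$ is even, I construct $\tilde x$ by alternately adding and subtracting a small $\epsilon>0$ along the edges of $C$, leaving $x^*$ unchanged elsewhere. Feasibility is preserved because the net degree change at each cycle vertex is $0$; and since $w_{uv}=\gamma_u+\gamma_v$ on every edge of $C$, the weight changes telescope to $0$. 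Thus $\tilde x\neq x^*$ is also LP-optimal, contradicting pointedness.

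The \emph{main obstacle} is an odd cycle $C$, where the alternating perturbation above neither preserves degrees nor cancels weights. To handle this, I would either find a second odd cycle in $H_w$ reachable from $C$ and combine them along a connecting path (a bicycle perturbation, mirroring Figure \ref{fig:structures}), or find an augmenting walk from $C$ to a vertex $v$ with $\gamma_v=0$, at which the degree constraint is slack by dual complementary slackness so an extra unit of degree can be absorbed. Verifying in each sub-case that the composite perturbation preserves both feasibility and total weight is the main combinatorial work in the proof.
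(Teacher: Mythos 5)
Your proposal diverges significantly from the paper's argument, and both branches of your case split contain genuine gaps.

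\textbf{Circularity in the strongly-dotted case.} You assert that $\bargamma$ is dual-optimal ``by strong duality'' and then apply complementary slackness between $\bargamma$ and $\underline{x}^*$. Strong duality only tells you the optimal dual value equals the optimal primal value; it does \emph{not} tell you that a particular dual-feasible vector attains that value. Establishing that $\bargamma$ is a dual optimum is exactly Lemma~\ref{lemma:gamma_opt}, which the paper derives \emph{from} Lemma~\ref{lemma:no_extra_dotted} (together with Lemmas~\ref{lemma:no_dotted_means_gamma0} and~\ref{lemma:no_extra_solid}). So the chain $\gamma_i>0 \Rightarrow \sum_k x^*_{ik}=1 \Rightarrow$ ``some $(i,k)$ is dotted'' already presupposes the conclusion you are trying to prove. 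The paper avoids this by invoking complementary slackness only between $\underline{x}^*$ and the externally given dual optimum $\underline{y}^*$, never with $\bargamma$, and instead compares weights of two explicit primal-feasible points.

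\textbf{Gaps in the weakly-dotted case.} Two issues. First, feasibility of your $\pm\epsilon$ perturbation on an even cycle $C\subset H_w$ is not automatic: you must subtract $\epsilon$ on every other edge of $C$, and while the analysis is ongoing you cannot yet rule out $x^*_e=0$ on those edges (indeed more than one edge of $C$ could be non-solid weakly-dotted, which is what you are trying to exclude). Second, and more importantly, you acknowledge the odd-cycle case is unresolved, and it is not a corner case --- in the half-integral (pointed but not tight) setting the $\tfrac12$-solid edges form odd cycles, so a weakly-dotted cycle will typically be odd. The auxiliary structures you gesture at (bicycle, augmenting walk to a vertex with $\gamma_v=0$) are plausible in spirit but are the combinatorial heart of the argument and are left unproved; moreover, the ``vertex with $\gamma_v=0$'' step again quietly invokes complementary slackness with $\bargamma$.

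\textbf{Structural difference from the paper.} The paper does not work inside the subgraph $H_w$ of weakly-dotted edges. Starting from a putative non-solid dotted edge, it grows an \emph{alternating} walk whose odd edges are dotted and whose even edges are \emph{solid}, stopping at a node with no adjacent solid edge (so $y^*_v=0$, hence its degree constraint is slack), and then performs a swap $x'_e=1-x^*_e$ (or the half-integral analogue on sub-paths between boundary dotted edges). Feasibility is checked using $y^*_v=0$ at endpoints, and the weight comparison uses $w(P_{\textup{dotted}})=\sum_{v\in P}\gamma_v$ and dual feasibility of $\bargamma$ (Lemma~\ref{lemma:gamma_sat_dual_const}) --- which is cheap and non-circular --- to conclude $w(P_{\textup{solid}})\le w(P_{\textup{dotted}})$, contradicting pointedness. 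This decomposition into path / cycle / blossom / bicycle, plus the sub-path refinement for the half-integral case, is precisely the machinery your proposal would need to supply but does not.

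If you want to keep your structure, the most direct repair is to abandon the appeal to dual optimality of $\bargamma$ entirely: for the strongly-dotted case as well, build an alternating path out of $(i,j)$ that alternates dotted and solid edges, terminating at nodes with $y^*_v=0$, and then argue feasibility and weight comparison exactly as in the paper. The weakly-dotted case can then be folded into the same construction rather than handled separately inside $H_w$.
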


Before proving the lemma let us define alternating paths. A path $P=( {i_1}, {i_2},\ldots, {i_k})$ in $G$ is called \emph{alternating path} if:
(a) There exist a partition {of} edges of $P$ into two sets $A,B$ such that either $A\subset M^*$ or $B\subset M^*$.
Moreover $A$ ($B$) consists of all \emph{odd (even)} edges; i.e. $A=\{( {i_1}, {i_2}), ( {i_3}, {i_4}),\ldots\}$
($B=\{( {i_2}, {i_3}), ( {i_4}, {i_5}),\ldots\}$). (b) The path $P$ might intersect itself or even
repeat its own edges but no edge is repeated immediately.
That is, for any $1\leq r\leq k-2:~~~~ i_r\neq  i_{r+1}$ and $ i_r\neq  i_{r+2}$.  $P$ is called an \emph{alternating cycle} if $ {i_1}= {i_k}$.

Also, consider $\underline{x}^*$ and $\underline{y}^*$ that are optimum solutions for the LP and its dual, \eqref{prob:mwm_relaxation} and \eqref{prob:mwm_dual}. The complementary slackness conditions (see \cite{Schrijver}) for more details)
state that for all $v\in V$, $y_v^*(\sum_{e\in \p v}x_e^*-1)=$ and for all $e=(ij)\in E$, $x_e^*(y_i^*+y_j^*-w_{ij})=0$.
Therefore, for all solid edges the equality $y_i^*+y_j^*=w_{ij}$ holds. Moreover, any node $v\in V$ is adjacent to a solid edge iff $y_v^*>0$.

\begin{proof}[Proof of Lemma \ref{lemma:no_extra_dotted}]
We need to consider two cases:

\textbf{Case (I).} Assume that the optimum LP solution $\underline{x}^*$ is integer (having a tight LP).

The idea of the proof is that if there exist a non-solid edge which is dotted, we use a similar analysis to \cite{BayatiB} to construct an alternating path consisting of dotted and solid edges that leads to creation of at least two optimal solutions to LP \eqref{prob:mwm_relaxation}.  This contradicts with uniqueness assumption on the optimum solution of LP.

Now assume the contrary: take $(i_1,i_2)$ that is a non-solid edge but it is dotted. Consider an endpoint of $(i_1,i_2)$. For example take $i_2$. Either there is a solid edge attached to $i_2$ or not. If there is not, we stop. Otherwise, assume $(i_2,i_3)$ is a solid edge. Using Lemma \ref{lemma:no_dotted_means_gamma0}, either $\gamma_{i_3}=0$ or there is a dotted edge connected to $i_3$. But if this dotted edge is $(i_2,i_3)$ then $P(i_2)\supset\{i_1,i_3\}$. Therefore, by Lemma \ref{lemma:strong_dotted} there has to be another dotted edge $(i_3,i_4)$ connected to $i_3$. Now, depending on whether $i_4$ has (has not) an adjacent solid edge we continue (stop) the construction. Similar procedure could be done by starting at $i_1$ instead of $i_2$. Therefore, we obtain an alternating path $P=({i_{-k}},\ldots, i_{-1}, {i_0}, {i_1}, {i_2},\ldots, {i_\ell})$ with all odd edges being dotted and all even edges being solid. Using the same argument as in \cite{BayatiB} one can show that one of the following four scenarios occur.

\begin{itemize}
\item \textbf{Path:} Before $P$ intersects itself, both end-points of the path stop. Either the last edge is solid (then $\gamma_v =0$ for the last node) or the last edge is a dotted edg.  Now consider a new solution $\underline{x}'$ to LP \eqref{prob:mwm_relaxation} by $x_e'=x_e^*$ if $e\notin P$ and $x_e'=1-x_e^*$ if $e\in P$.  It is easy to see that $\underline{x}'$ is a feasible LP solution at all points $v\notin P$ and also for internal vertices of $P$. The only nontrivial case is when $v=i_{-k}$ (or $v=i_{\ell}$) and the edge $(i_{-k},i_{-k+1})$ (or $(i_{\ell-1},i_{\ell})$ ) is dotted. In both of these cases, by construction $y_v^*=0$ which means no solid edge is attached to $v$ outside of $P$ so making any change inside of $P$ is safe. Now denote the weight of all solid (dotted) edges of $P$ by $w(P_{\textrm{solid}})$ ($w(P_{\textrm{dotted}})$). Hence,
    \begin{align} \sum_{e \in E} w_e x_e^*- \sum_{e \in E} w_e x_e'=w(P_{\textrm{solid}})-w(P_{\textrm{dotted}}).
    \label{eq:w(solid)-w(dotted)}
    \end{align}
    But $w(P_{\textrm{dotted}})=\sum_{v\in P}\gamma_v$ .  Moreover, from Lemma \ref{lemma:gamma_sat_dual_const},  $\underline{\gamma}$ is dual feasible which gives $w(P_{\textrm{solid}})\leq\sum_{v\in P}\gamma_v$. We are using the fact that if there is a solid edge at an endpoint of $P$ the $\gamma$ of the endpoint should be $0$. Now \eqref{eq:w(solid)-w(dotted)} reduces to $w_e x_e^*- \sum_{e \in E} w_e x_e'\leq 0.$  This contradicts the tightness of LP relaxation \eqref{prob:mwm_relaxation} since $x_e'\neq x_e^*$ holds at least for $e=(i_1,i_2)$.

\item \textbf{Cycle:} $P$ intersects itself and will contain an even cycle $C_{2s}$. This case can be handled very similar to the path by defining $x_e'=x_e^*$ if $e\notin C_{2s}$ and $x_e'=1-x_e^*$ if $e\in C_{2s}$. The proof is even simpler since the extra check for the boundary condition is not necessary.
\item \textbf{Blossom:} $P$ intersects itself and will contain an odd cycle $C_{2s+1}$ with a path (stem) $P'$ attached to the cycle at point $u$. In this case let $x_e'=x_e^*$ if $e\notin P'\cup C_{2s+1}$, and $x_e'=1-x_e^*$ if $e\in P'$, and $x_e'=\frac{1}{2}$ if $e\in C_{2s+1}$. From here, we drop the subindex $2s+1$ to simplify the notation. Since the cycle has odd length, both neighbors of $u$ in $C$ have to be dotted.
    Therefore,
\begin{align} \sum_{e \in E} w_e x_e^*- \sum_{e \in E} w_e x_e'&=w(P'_{\textrm{solid}})+w(C_{\textrm{solid}})-w(P'_{\textrm{dotted}})-\frac{w(C_{\textrm{dotted}})+w(C_{\textrm{solid}})}{2}\\
x_e'&=w(P'_{\textrm{solid}})+\frac{w(C_{\textrm{solid}})}{2}-w(P'_{\textrm{dotted}})-\frac{w(C_{\textrm{dotted}})}{2}\\
&\leq \sum_{v\in P'}\gamma_v + \frac{\sum_{v\in C}\gamma_v-\gamma_u}{2} - (\sum_{v\in P'}\gamma_v -\gamma_u) - \frac{\sum_{v\in C}\gamma_v+\gamma_u}{2}\leq 0,
    \label{eq:w(solid)-w(dotted)-blossom}
    \end{align}
    which is again a contradiction.
\item \textbf{Bicycle:} $P$ intersects itself at least twice and will contain two odd cycles $C_{2s+1}$ and $C'_{2s'+1}$ with a path (stem) $P'$ that is connecting them. Very similar to Blossom, let $x_e'=x_e^*$ if $e\notin P'\cup C\cup C'$, $x_e'=1-x_e^*$ if $e\in P'$, and $x_e'=\frac{1}{2}$ if $e\in C\cup C'$. The proof follows similar to the case of blossom.
\end{itemize}

\textbf{Case (II).} Assume that the optimum LP solution $\underline{x}^*$ is not necessarily integer.

Everything is similar to Case (I) but the algebraic treatments are slightly different. Some edges $e$ in $P$ can be $\f{1}{2}$-solid ($x_e^*=\frac{1}{2}$). In particular some of the odd edges (dotted edges) of $P$ can now be $\f{1}{2}$-solid. But the subset of $\f{1}{2}$-solid edges of $P$ can be only sub-paths of odd length in $P$. On each such sub-path defining $\underline{x}'=1-\underline{x}^*$ means we are not affecting $\underline{x}^*$. Therefore, all of the algebraic calculations should be considered on those sub-paths of $P$ that have no $\f{1}{2}$-solid edge which means both of their boundary edges are dotted.

\begin{itemize}
\item \textbf{Path:} Define $\underline{x}'$ as in Case (I). Using the discussion above, let $P^{(1)},\ldots,P_{(r)}$ be disjoint sub-paths of $P$ that have no $\f{1}{2}$-solid edge. Thus,
    \[\sum_{e \in E} w_e x_e^*- \sum_{e \in E} w_e x_e'=\sum_{i=1}^r \bigg[w(P_{\textrm{solid}}^{(i)})-w(P_{\textrm{dotted}}^{(i)})\bigg].\]
    Since in each $P^{(i)}$ the two boundary edges are dotted,  $w(P_{\textrm{solid}}^{(i)})\leq \sum_{v\in P^{(i)}}\gamma_v$ and $\sum_{v\in P^{(i)}}\gamma_v=w(P_{\textrm{dotted}}^{(i)})$. The rest can be done as in Case (I).
\item \textbf{Cycle, Blossom, Bicycle:} These cases can be done using the same method of breaking the path and cycles into sub-paths $P^{(i)}$ and following the case of path.
\end{itemize}

\end{proof}

\begin{lemma}
Every $1$-solid edge is a strong-dotted edge. Also, every $\frac{1}{2}$-solid edge is a weak-dotted edge.
\label{lemma:no_extra_solid}
\end{lemma}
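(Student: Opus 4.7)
The plan is to build a primal feasible solution $\tilde{x}$ directly from the dotted edges and force it to coincide with the unique LP optimum $\underline{x}^*$ via a duality sandwich. Specifically, I set $\tilde{x}_e=1$ on every strong-dotted edge, $\tilde{x}_e=\frac{1}{2}$ on every weak-dotted edge, and $\tilde{x}_e=0$ otherwise. If I can show (a) $\tilde{x}$ is primal feasible and (b) $\sum_e w_e\tilde{x}_e \ge \sum_v \gamma_v$, then weak LP duality (using dual feasibility of $\bargamma$ from Lemma \ref{lemma:gamma_sat_dual_const}) combined with primal optimality of $\underline{x}^*$ gives the chain $\sum_e w_e\tilde{x}_e \le \sum_e w_e x_e^* \le \sum_v\gamma_v \le \sum_e w_e \tilde{x}_e$. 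All inequalities must be equalities, so $\tilde{x}$ is primal optimal, and by the $\eps$-pointedness hypothesis $\tilde{x}=\underline{x}^*$. This identity is precisely the claimed correspondence between solid and dotted edges.

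For the feasibility part (a), I will use the corollary following Lemma \ref{lemma:strong_dotted}: strong-dotted edges are adjacent only to non-dotted edges, so any vertex $v$ with a strong-dotted edge has no other dotted edges and $\sum_{e\ni v}\tilde{x}_e=1$; each weak-dotted edge is adjacent at each endpoint to another weak-dotted edge, so the weak-dotted degree $w(v)$ lies in $\{0\}\cup[2,\infty)$. The key additional step is to rule out $w(v)\ge 3$. From $(v,u_i)$ weak-dotted the fixed-point offer formula gives $\off{u_i}{v}=\alf{v}{u_i}$, and the message-update identity $\alf{v}{u_j}=\max_{k\ne u_j}\off{k}{v}$ forces $\alf{v}{u_i}$ to take a common value $c$ over all weak-dotted neighbors $u_i$ of $v$. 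When $w(v)\ge 3$, this symmetry allows one to reallocate the unit of $\tilde{x}$-mass among different pairs of weak-dotted edges at $v$, producing distinct half-integral feasible primal vectors of equal LP objective and contradicting the uniqueness of $\underline{x}^*$ guaranteed by $\eps$-pointedness.

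For the value bound (b), I will use that every dotted edge $(i,j)$ satisfies $w_{ij}=\gamma_i+\gamma_j$ by Lemma \ref{lemma:i-jpartners_equivalence}, so that
\begin{align*}
\sum_{e\in E}w_e\tilde{x}_e \;=\; \sum_{v\in V}\gamma_v\bigl(s(v)+w(v)/2\bigr),
\end{align*}
where $s(v)$, $w(v)$ denote the strong- and weak-dotted degrees at $v$. By Lemma \ref{lemma:no_dotted_means_gamma0} every $v$ with $\gamma_v>0$ has at least one dotted incident edge, and combined with the corollary after Lemma \ref{lemma:strong_dotted} this forces either $s(v)=1$ or $w(v)\ge 2$. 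Hence $s(v)+w(v)/2\ge 1$ whenever $\gamma_v>0$ and $\sum_e w_e\tilde{x}_e\ge\sum_v\gamma_v$, completing the sandwich.

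The main obstacle is the exclusion of $w(v)\ge 3$ at the fixed point. Without the pointedness hypothesis such configurations can genuinely occur (for instance on $K_4$ with uniform edge weights, where all six edges weak-dotted forms a fixed point and the LP optimum is non-unique), so the argument must exploit $\eps$-pointedness. I anticipate this step to parallel, in reverse direction, the path/cycle/blossom/bicycle case analysis of Lemma \ref{lemma:no_extra_dotted}, reallocating the half-integral assignment in the weak-dotted neighborhood of $v$ to exhibit a competing primal optimum whenever $w(v)\ge 3$.
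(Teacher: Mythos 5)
Your duality-sandwich strategy is the same as the paper's, and your value bound (b) is correct (the paper actually gets equality $\sum_e w_e \tilde{x}_e = \sum_v \gamma_v$ from the cycle-plus-isolated-edge structure, but your inequality suffices for the sandwich). The gap is in feasibility step (a). Your argument for excluding $w(v)\ge 3$ ends by exhibiting ``distinct half-integral feasible primal vectors of equal LP objective and contradicting the uniqueness of $\underline{x}^*$,'' but $\eps$-pointedness only separates $\underline{x}^*$ from the other half-integral corners; it does not forbid two equal-valued corners that are both suboptimal. Two equal-valued half-integral solutions contradict uniqueness only if at least one of them is known to be optimal, and optimality is precisely what you are trying to establish, so the appeal to pointedness here is circular as stated.

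The paper avoids the problem because it has already proved Lemma \ref{lemma:no_extra_dotted} (every dotted edge is solid). Since $\underline{x}^*$ is a half-integral matching, each node meets at most two solid edges, hence at most two dotted edges; combined with Lemma \ref{lemma:strong_dotted} (a strong-dotted edge is the unique dotted edge at both endpoints), the dotted edges decompose into isolated strong-dotted edges and weak-dotted cycles, making feasibility of $\tilde{x}$ immediate. You should invoke Lemma \ref{lemma:no_extra_dotted} at your feasibility step rather than re-derive a degree bound by an independent reallocation argument --- as you yourself anticipate, completing your sketch would essentially reproduce the alternating path/cycle/blossom/bicycle case analysis that \ref{lemma:no_extra_dotted} has already carried out. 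With that substitution your proof becomes correct and matches the paper's.
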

\begin{proof}
From Lemma \ref{lemma:no_extra_dotted} it follows that the set of dotted edges is a subset of the solid edges. In particular, no node can be adjacent to more than two dotted edges. Now using Lemma \ref{lemma:strong_dotted} the set of dotted edges is a disjoint union of isolated edges (strongly dotted) and cycles (weakly dotted edges). If we define a $\underline{x}'$ to be zero on all non-dotted edges and $x_e'=1$ when $e$ is strongly dotted and $x_e'=\frac{1}{2}$ for weakly dotted ones. Then clearly $\underline{x}'$ is feasible to \eqref{prob:mwm_relaxation}.  On the other hand using Lemma \ref{lemma:no_dotted_means_gamma0} we have $\sum_{e \in E} w_e x_e'=\sum_{v\in V}\gamma_v$.  But $\bargamma$ is dual feasible which means $\sum_{v\in V}\gamma_v\geq \sum_{v\in V}y_v^*=\sum_{e \in E} w_e x_e^*$ which shows that $\underline{x}'$ is also an optimum solution to \eqref{prob:mwm_relaxation}. By uniqueness assumption on $\underline{x}^*$ we obtain the desired result.
\end{proof}

\begin{lemma}
$\bargamma$ is an optimum for the dual problem (\ref{prob:mwm_dual})
\label{lemma:gamma_opt}
\end{lemma}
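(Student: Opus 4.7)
The plan is to invoke LP strong duality. Lemma \ref{lemma:gamma_sat_dual_const} already shows that $\bargamma$ is feasible for the dual (\ref{prob:mwm_dual}), so it suffices to verify that the dual objective $\sum_{v \in V}\gamma_v$ equals the primal value $\sum_{e \in E} w_e x_e^*$. Every ingredient needed for this computation is supplied by the fixed-point properties already proved; the task is essentially to assemble them.

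First I would partition the support of $\underline{x}^*$. Because $\underline{x}^*$ is half-integral and satisfies $\sum_{j\in\di}x_{ij}^* \le 1$ at every vertex, each node is incident to at most one $1$-solid edge, and any node incident to a $\frac{1}{2}$-solid edge must be incident to exactly two such edges and to no $1$-solid edge. Hence the $1$-solid edges form a matching $M_1$, while the $\frac{1}{2}$-solid edges form a vertex-disjoint union of cycles. Accordingly, write $V = V_1 \cup V_{1/2} \cup V_0$, partitioning vertices by incidence to a $1$-solid edge, a $\frac{1}{2}$-solid edge, or no solid edge respectively.

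Next I would evaluate $\sum_v \gamma_v$ contribution by contribution. For $v \in V_0$, Lemma \ref{lemma:no_extra_dotted} shows that no dotted edge is incident to $v$, and then Lemma \ref{lemma:no_dotted_means_gamma0} gives $\gamma_v = 0$. For each $(i,j) \in M_1$, Lemma \ref{lemma:no_extra_solid} makes $(i,j)$ strong-dotted, so Lemma \ref{lemma:i-jpartners_equivalence} yields $\gamma_i + \gamma_j = w_{ij}$. For each edge $(i,j)$ of a $\frac{1}{2}$-solid cycle $C$, the same lemma makes $(i,j)$ weak-dotted, again giving $\gamma_i + \gamma_j = w_{ij}$; summing over $e \in C$ and using that every vertex of $C$ is touched by exactly two edges of $C$ yields $\sum_{v \in V(C)}\gamma_v = \tfrac{1}{2}\sum_{e \in C} w_e$. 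Adding the three contributions gives
\begin{align*}
\sum_{v \in V}\gamma_v \;=\; \sum_{e \in M_1}w_e \;+\; \sum_{C}\tfrac{1}{2}\sum_{e \in C}w_e \;=\; \sum_{e \in E}w_e x_e^*\,,
\end{align*}
and combined with dual feasibility this proves optimality via strong duality for the LP pair (\ref{prob:mwm_relaxation})--(\ref{prob:mwm_dual}).

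The overall proof is routine bookkeeping rather than a deep argument; the only point requiring a moment of care is the structural partitioning of solid edges into a matching plus vertex-disjoint cycles, which follows at once from the degree constraints of (\ref{prob:mwm_relaxation}) together with half-integrality of $\underline{x}^*$. No new ideas beyond Lemmas \ref{lemma:no_extra_dotted} and \ref{lemma:no_extra_solid} are required.
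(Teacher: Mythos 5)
Your proposal is correct and follows essentially the same route as the paper: the paper's one‑line proof cites exactly the three lemmas (\ref{lemma:no_dotted_means_gamma0}, \ref{lemma:no_extra_dotted}, \ref{lemma:no_extra_solid}) that you invoke, and the bookkeeping computation $\sum_{v}\gamma_v = \sum_e w_e x_e^*$ that you write out explicitly is already performed within the paper's proof of Lemma~\ref{lemma:no_extra_solid}, so you are simply unrolling what the paper leaves implicit.
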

\begin{proof}
Lemma \ref{lemma:gamma_sat_dual_const} guarantees feasibility.
Optimality follows from lemmas \ref{lemma:no_dotted_means_gamma0}, \ref{lemma:no_extra_dotted}
and \ref{lemma:no_extra_solid}.
\end{proof}

%
%
\subsection{Fixed points and the KT sequence}

\begin{lemma}
Let $G$ be an instance admitting a NB solution $\baralf^*$ with
gap $\sigma$, and KT sequence $(\cC_0,\cC_1,\cC_2,\dots,\cC_k)$.
Denote by  $M^*$ the max weight
matching of $G$.
Then for any
$q\in\{1,\dots,k\}$ and any edge $(i,j)\in E(\cC_q),
$
\begin{eqnarray}
\alf{i}{j}^*+\alf{j}{i}^*-w_{ij} = \left\{
\begin{array}{ll}-2\sigma_q&\mbox{ if $(i,j)\in E_1(\cC_q)$,}\\
\sigma_q &\mbox{ if $(i,j)\in E_2(\cC_q)$}\\
\end{array}\right.
\label{eq:KTstruct_fp}
\end{eqnarray}
Further, for any edge $(i,j)\not\in \cC_q$ for any $q$,
s.t. $i\in \cC_{q(i)}$, $j\in \cC_{q(j)}$, we have
\begin{eqnarray}
\alf{i}{j}^*+\alf{j}{i}^*-w_{ij} \ge \max(\sigma_{q(i)},\sigma_{q(j)})\, .
\label{eq:edge_different_l}
\end{eqnarray}
\end{lemma}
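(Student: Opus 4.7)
The plan is to reduce every quantity $\alf{i}{j}^*+\alf{j}{i}^*-w_{ij}$ to the edge slack $\gamma_i^*+\gamma_j^*-w_{ij}$ via the balance identity at fixed points, and then read off the slack from the KT construction. Combining Lemma~\ref{lemma:balance_at_FP} with the fixed-point identification $\off{k}{i}^*=(w_{ik}-\gamma_k^*)_+$ (Lemma~\ref{lemma:off_from_gamma}), on every edge $(i,j)\in E$ I will obtain
\[\gamma_i^* - \alf{i}{j}^* \;=\; \gamma_j^* - \alf{j}{i}^* \;=:\; \rho(i,j)\;\ge\;0,\]
so that
\[\alf{i}{j}^* + \alf{j}{i}^* - w_{ij} \;=\; \bigl(\gamma_i^* + \gamma_j^* - w_{ij}\bigr) - 2\rho(i,j).\]
The problem thus splits into identifying $\rho(i,j)$ and the slack in each of the three sub-cases below.

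For \eqref{eq:KTstruct_fp} I would argue as follows. On $(i,j)\in E_1(\cC_q)$, $j=P(i)$ is $i$'s unique partner, so excluding $j$ leaves $i$'s second-best offer which, by the definition of $\sigma_q$, equals $\gamma_i^*-\sigma_q$; hence $\rho(i,j)=\sigma_q$, and combined with the matched slack $0$ from \eqref{eq:edge_slack} this gives $-2\sigma_q$. On $(i,j)\in E_2(\cC_q)$, $j\ne P(i)$ because $j$ provides merely the second-best to some $i\in V(\cC_q)$, so the best offer from $P(i)$ survives the exclusion of $j$, giving $\alf{i}{j}^*=\gamma_i^*$ and thus $\rho(i,j)=0$; combined with slack $\sigma_q$ from \eqref{eq:edge_slack}, we obtain $\sigma_q$.

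For \eqref{eq:edge_different_l}, let $(i,j)$ lie outside every $E(\cC_q)$. Then $(i,j)\notin M^*$, so $j\ne P(i)$, and the same argument as in the $E_2$ case gives $\rho(i,j)=0$; it remains only to show $\gamma_i^*+\gamma_j^*-w_{ij}\ge\max(\sigma_{q(i)},\sigma_{q(j)})$. I will establish the $\sigma_{q(i)}$-bound by splitting on the sign of $\off{j}{i}^*=(w_{ij}-\gamma_j^*)_+$. In the case $\off{j}{i}^*=0$, $\gamma_j^*\ge w_{ij}$ so the slack is at least $\gamma_i^*$, which is $\ge\sigma_{q(i)}$ because the second-best $\gamma_i^*-\sigma_{q(i)}$ is non-negative. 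In the case $\off{j}{i}^*>0$, the hypothesis $(i,j)\notin E_2(\cC_{q(i)})$ forces $\off{j}{i}^*<\gamma_i^*-\sigma_{q(i)}$ strictly (otherwise $(i,j)$ would fall into $E_2(\cC_{q(i)})$), which rearranges to slack $>\sigma_{q(i)}$. The symmetric argument applied to $j$ yields slack $\ge\sigma_{q(j)}$, so the maximum of the two is bounded as required.

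The main subtlety I expect is justifying the strict inequality in the $\off{j}{i}^*>0$ branch of case three: I need to use that the second-best offer to $i\in V(\cC_q)$ takes the \emph{exact} value $\gamma_i^*-\sigma_q$ and that any non-best offer is necessarily $\le$ this value, with inclusion in $E_2(\cC_q)$ occurring precisely at equality (and with the additional positivity condition). Tracking the positivity qualifier through the degenerate sub-case where $\gamma_i^*=\sigma_{q(i)}$ (second-best equals zero) is where care is required; fortunately in that sub-case the condition $j\ne P(i)$ forces $\off{j}{i}^*=0$ automatically, falling into the first branch of the analysis above.
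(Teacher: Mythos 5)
Your proof is correct and follows essentially the same route as the paper's: both pass through the fixed-point balance identity $\gamma_i^*-\alf{i}{j}^*=\gamma_j^*-\alf{j}{i}^*\ge 0$, identify this quantity as $\sigma_q$ on $E_1$-edges and $0$ on $E_2$-edges, and read off the residual slack from~\eqref{eq:edge_slack}. The only cosmetic difference is in the last case: the paper avoids your split on the sign of $\off{j}{i}^*$ by using $\off{j}{i}^*\ge w_{ij}-\alf{j}{i}^*$ unconditionally together with $\off{j}{i}^*\le\gamma_i^*-\sigma_{q(i)}$; your version gets a strict inequality on the positive branch at the cost of invoking the $E_2$ membership criterion, but both are valid and give the required bound.
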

\begin{proof}
Consider $(i,j) \in E(\cC_q)$.

Suppose $(i,j) \in M^*$, i.e. $(i,j) \in E_1(\cC_q)$. We know that $ i,j \in V(\cC_q)$. Hence the node
slacks for each of $i$ and $j$ are $\sigma_q$. Hence, $\gamma_j^*-\alf{j}{i}^*=\gamma_i^* -\alf{i}{j}^*=\sigma_q$.
Also, $\gamma_i^*+\gamma_j^*=w_{ij}$. Hence $\alf{j}{i}^*+\alf{i}{j}^*-w_{ij}= -2\sigma_q$

Suppose $(i,j) \notin M^*$, i.e. $(i,j) \in E_2(\cC_q)$.
We know that $\alf{i}{j}^*=\gamma_i,\; \alf{j}{i}^*=\gamma_j$.
(\ref{eq:edge_slack}) now yields $\alf{i}{j}^*+\alf{i}{j}^*-w_{ij}=\sigma_q$ as required.

Now consider any edge $(i,j)\not\in \cC_q$ for any $q$. Consider $\off{j}{i}^*$.
We must have $\off{j}{i}^* \leq	\gamma_i^*-\sigma_{q(i)} = \alf{i}{j}^*-\sigma_{q(i)}$.
We know that $\alf{j}{i}^*+\alf{i}{j}^*-w_{ij} \geq 0$ so $\off{j}{i}^* \geq w_{ij}-\alf{j}{i}^*$.
Combining, we have $\alf{j}{i}^*+\alf{i}{j}^*-w_{ij} \geq \sigma_{q(i)}$. Similarly,
$\alf{j}{i}^*+\alf{i}{j}^*-w_{ij} \geq \sigma_{q(j)}$. This completes the proof.

Note: If we include the condition given by Eq.~(\ref{eq:extra_sigma_condition}) in the definition
of $\sigma$, we have the stronger condition
$\alf{j}{i}^*+\alf{i}{j}^*-w_{ij} \geq \sigma_{q(i)} + \sigma$ when $q(i)=q(j)\geq 1$.
\end{proof}

%
%
\section{Proof of convergence lemmas}
In this section we first prove some basic properties of the natural dynamics in Section \ref{sec:BasicConvergence}.  Then in Sections \ref{app:Path}-\ref{app:Bicycle} the lemmas that are used for the proofs of convergence on basic structures (path, blossom, and bicycle) are proved.

Throughout these sections, we say that an alternating path or blossom is
\textit{anchored}  \cite{KT} at its degree-1 node(s). Note that
$V_{\textup{ext}}(\cC_q) \ V(\cC_q)$ \textit{may} contain anchored node(s),
but no other nodes.

\subsection{Basic properties}
\label{sec:BasicConvergence}

\begin{claim}
Consider message vectors $\baralf$ and $\barbet$. Suppose, for $(ij) \in E$,
$|\alf{i}{j}-\bet{i}{j}| \leq \Delta$ and $|\alf{j}{i}-\bet{j}{i}| \leq \Delta$. We have
\begin{align}
\left|\;\off{i}{j}|_{\baralf}-\off{i}{j}|_{\barbet} \;\right| \leq \Delta
\label{eq:offer_non_expansion}
\end{align}
where $\off{i}{j}|_{\baralf}$ ($\off{i}{j}|_{\barbet}$) refers to the offers corresponding to message vector $\baralf$ ($\barbet$).
\label{claim:offer_non_expansion}
\end{claim}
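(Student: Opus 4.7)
The plan is to reduce the claim to showing that a fixed two-variable function is $1$-Lipschitz in the $\ell^\infty$-norm. Since $\off{i}{j}$ depends on the message vector only through the two coordinates $\alf{i}{j}$ and $\alf{j}{i}$, define $f:\reals^2\to\reals$ by
$$f(a,b) \;\equiv\; (w_{ij}-a)_+ - \tfrac{1}{2}(w_{ij}-a-b)_+\, .$$
Setting $a=\alf{i}{j}$, $b=\alf{j}{i}$, $a'=\bet{i}{j}$, $b'=\bet{j}{i}$, the inequality (\ref{eq:offer_non_expansion}) becomes
$$\bigl|f(a,b) - f(a',b')\bigr| \;\le\; \max(|a-a'|,|b-b'|)\, ,$$
under the hypothesis that the right-hand side is at most $\Delta$. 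So it suffices to prove $f$ is $1$-Lipschitz with respect to the $\ell^\infty$-norm on $\reals^2$.

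The key step is to rewrite $f$ in the compact form
$$f(a,b) \;=\; \Bigl(\min\bigl\{\,w_{ij}-a,\; \tfrac{1}{2}(w_{ij}-a+b)\,\bigr\}\Bigr)_+\, ,$$
which I would verify by splitting into the three cases (i) $w_{ij}-a\le 0$ (both expressions give $0$); (ii) $w_{ij}-a>0$ with $b\ge w_{ij}-a$ (both give $w_{ij}-a$); and (iii) $w_{ij}-a>0$ with $b<w_{ij}-a$ (both give $\tfrac{1}{2}(w_{ij}-a+b)$). Each of the two affine functions inside the $\min$ is manifestly $1$-Lipschitz in the $\ell^\infty$-norm on $(a,b)$: the first trivially, and the second since
$$\bigl|\tfrac{1}{2}\bigl((b-b')-(a-a')\bigr)\bigr| \;\le\; \tfrac{1}{2}\bigl(|a-a'|+|b-b'|\bigr) \;\le\; \max(|a-a'|,|b-b'|)\, .$$
Because pointwise minimum and the positive-part map both preserve Lipschitz constants, the composition $f$ inherits the bound, which is exactly what is needed.

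There is no real obstacle here; the only work is the three-case verification of the $\min$-representation, which is routine. An entirely equivalent alternative, slightly more pedestrian but also self-contained, would be to compute the gradient of $f$ on each of the three linearity regions and observe that $|\partial_a f|+|\partial_b f|\le 1$ throughout (the three gradients being $(0,0)$, $(-1,0)$, and $(-\tfrac{1}{2},\tfrac{1}{2})$), then integrate along a straight line joining $(a,b)$ and $(a',b')$ to conclude the same Lipschitz bound.
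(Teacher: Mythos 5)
Your argument is correct in substance and takes a modestly different route from the paper. The paper writes $\off{i}{j}$ as a function $f(x,y)$ defined piecewise on the two regions $\{x+y\le w_{ij}\}$ and $\{x+y>w_{ij}\}$, computes the gradient on the interior of each region, checks $\|\nabla f\|_1\le 1$ there, and appeals to continuity to conclude that $f$ is $1$-Lipschitz in $\ell^\infty$. Your primary route instead rewrites $f$ as a positive part of a minimum of two affine functions and invokes the non-expansiveness of $\min$ and $(\cdot)_+$; this is cleaner, since it sidesteps the (slightly hand-waved) passage from an almost-everywhere gradient bound to a global Lipschitz bound and the non-smooth loci $\{x+y=w_{ij}\}$ and $\{x=w_{ij}\}$. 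The ``pedestrian alternative'' you describe at the end is essentially the paper's own proof.

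One caveat: the identity $f(a,b)=\bigl(\min\{\,w_{ij}-a,\;\tfrac{1}{2}(w_{ij}-a+b)\,\}\bigr)_+$ holds on $\reals\times\reals_+$, not on all of $\reals^2$. If $b<0$ it can happen that $w_{ij}-a\le 0$ while $w_{ij}-a-b>0$, giving $f(a,b)=-\tfrac{1}{2}(w_{ij}-a-b)<0$ whereas the right-hand side vanishes; similarly in your case (iii) one needs $w_{ij}-a+b\ge 0$ for the positive-part to be inert. So cases (i) and (iii) of your verification are wrong as stated for negative $b$. This does not affect the application, because every message $\alf{j}{i}$ is non-negative (indeed the paper restricts $f$ to $\reals_+^2$), but you should state the domain accordingly rather than claiming $f:\reals^2\to\reals$.
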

\begin{proof}
This follows from definition \ref{eq:off_def}. $\off{i}{j}=f(\alf{i}{j}, \alf{j}{i})$,
where $f(x,y):\reals_+^2 \rightarrow \reals_+$ is defined as
\begin{eqnarray}
f(x,y) = \left\{
	\begin{array}{ll} \frac{w_{ij}-x+y}{2} & x+y \leq w_{ij}\\
	(w_{ij}-x)_+ &\mbox{ otherwise.}\\
\end{array}\right.
\label{eq:regionwise_offers}
\end{eqnarray}
It is easy to check that $f$ is continuous everywhere in $\reals_+^2$.
Also, it is differentiable except in $\{(x,y) \in \reals_+^2:x+y=w_{ij} \mbox{ or } x=w_{ij}\}$, and
satisfies $||\nabla f ||_1=|\frac{\partial f}{\partial x}|+|\frac{\partial f}{\partial y}| \leq 1$
Hence, $f$ is Lipschitz continuous in the $L^\infty$ norm, with Lipschitz constant 1,
leading to (\ref{eq:offer_non_expansion}).
\end{proof}

\begin{proof}[Proof of Lemma \ref{lem:infty_non_expansion}]
Use Claim \ref{claim:offer_non_expansion} with $\Delta = ||\baralf^0 - \barbet^0||_\infty$ for
every offer $\off{i}{j}^0$ in the graph. The result follows from the update rule (\ref{eq:update}).
\end{proof}

\begin{definition}
We overload $\Bo$ and say that $\Bo(q, \delta)$ holds for $\baralf$ if
$U_{\cG,\cG_q}(\baralf)\le \Delta-\delta\,,\; U_{\cG}(\baralf)\le \Delta\,$
\label{def:Bo_overloaded}
\end{definition}

\begin{proof}[Proof of Lemma \ref{lemma:outsideOff_ub}]
Note that $t$ plays no role in this result. As such we can restate it in light of
definition \ref{def:Bo_overloaded} as:
``If $\Bo(q,\delta)$ holds for $\baralf$ then for all $(i,j) \in E$ with $i$ in $V(\cC_q)$ and $(i,j)\notin E(\cC_q)$:
$\off{j}{i} \leq \gamma_i^* -\sigma_q+\Delta -\delta$." We prove this below.

For $q=0$, we have $\gamma_i^*=\sigma_0=0$.
Consider $i \in V(\cC_0)$. Consider any $j \in \partial i$. We must have $j \in \cC_l$ for some
$1 \leq l \leq k$. Also, $\alf{i}{j}^*=0$. Hence, by Eq.~(\ref{eq:edge_different_l}), we know that
$\alf{j}{i}^* \geq w_{ij}+\sigma_l \geq w_{ij}+\sigma$. We have $\alf{j}{i} \geq \alf{j}{i}^* - \Delta$. Hence,
$\off{j}{i} \leq (\Delta - \sigma)_+ \leq (\Delta -\delta)$ as needed.

For $q>0$ we consider 3 cases. We make use of the fact that $(i,j) \notin M^* \Rightarrow \alf{i}{j}^* = \gamma_i^*,
\alf{j}{i}^*=\gamma_j^*$, $\off{j}{i} \leq (w_{ij}- \alf{j}{i})_+$.

Case (i): $j \in V(\cG_q)$\\
By Eq.~(\ref{eq:edge_different_l}), we know that $\alf{i}{j}^* +\alf{j}{i}^* \geq w_{ij}+\sigma_q$.
It follows that
\begin{align*}
\off{j}{i}\leq (w_{ij}-\alf{j}{i}^*+\Delta -\delta)_+ \leq  (\gamma_i^* -\sigma_q+\Delta -\delta)_+=\gamma_i^* -\sigma_q+\Delta -\delta
\end{align*}
as required.

Case (ii): $j \in V(\cC_q)$\\
We use condition (\ref{eq:extra_sigma_condition}) which leads to $\gamma_i^* +\alf{j}{i}^* \geq w_{ij}+\sigma_q+\sigma$.
Hence
\begin{align}
\off{j}{i}\leq (w_{ij}-\alf{j}{i}^*+\Delta)_+ \leq  (\gamma_i^* -\sigma_q+\Delta -\sigma)_+\leq\gamma_i^* -\sigma_q+\Delta -\delta
\label{eq:outsideOff_ub_case2}
\end{align}

Case (iii): $j \notin V(\cG_{q+1})$\\
We know, by Lemma \ref{eq:edge_different_l} that $\alf{i}{j}^*+\alf{j}{i}^* \geq w_{ij}+\sigma_q+\sigma$.
(\ref{eq:outsideOff_ub_case2}) follows.
\end{proof}

\begin{proof}[Proof of Lemma \ref{lemma:CondStrucPersistence}]
We show that $\Bo(q,\delta,1)$ holds. Then by induction on $t$, $\Bo(q,\delta,t)$ holds for all times $t>0$.

Consider $q=0$. Suppose $\Bo(0,\delta,0)$ holds i.e. $U_{\cG}(\baralf^0)\le \Delta$.
This leads to $\Bo(0,\delta,1)$ as a consequence of Lemma \ref{lem:infty_non_expansion} with $\barbet^0=\baralf^*$.

Consider $q=1$. Suppose $\Bo(1,\delta,0)$ holds. Take any $i \in \cC_0$.
By Lemma \ref{lemma:outsideOff_ub}, for any $j \in \partial i$ we know that
$\off{j}{i}^0 \leq (\Delta -\delta)$. It follows that for any $k \in \partial i $,
$|\alf{i}{k}^1-\alf{i}{k}^*|=\alf{i}{k}^1 \leq \Delta - \delta$, since we already had
$|\alf{i}{k}^0-\alf{i}{k}^*|=\alf{i}{k}^0 \leq \Delta - \delta$.
i.e. $\Bo(1,\delta,1)$ holds.

We now proceed by induction on $q$. Suppose we know that $\Bo(Q,\delta,0) \Rightarrow \Bo(Q,\delta,1), 0 < Q \leq k$.
Suppose $\Bo(Q+1,\delta,0)$ holds. In particular $\Bo(Q,\delta,0)$ holds, and hence $ \Bo(Q,\delta,1)$ also holds.
Consider $i \in \cC_{Q}$. We know $i \in M^*$, so there is $\parti \mbox{ s.t. } (i,\parti) \in M^*$.
Define $S=\{j:(i,j) \in E(\cC_Q), j \neq \parti\}$. Then
$|S| \in \{0,1,2\}$. We know that $\off{\parti}{i}^*=\gamma_i^*$ and $\off{j}{i}^* = \gamma_i^*-\sigma_Q \
\forall j \in S$. Denote this `second-best' offer by $\psi_i^*$, with $\psi_i^*=0$ if $|S|=0$.

Take any edge $(l,i) \in E$.

Case (i): $l \in V(\cG_{Q+1})$\\
By assumption, $|\alf{l}{i}^0-\alf{l}{i}^*|,|\alf{i}{l}^0-\alf{i}{l}^*|  \leq \Delta -\delta$.
We know, by Lemma \ref{claim:offer_non_expansion} that
\begin{align}
|\off{l}{i}^0-\off{l}{i}^*| \leq \Delta -\delta
\label{eq:type1_offers}
\end{align}

Case (ii): $l \notin V(\cG_{Q+1})$\\
Lemma \ref{lemma:outsideOff_ub} is applicable (same as Case (iii) in the proof of Lemma \ref{lemma:outsideOff_ub}). Hence,
\begin{align}
\off{l}{i}^0 \leq \psi_i^* + \Delta - \delta
\label{eq:type2_offers}
\end{align}

With (\ref{eq:type1_offers}) and (\ref{eq:type2_offers}) established, we now check
that $\forall l \in \partial i$
\begin{align}
|\alf{i}{l}^1-\alf{i}{l}^*| \leq \Delta -\delta \
\label{eq:persistence_newmessage_bound}
\end{align}

$\alf{i}{\parti}^*=\psi_i^*$. We verify separately for each of the cases $|S|=0$ and $|S|>0$
that
\begin{align}
|\max_{j \in \partial i \backslash \parti}\off{j}{i}^0  - \psi_i^*| \leq \Delta -\delta
\label{eq:persistence_nonpartner_message_bound}
\end{align}
Combine
with $|\alf{i}{\parti}^0-\alf{i}{\parti}^*| \leq \Delta -\delta$ to get
(\ref{eq:persistence_newmessage_bound}) for $l=\parti$.

If $l \in \partial i \backslash \parti$, $\alf{i}{l}^*=\gamma_i^*$. We know that
\begin{align}
|\off{\parti}{i}^0 - \off{\parti}{i}^*|=|\off{\parti}{i}^0 - \gamma_i^*|\leq (\Delta -\delta)
\label{eq:persistence_partner_message_bound}
\end{align}
Combining (\ref{eq:persistence_nonpartner_message_bound}) and
(\ref{eq:persistence_partner_message_bound}) leads to
\begin{align}
|\max_{j \in \partial i \backslash l}\off{j}{i}^0  - \gamma_i^*| \leq \Delta -\delta
\end{align}
and hence to (\ref{eq:persistence_newmessage_bound}) for $l \neq \parti$.

Hence, $\Bo(Q+1, \delta, 1)$ holds as required. Induction on $Q$ completes the proof.
\end{proof}

\begin{proof}[Proof of Lemma \ref{lemma:outgoing_error_bound}]
By Lemma \ref{lemma:CondStrucPersistence}, we know that $\Bo(q,\delta,t)$ holds for all $t \geq 0$.

Consider $q=0$. Take any $i \in V(\cC_0)$. By Lemma \ref{lemma:outsideOff_ub}, for any $\parti \in \partial i$ we know that
$\off{\parti}{i}^t \leq (\Delta -\delta)$. Hence, $\max_{\parti \in \partial i \backslash j}\off{\parti}{i}^t \leq (\Delta -\delta)$
for any $j \in \partial i, \ \forall \, t \geq 0$. The result follows from $\alf{i}{j}^0 \leq \Delta$ and the update
rule \ref{eq:update}.

Consider $q>0$. Take any $i \in V(\cC_q)$. We know that $(i,\parti) \in M^*$ for some $\parti \in  V(\cC_q)$.
It follows from the assumption that $U_{G,\cC_q}(\baralf^t)\leq \Delta-\delta$ that $|\off{\parti}{i}^t-\gamma_i^*| \leq \Delta -\delta$.
For any $j \in \partial i \backslash \parti$, we know that
$\off{j}{i}^t \leq \off{j}{i}^*+\Delta \leq \gamma_i^*-\sigma_q+\Delta \leq \gamma_i^*+\Delta -\delta$.
Hence, $ |\max_{j \in \partial i \backslash l}\off{j}{i}^t-\alf{i}{l}^*|=
|\max_{j \in \partial i \backslash l}\off{j}{i}^t-\gamma_i^*|
\leq (\Delta -\delta)$
for any $l \in \partial i \backslash i_\P$. The result follows.

\end{proof}

\subsection{Path convergence}
\label{app:Path}
\begin{definition} Recall the path structures from Section \ref{subsubsec:path_descent}.
\begin{enumerate}
\item We use $P_\len$ to denote a path $0-1-\ldots-\len$.
\item  We call a matching $M_P$ on a path $P$ with alternate edges included a `valid path  matching'.
\end{enumerate}
\end{definition}

\begin{lemma}\label{lemma:RW}
Let $P=P_\len, \; \len\geq 1$. Suppose we apply arbitrary boundary conditions
$\{\boundary_\L^t\}_{t\geq 0}, \{\boundary_\R^t\}_{t\geq 0}$.
Consider the simplified dynamics on $P_\len$ wrt given boundary conditions, a
valid path matching $M_P$ and
two different initial conditions $\sbaralf^0$ and $\sbarbet^0$. We have,

\begin{eqnarray}
||\sbaralf^t-\sbarbet^t||_{\infty}\le
C \len \exp\left(-\frac{t}{C\len^2}\right)||\sbaralf^0-\sbarbet^0||_{\infty}\, ,
\label{eq:simplified_contraction}
\end{eqnarray}
for some $C>0$.
Also,
\begin{align}
||\sbaralf^t-\sbarbet^t||_{\infty}\le ||\sbaralf^0-\sbarbet^0||_{\infty}
\label{eq:simp_infty_non_expansion}
\end{align}
\label{lemma:simplified_contraction}
\end{lemma}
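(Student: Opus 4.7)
The plan is to exploit the fact that the simplified dynamics is affine and that the boundary forcings cancel when we subtract two runs. Set $\barDelta^t \equiv \sbaralf^t - \sbarbet^t$. Since both sequences obey \eqref{eq:simplified_offers} and the displayed recursions for $\salf{i}{j}^{t+1}$ with the \emph{same} boundary sequences $\boundary_\L^t, \boundary_\R^t$, subtracting gives a homogeneous linear recursion $\barDelta^{t+1}=A\,\barDelta^t$ with a fixed $2\ell\times 2\ell$ matrix $A$ depending only on $P$ and $M_P$. At interior messages this reads
\[
\Delta^{t+1}_{i\to i+1}=(1-\damp)\,\Delta^{t}_{i\to i+1}+\damp\,\big(\soff{i-1}{i}|_{\sbaralf^t}-\soff{i-1}{i}|_{\sbarbet^t}\big),
\]
where the offer difference equals $\tfrac{1}{2}(-\Delta^t_{i-1\to i}+\Delta^t_{i\to i-1})$ when $(i-1,i)\in M_P$ and $-\Delta^t_{i-1\to i}$ otherwise (and symmetrically for $\Delta^{t+1}_{i\to i-1}$). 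At the two extremal messages the boundary terms cancel entirely, yielding $\Delta^{t+1}_{0\to 1}=(1-\damp)\Delta^{t}_{0\to 1}$ and $\Delta^{t+1}_{\ell\to \ell-1}=(1-\damp)\Delta^{t}_{\ell\to \ell-1}$.

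The non-expansion bound \eqref{eq:simp_infty_non_expansion} is then immediate: a row-by-row inspection of $|A|$ gives row sum exactly $1$ at every interior message (namely $(1-\damp)+\tfrac{\damp}{2}+\tfrac{\damp}{2}$ in the matched case and $(1-\damp)+\damp$ in the unmatched case) and row sum $1-\damp$ at the two boundary messages. Hence $\|A\barDelta\|_\infty\le\|\barDelta\|_\infty$ for every $\barDelta$, and iterating proves \eqref{eq:simp_infty_non_expansion}.

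For the contraction \eqref{eq:simplified_contraction} I would interpret $|A|$ as a sub-stochastic kernel on directed messages of $P$ and couple the linear evolution to an absorbing random walk. Adjoin a cemetery state $\dagger$, and let $\widetilde{P}$ be the stochastic extension that sends the missing mass $\damp$ at each of the two boundary messages into $\dagger$. Denoting by $(X_s)_{s\ge 0}$ the Markov chain with kernel $\widetilde{P}$ started at a directed message $e$, one has
\[
|(\barDelta^t)_e|\le \sum_{e'}(|A|^t)_{e,e'}\,\|\barDelta^0\|_\infty =\|\barDelta^0\|_\infty\cdot\Pr\nolimits_{\!e}\!\big(X_t\neq\dagger\big).
\]
At each step $(X_s)$ holds with probability $1-\damp$ and otherwise jumps to a directed message on the neighboring path-edge; projecting onto the underlying node index produces a lazy nearest-neighbor random walk on $\{0,1,\dots,\ell\}$ absorbing at both endpoints. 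Standard one-dimensional estimates (spectral gap $\Theta(\damp/\ell^2)$ of the absorbing tridiagonal kernel, or direct hitting-time arguments, e.g.~from \cite{Spitzer}) yield $\Pr_{e}(X_t\neq\dagger)\le C\ell\exp(-t/(C\ell^2))$, which is exactly \eqref{eq:simplified_contraction}.

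The main obstacle is the bookkeeping needed to make the random-walk interpretation rigorous: one must pass from the signed matrix $A$ to its entrywise absolute value $|A|$ and check that the matching-dependent coefficients ($\tfrac{1}{2}$ on matched edges, $1$ on unmatched ones) produce \emph{exact} stochasticity of $\widetilde{P}$ on every interior row, with the only leakage occurring at the two boundary rows. Once this is in place, the reduction to a lazy absorbing walk on a segment of length $\ell$ is transparent, and the $\ell^{-2}$ spectral gap together with the $O(\ell)$ $\ell_\infty\!\to\!\ell_\infty$ prefactor inherent to such one-dimensional walks produce the stated exponential decay.
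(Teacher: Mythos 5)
Your overall strategy---subtract the two runs to kill the boundary forcing, read off a homogeneous linear recursion $\barDelta^{t+1}=A\,\barDelta^t$, bound $|A^t\barDelta^0|$ entrywise by $|A|^t$ applied to a constant vector, and then interpret $|A|$ as a sub-stochastic kernel whose surviving mass decays like that of an absorbing one-dimensional walk---is exactly the strategy the paper uses (they call $|A|^t$ applied to the constant vector the ``mass'' $\barmass^t$, and the cemetery state is the particle ``disappearing when crossing the ends''). The row-sum computation giving \eqref{eq:simp_infty_non_expansion} is also the same observation.

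The one place your write-up has a genuine gap is the projection step. You claim that ``projecting onto the underlying node index produces a lazy nearest-neighbor random walk on $\{0,1,\dots,\ell\}$''. That projection is not Markov. The natural state of the walk is a \emph{directed} message $i\to j$, equivalently a pair (edge, direction), and the transition rule depends on the direction in an essential way: on a non-matching edge the particle moves deterministically in its current direction (with probability $\damp$), while only on a matching edge is the direction randomized (probability $\damp/2$ each way). If you forget the direction and keep only a node or edge index, the next step is not determined by the current state, so there is no ``lazy nearest-neighbor walk on $\{0,\dots,\ell\}$'' to appeal to. (There is also a cardinality mismatch: there are $2\ell$ directed messages, $\ell$ edges, and $\ell+1$ nodes, so no bijection with $\{0,\dots,\ell\}$ is available, and the many-to-one projections are non-Markov.) The paper's fix is to watch the walk only at the times it sits on a \emph{matching} edge: there the $\damp/2$--$\damp/2$ rule erases the direction, so the embedded chain of matching-edge positions is a genuine lazy simple random walk on a segment of length $\ell'\le(\ell+1)/2$, killed at both ends, and the standard $\Theta(\ell^{-2})$ spectral-gap/hitting-time estimate applies to \emph{that} chain. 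With that substitution your argument goes through and matches the paper's; without it, the appeal to one-dimensional random-walk estimates is not justified by the stated reduction.
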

\begin{proof}
Consider the difference $\sbarDelta^t=\sbaralf^t-
\sbarbet^t$. We know that $||\sbarDelta^0||_\infty =||\sbaralf^0-\sbarbet^0||_{\infty}$.
We bound $\sbarDelta^t$  by the `mass' in an appropriately defined random walk on the path.

Firstly, we write update equations for $\sbarDelta$. We distinguish between the two `end' edges that
are at the ends of the path and the other `interior' edges. For an interior non-matching edge $(i,i+1) \notin M_P$,
\begin{align}
\sDel{i}{i+1}^{t+1}&=\damp\left(\frac{ -\sDel{i-1}{i}^t + \sDel{i}{i-1}^t }{2}\right)
+ (1-\damp)\sDel{i}{i+1}^t
\nonumber \\
\sDel{i+1}{i}^{t+1}&=\damp\left(\frac{ -\sDel{i+2}{i+1}^t + \sDel{i+1}{i+2}^t }{2}\right)
+ (1-\damp)\sDel{i+1}{i}^t
\label{eq:matching_simple_path}
\end{align}

For an interior matching edge $(i,i+1) \in M_P$
\begin{align}
\sDel{i}{i+1}^{t+1}&= -\damp\sDel{i-1}{i}^t
+ (1-\damp)\sDel{i}{i+1}^t \nonumber \\
\sDel{i+1}{i}^{t+1}&= -\damp\sDel{i+2}{i+1}^t
+ (1-\damp)\sDel{i+1}{i}^t
\label{eq:nonmatch_simple_path}
\end{align}

For the end edges, we have $\sDel{0}{1}^{t+1}=\damp \sDel{0}{1}^t\, , \sDel{\len}{\len-1}^{t+1}=\damp \sDel{\len}{\len-1}^t\, \ \forall t \geq 0$.
In the other direction, i.e. for $\sDel{1}{0}^{t+1}$, and $\sDel{\len-1}{\len}^{t+1}$ the relevant update
rule from Eqs.~(\ref{eq:matching_simple_path}) and (\ref{eq:nonmatch_simple_path}) is
applicable, depending on the type of edge.

Now, define $\mass{i}{i+1}^0= \mass{i+1}{i}^0=|| \sbarDelta^0 ||_{\infty} \ \forall i \in \{0,1,\ldots,\len-1\}$.
Update $\barmass$ as follows. For an interior non-matching edge $(i,i+1) \notin M_P$,
\begin{align}
\mass{i}{i+1}^{t+1}&=\damp\left(\frac{ \mass{i-1}{i}^t + \mass{i}{i-1}^t }{2}\right)
+ (1-\damp)\mass{i}{i+1}^t \nonumber \\
\mass{i+1}{i}^{t+1}&=\damp\left(\frac{ \mass{i+2}{i+1}^t + \mass{i+1}{i+2}^t }{2}\right)
+ (1-\damp)\mass{i+1}{i}^t
\label{eq:matching_simple_path_bound}
\end{align}

For an interior matching edge $(i,i+1) \in M_P$
\begin{align}
\mass{i}{i+1}^{t+1}&= \damp\mass{i-1}{i}^t
+ (1-\damp)\mass{i}{i+1}^t \nonumber \\
\mass{i+1}{i}^{t+1}&= \damp\mass{i+2}{i+1}^t
+ (1-\damp)\mass{i+1}{i}^t
\label{eq:nonmatch_simple_path_bound}
\end{align}

For the end edges, we define $\mass{0}{1}^{t+1}=\damp \mass{0}{1}^t, \mass{\len}{\len-1}^{t+1}=\damp \mass{\len}{\len-1}^t, \ \forall t \geq 0$.
In the other direction, i.e. for $\mass{1}{0}^{t+1}$, and $\mass{\len-1}{\len}^{t+1}$ the relevant update
rule from Eqs.~(\ref{eq:matching_simple_path_bound}) and (\ref{eq:nonmatch_simple_path_bound}) is
applicable, depending on the type of edge.

\begin{claim}
$\mass{i}{i+1}^t \geq |\sDel{i}{i+1}^t|, \mass{i+1}{i}^t \geq |\sDel{i+1}{i}^t|  \ \forall i \in \{0,1,\ldots,\len-1\} \textup{ and } t \geq 0$
\label{claim:RW_bound}
\end{claim}
\begin{proof}
Clearly, the claim is true at $t=0$. An elementary proof by induction
can be constructed based on $|x+y| \leq |x|+|y|$. 
\end{proof}

In light of claim \ref{claim:RW_bound}, it is sufficient to prove that
$||\barmass^t||_\infty \leq C\, \len \exp(-t/C\, \len^2)|| \sbarDelta^0 ||_{\infty}$.
Now, for a normalization $Z = 2\len|| \sbarDelta^0 ||_{\infty}$,
$\barmass^t/Z$ can be thought of as the probability distribution of
a  particle performing the following random walk.
The particle is positioned on an edge $\{1,\dots,\len\}$
and has a direction (right or left). It starts at a uniformly random edge,
with uniformly random direction.
If the edge is non-matching, the particle keeps its direction
and moves along it  to the next edge, with probability $\damp$.
If the edge is matching, with probability $\damp/2$ it moves forward
maintaining direction, with probability $\damp/2$ it reverses direction
and moves forward one step, and with probability $(1-\damp)$
retains its status.
The particle is disappears when crossing any of the ends of the path.

Notice that the particle position, when watched on
matching edges, is a lazy random walk
(i.e. a simple random walk that keeps its position with probability
bounded away from $0$ and $1$), killed at $0$ and $\len'$,
with $\len'\le (\len+1)/2$. Hence
$||\barmass^t||_\infty/Z$ is not larger than the survival probability of
such  random walk, which is bounded as in the claim.

It is easy to check that Claim \ref{claim:offer_non_expansion} holds also for the simplified dynamics.
(\ref{eq:simp_infty_non_expansion}) follows.
\end{proof}
%
\begin{proof}[Proof of Lemma \ref{lemma:simplified_convergence}]
This follows directly from (\ref{eq:simplified_contraction}) in Lemma \ref{lemma:RW}.
\end{proof}

\begin{lemma}
Let $G$ be an instance and $P\subseteq G$ be a path in $G$.
Consider two initial conditions $\baralf^0$, $\barbet^0$ on $G$,
such that $\baralf^0_{G\setminus P} = \barbet^0_{G\setminus P}$
(they coincide outside $P$) and
$\baralf^0_P\succeq\barbet^0_P$ ($\baralf$ dominates $\barbet$ on $P$).
Then, after one iteration  $\baralf^1_P\succeq\barbet^1_P$.
\label{lemma:monotonicity}
\end{lemma}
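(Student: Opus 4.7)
The plan is to verify that a single application of the update rule preserves each of the defining inequalities \eqref{def:ordering_even}--\eqref{def:ordering_odd} on the edges of $P$. The argument will rest on three observations: (i) the offer map $(\alf{k}{i},\alf{i}{k})\mapsto \off{k}{i}$ is non-increasing in its first argument and non-decreasing in its second; (ii) offers flowing into a path node $i$ from a neighbour $u\notin V(P)$ (i.e.\ along an edge not in $E(P)$) are identical under $\baralf^0$ and $\barbet^0$; (iii) offers flowing along path edges move in the direction dictated by the alternating ordering.

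Monotonicity (i) will be immediate from \eqref{eq:off_def} by inspecting the three regions determined by the signs of $w_{ki}-\alf{k}{i}$ and $w_{ki}-\alf{k}{i}-\alf{i}{k}$, which is essentially the computation already underlying Claim~\ref{claim:offer_non_expansion}. Property (ii) is a direct consequence of the hypothesis $\baralf^0_{G\setminus P}=\barbet^0_{G\setminus P}$, since $\off{u}{i}^0$ depends only on the two messages along the edge $(u,i)\notin E(P)$, both of which coincide for the two initial conditions.

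The core of the proof is then (iii). Fix an edge $(i,i+1)\in E(P)$ and consider first the case $i$ even (the odd case being symmetric). For the adjacent path edge $(i-1,i)$ the alternating ordering gives $\alf{i-1}{i}^0\le \bet{i-1}{i}^0$ and $\alf{i}{i-1}^0\ge \bet{i}{i-1}^0$, and by (i) each of these inequalities pushes $\off{i-1}{i}^0$ upwards when passing from $\barbet^0$ to $\baralf^0$. Combined with (ii), every term in $\max_{k\in\di\setminus\{i+1\}}\off{k}{i}^0$ is therefore at least as large under $\baralf^0$ as under $\barbet^0$; together with $\alf{i}{i+1}^0\ge \bet{i}{i+1}^0$, the convex-combination rule \eqref{eq:update} yields $\alf{i}{i+1}^1\ge \bet{i}{i+1}^1$. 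Applying the identical reasoning at node $i+1$ (which is odd) flips every inequality and gives $\alf{i+1}{i}^1\le \bet{i+1}{i}^1$. Doing this at each edge of $P$ delivers \eqref{def:ordering_even}--\eqref{def:ordering_odd} at $t=1$.

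The only mildly delicate point will be the endpoints of $P$, whose set of path-adjacent neighbours has size one; there the maximum in \eqref{eq:update} contains no path-term for one of the two directions and the argument reduces to (ii) plus the non-expansive $(1-\damp)\,\alf{i}{j}^0$ contribution, which preserves the ordering trivially. I do not anticipate any serious obstacle: once (i) and (ii) are in place, the proof is a clean parity-based case analysis exactly along the lines of the analogous bipartite statement, Lemma~\ref{lemma:bipartite_monotonicity}.
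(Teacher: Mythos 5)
Your proof is correct and follows essentially the same route as the paper's: compare the offers arriving at each path node under the two initial conditions, note that non-path offers coincide, use the monotonicity of $\off{i-1}{i}$ in $(\alf{i-1}{i},\alf{i}{i-1})$ together with the alternating ordering to show the relevant path offer moves in the right direction, and conclude via the positive-coefficient update rule. The paper's proof is more terse (it presents one representative case and says the rest follow similarly), but the substance and parity-based case analysis are the same.
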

\begin{proof}[Proof of Lemma \ref{lemma:monotonicity}]
Consider a node $i$ on $P$ with $i$ even. We compare the offers received by $i$
under $\baralf$ and $\barbet$ at time $0$. Clearly, all offers coming in from
$G-P$ are identical. If $i>0$ and $i$ even, by condition (\ref{def:ordering_odd}) and the fact
that $\off{i-1}{i}$ is non-increasing in $\alf{i-1}{i}$ and non-decreasing in
$\alf{i}{i-1}$, we know that $\off{i-1}{i}|_{\baralf^0} \geq \off{i-1}{i}|_{\barbet^0}$.
Hence, $\alf{i}{i+1}^1 \geq \bet{i}{i+1}^1$ as required. We can argue similarly for all
other types of messages on $P$.
\end{proof}

\begin{lemma}
Consider the simplified dynamics on a path instance $G=P_\len$ wrt a valid
path matching $M_P$. Let $\sbaralf^t$ denote the message vector resulting at time
$t \geq 0$ with initialization $\sbaralf^0$ and boundary conditions
$\{\boundary_{ 1,\L}^t\}_{t\geq 0}, \{\boundary_{ 1,\R}^t\}_{t\geq 0}$.
Similarly, let $\sbarbet^t$ denote the message vector resulting at time
$t \geq 0$ with initialization $\sbarbet^0$ and boundary conditions
$\{\boundary_{2,\L}^t\}_{t\geq 0}, \{\boundary_{2,\R}^t\}_{t\geq 0}$.
 If $\sbaralf^0 \succeq \sbarbet^0$ and
 \begin{align*}
 \boundary_{1,\L}^t\geq \boundary_{2,\L}^t \,,\qquad
  (-1)^\len\boundary_{1,\R}^t\geq (-1)^\len\boundary_{2,\R}^t \,,
  \qquad \forall \, t \geq 0
  \end{align*}
   then $\sbaralf^t \succeq \sbarbet^t \ \forall t \geq 0$.
\label{lemma:monotonicity_simplified}
\end{lemma}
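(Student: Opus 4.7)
The plan is to proceed by induction on $t$; the base case $t=0$ is the hypothesis $\sbaralf^0 \succeq \sbarbet^0$. The key observation is that the partial ordering defined by (\ref{def:ordering_even})--(\ref{def:ordering_odd}) admits a clean node-level restatement: $\sbaralf \succeq \sbarbet$ is equivalent to $\salf{v}{v'} \ge \sbet{v}{v'}$ on every outgoing edge of every even node $v$, and $\salf{v}{v'} \le \sbet{v}{v'}$ on every outgoing edge of every odd node $v$. In other words, ``even nodes send larger $\alpha$-messages, odd nodes send smaller ones,'' a fact one verifies immediately by applying (\ref{def:ordering_even}) or (\ref{def:ordering_odd}) to each of the two outgoing edges at a node.

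The inductive step then amounts to checking that each update preserves the sign of inequality dictated by the parity of the updating node. Since every update in the simplified dynamics is a convex combination with weight $\damp \in (0,1)$ between the previous $\salf$ and an injected quantity (either a boundary value or an incoming offer $\soff{\cdot}{\cdot}^t$), it suffices to verify that the injected quantity respects the correct inequality. At the left boundary (node $0$, always even), the assumption $\boundary_{1,\L}^t \ge \boundary_{2,\L}^t$ is exactly the required sign; at the right boundary (node $\ell$), the parity of the terminal node is determined by $(-1)^\ell$, and the factor $(-1)^\ell$ appearing in the hypothesis on $\boundary_\R$ is precisely what delivers the right sign regardless of the parity of $\ell$. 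For an interior update such as $\salf{i}{i+1}^{t+1} = \damp\,\soff{i-1}{i}^t + (1-\damp)\,\salf{i}{i+1}^t$, I would inspect the two cases in (\ref{eq:simplified_offers}): in the matching case, $\soff{i-1}{i}^t = \tfrac12(w_{i-1,i} - \salf{i-1}{i}^t + \salf{i}{i-1}^t)$, so the outgoing message from $i-1$ enters with a negative coefficient and the outgoing message from $i$ enters with a positive coefficient; in the non-matching case only $\salf{i-1}{i}^t$ appears, again with a negative coefficient. In both cases the offer $\soff{i-1}{i}^t$ inherits the sign of inequality consistent with the parity of $i$, which is exactly what the update of $\salf{i}{i+1}^{t+1}$ requires; the symmetric computation handles $\salf{i}{i-1}^{t+1}$.

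There is no real obstacle beyond this parity bookkeeping, and the argument is essentially that of Lemma~\ref{lemma:monotonicity} adapted to a pure path with boundary conditions playing the role that the ``outside'' of the embedded path played there. The only subtlety worth highlighting is the asymmetry between the two endpoints: node $0$ has fixed parity while node $\ell$ does not, which is precisely why the hypothesis carries the factor $(-1)^\ell$ on the right boundary but not on the left.
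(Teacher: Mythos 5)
Your proposal is correct and takes essentially the same approach as the paper's own proof: induction on $t$, using the monotonicity of the simplified offers $\soff{i}{j}$ (non-increasing in $\salf{i}{j}$, non-decreasing in $\salf{j}{i}$) to show that the ordering is preserved at interior messages, together with the hypothesis on the boundary conditions to handle the endpoints. The paper states this tersely; you spell out the parity bookkeeping more explicitly, but the content is the same.
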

\begin{proof}[Proof of Lemma \ref{lemma:monotonicity_simplified}]
We show that $\sbaralf^1 \succeq \sbarbet^1$. This follows for `internal'
messages from the fact that $\soff{i}{j}$ defined in \ref{eq:simplified_offers}
is monotonic non-increasing in $\salf{i}{j}$ and non-decreasing in $\salf{j}{i}$.
$\boundary_{1,\L}^0\geq \boundary_{2,\L}^0$ ensures that $\salf{0}{1}^1\geq \sbet{0}{1}^1$ and
similarly for the other boundary message.

Induction on $t$ completes the proof of the quoted result.
\end{proof}

Let $\barsoff^t(s)$ denote the offers made along edges of $P$ under
the bounding process $\barbound^t(s), \ t \geq 0$.

As discussed in Section \ref{subsubsec:path_descent} we have the following unique fixed point for the bounding process:
\begin{eqnarray}
\bound{i}{i+1}^* = \alf{i}{i+1}^*+s(-1)^i(\Delta-\delta)\, ,\;\;\;\;\;\;\;
\bound{i+1}{i}^* = \alf{i+1}{i}^*-s(-1)^{i}(\Delta -\delta)\, ,
\label{eq:bound_fp}
\end{eqnarray}
for all $i$ in $\{0,1,\ldots,\len-1\}$.

Clearly, $\barbound^*(+) \preceq \barbound^0(+)$. Hence, by Lemma
\ref{lemma:monotonicity_simplified} with $\boundary_{1,\L}^t=\boundary_{2,\L}^t$,
we know that $\barbound^*(+) \preceq \barbound^t(+)$ for all $t \geq 0$.
Also, from (\ref{eq:simp_infty_non_expansion}) we have
$||\barbound^t-\barbound^*||_\infty \leq ||\barbound^0-\barbound^*||_\infty=\delta$
leading to $\barbound^t(+) \preceq \barbound^0(+)$. Combining, for all $t\geq 0$ we have
\begin{align}
\barbound^*(+) \preceq \barbound^t(+) \preceq \barbound^0(+).
\label{eq:bounding_range}
\end{align}

\begin{lemma}\label{lemma:BoundConv}
Let $\{\barbound^t\}_{t\ge 0}$ be the $(s,\Delta,\delta)$-bounding
process on $P$. There exists a finite
$c_\RW$ such that for any $\eps \in (0,\delta)$,
for $t> c_\RW \len^{2.1}\left(1+\log(\delta/\eps)\right)$,
\begin{eqnarray}
||\barbound^t-\baralf^*||_{\infty} \le \Delta-\delta+\eps\, .
\end{eqnarray}
\end{lemma}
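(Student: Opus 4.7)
The plan is to combine two ingredients already at hand: the exponential contraction of the simplified dynamics from Lemma \ref{lemma:simplified_convergence}, and the explicit description of the bounding process fixed point from equation (\ref{eq:bound_fp}). The triangle inequality then yields the bound with only an additional logarithmic overhead in time.

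First I would note that, by its very definition, the $(s,\Delta,\delta)$-bounding process is an instance of the simplified dynamics on $P$ with constant boundary conditions $b_\L$, $b_\R$ and a specific initial condition $\barbound^0$. Comparing $\barbound^0$ to the fixed point $\barbound^*$ given by (\ref{eq:bound_fp}), one sees that for each edge $(i,i{+}1)\in E_P$,
\begin{align*}
\bound{i}{i+1}^0-\bound{i}{i+1}^* &= s(-1)^i\,\delta, &
\bound{i+1}{i}^0-\bound{i+1}{i}^* &= -s(-1)^i\,\delta,
\end{align*}
so $\|\barbound^0-\barbound^*\|_\infty=\delta$.

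Next I would apply Lemma \ref{lemma:simplified_convergence} directly to the bounding process: for any $t\ge 0$,
\begin{equation*}
\|\barbound^t-\barbound^*\|_\infty \le C\,\len^2\exp\!\left(-\frac{t}{C\,\len^2}\right)\delta\, .
\end{equation*}
Requiring the right-hand side to be at most $\eps$ gives $t\ge C\,\len^2\bigl(\log(C\,\len^2)+\log(\delta/\eps)\bigr)$. Since $\log\len$ is dominated by $\len^{0.1}$ up to a universal constant for all $\len\ge 1$, this is implied by $t\ge c_\RW\len^{2.1}(1+\log(\delta/\eps))$ for some finite $c_\RW$ absorbing all the constants.

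Finally, $\barbound^*$ is itself a perturbation of the target $\baralf^*$: from (\ref{eq:bound_fp}) one reads off $\|\barbound^*-\baralf^*\|_\infty=\Delta-\delta$. The triangle inequality then gives
\begin{equation*}
\|\barbound^t-\baralf^*\|_\infty \le \|\barbound^t-\barbound^*\|_\infty+\|\barbound^*-\baralf^*\|_\infty \le \eps+(\Delta-\delta),
\end{equation*}
which is the claim. There is no real obstacle here: the only mildly subtle point is absorbing the $\log\len$ factor arising from the prefactor $C\len^2$ in Lemma \ref{lemma:simplified_convergence} into the exponent $2.1$, which is precisely why the statement is phrased with $\len^{2.1}$ rather than the tighter $\len^2\log\len$.
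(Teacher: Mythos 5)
Your proposal matches the paper's proof exactly: both compute $\|\barbound^0-\barbound^*\|_\infty=\delta$, invoke the exponential contraction of the simplified dynamics (Lemma \ref{lemma:RW}/Lemma \ref{lemma:simplified_convergence}) to make $\|\barbound^t-\barbound^*\|_\infty\le\eps$ after time $c_\RW\len^{2.1}(1+\log(\delta/\eps))$, note $\|\barbound^*-\baralf^*\|_\infty=\Delta-\delta$ from (\ref{eq:bound_fp}), and conclude by the triangle inequality. Your only addition is spelling out how the $\log\len$ prefactor is absorbed into the $\len^{2.1}$ exponent, which the paper leaves implicit; this is correct.
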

\begin{proof}
$||\barbound^0-\barbound^*||_\infty = \delta$. Hence, by
Lemma \ref{lemma:RW} a finite $c$ exists s.t.
\begin{align}
||\barbound^t-\barbound^*||_\infty \leq \epsilon \qquad \forall t \geq c_\RW \len^{2.1}\left(1+\log(\delta/\eps)\right).
\end{align}
Also, $||\barbound^*-\baralf^*||_\infty = \Delta -\delta$.
The result follows.
\end{proof}
\begin{proof}[Proof of Lemma \ref{lemma:CorrectMatching}]
It follows from (\ref{eq:bounding_range}) that for any $(i,j)\in E_P$ and all $t\geq 0$,
\begin{align}
|(\bound{i}{j}^t+\bound{j}{i}^t) - (\alf{i}{j}^*+\alf{j}{i}^*)|.
\label{eq:bound_delta_close}
\end{align}
Combining Eqs.~(\ref{eq:bound_delta_close}), and (\ref{eq:KTstruct_fp}), and
noting that $\delta \le \sigma \le \sigma_l$ we obtain our result.
\end{proof}

\begin{proof}[Proof of Lemma \ref{lemma:BoundingBounds}]
We proceed by induction. Clearly, Eq.~(\ref{eq:real_path_bound}) holds at
$t=0$. Suppose it holds at time $t$. We prove $\baralf^{t+1}_P\le \barbound^{t+1}(+)$ here, the other
inequality follows similarly. We drop $s=(+)$ from the notation henceforth for convenience.

Define $\tbarbound^t$ by $\tbound{i}{j}^t = \max\big( \min(\bound{i}{j}^t,W),0\big)$ for all $(i,j) \in E_P$ i.e. $\tbarbound^t$
is $\barbound^t$ after thresholding it to keep it in the valid range $[0,W]^{2|E|}$.
Clearly,
\begin{align}
\baralf^t_P\preceq \tbarbound^t(+)\preceq \barbound^t(+)\, .
\label{eq:path_boundthresh_sandwich}
\end{align}
In words, the right inequality says that thresholding can only reduce the
positively bounding components of $\barbound^t$ and only increase the
negatively bounding components.
Let $\barbet^t=(\baralf^t_{\cG- E_P}, \tbarbound^t)$.
Note that $\barbet^t$ is a valid message vector.
It follows from (\ref{eq:bounding_range}) and (\ref{eq:path_boundthresh_sandwich})
that $\Bo(q,\delta)$ (cf. definition \ref{def:Bo_overloaded}) holds for $\barbet^t$.
We let $\barbet^t$ evolve for
one time step under the original dynamics.
Let the resulting messages on the path be $\bartightbound^{t+1}(+)$.
By Lemma \ref{lemma:monotonicity} and the first inequality in (\ref{eq:path_boundthresh_sandwich}),
 we know that $\bartightbound^{t+1}(+) \succeq \baralf^{t+1}_P$.
We show that $\barbound^{t+1}(+) \succeq \bartightbound^{t+1} (+)$.

Consider a node $i$ on $P$ with $i$ even, $0<i<\len$. We show that $\bound{i}{i+1}^{t+1} \geq
\tightbound{i}{i+1}^{t+1}$. Firstly, we know that $\soff{i-1}{i}^t(+) \geq \off{i-1}{i}^* + (\Delta-\delta)$
from the first inequality in (\ref{eq:bounding_range}).
We have,
\begin{align*}
\soff{i-1}{i}^t(+)&\stackrel{(a)}{=}w_{(i-1,i)}-\bound{i-1}{i}^t-\left(\frac{w_{(i-1,i)}-\bound{i-1}{i}^t-\bound{i}{i-1}^t}{2}\right)_+\\
&\stackrel{(b)}{=}\left(w_{(i-1,i)}-\bound{i-1}{i}^t\right)_+-\left(\frac{w_{(i-1,i)}-\bound{i-1}{i}^t-\bound{i}{i-1}^t}{2}\right)_+\\
&\stackrel{(c)}{\geq} \left(w_{(i-1,i)}-\tbound{i-1}{i}^t\right)_+-\left(\frac{w_{(i-1,i)}-\tbound{i-1}{i}^t-\tbound{i}{i-1}^t}{2}\right)_+\\
&\stackrel{(d)}{=} \off{i-1}{i}|_{\bartightbound^t}
\end{align*}
(a) follows from Lemma
\ref{lemma:CorrectMatching}. (b) follows from $ \soff{i-1}{i}^t(+) > 0 \Rightarrow w_{(i-1,i)}-\bound{i-1}{i}^t > 0$.
(c) holds since $\tbound{i-1}{i}^t \geq \bound{i-1}{i}^t, \tbound{i}{i-1}^t \leq \bound{i}{i-1}^t$.
(d) follows since $\bound{i}{i-1}^t \geq \alf{i}{i-1}^t \geq 0$.
From Lemma \ref{lemma:outsideOff_ub} for all $j \in \partial i \backslash i-1$ we have,
\begin{align*}
\off{j}{i}^t|_{\bartightbound^t} \leq
\gamma_i^*-\sigma_q-\delta+\Delta \leq \off{i-1}{i}^* + \Delta - \delta \leq \soff{i-1}{i}^t(+).
\end{align*}
Thus, $\soff{i-1}{i}^t(+) \geq \max_{j \in \partial i\backslash i+1} \off{j}{i}^t|_{\bartightbound^t} $. Also,
 $\bound{i}{i+1}^{t} \geq \tbound{i}{i+1}^{t}=\bet{i}{i+1}^{t}$.
Hence, $\bound{i}{i+1}^{t+1} \geq \tightbound{i}{i+1}^{t+1}$.

Now consider a node $i$ on $P$ with $i$ odd, $0<i<\len$. We show that $\bound{i}{i+1}^{t+1} \leq
\tightbound{i}{i+1}^{t+1}$. Under the simplified dynamics, in light of Lemma
\ref{lemma:CorrectMatching}, we know that
\begin{align*}
\soff{i-1}{i}^t&=w_{(i,i-1)}-\bound{i-1}{i}^t-\left(\frac{w_{(i,i-1)}-\bound{i-1}{i}^t-\bound{i}{i-1}^t}{2}\right)_+\\
&\leq (w_{(i,i-1)}-\tbound{i-1}{i}^t)_+-\left(\frac{w_{(i,i-1)}-\tbound{i-1}{i}^t-\tbound{i}{i-1}^t}{2}\right)_+\\
&= \off{i-1}{i}|_{\bartightbound^t(+)}\\
&\leq \max_{j \in \partial i\backslash i+1} \off{j}{i}^t|_{\bartightbound^t}
\end{align*}
where we used $\tbound{i-1}{i}^t \leq \bound{i-1}{i}^t$, $\tbound{i}{i-1}^t \geq \bound{i}{i-1}^t$.
Also,
 $\bound{i}{i+1}^{t} \leq \tbound{i}{i+1}^{t}=\bet{i}{i+1}^{t}$.
Hence, $\bound{i}{i+1}^{t+1} \leq \tightbound{i}{i+1}^{t+1}$.

The required inequalities for $\bound{i}{i-1}^{t+1}$, can be proved similarly.

This leaves us with checking that $\tightbound{0}{1}^{t+1} \leq \bound{0}{1}^{t+1}$
and $(-1)^\len \bound{\len}{\len-1}^{t+1} \geq (-1)^\len \tightbound{\len}{\len-1}^{t+1}$.
We prove the second inequality and the first one follows from repeating the analysis for the $\len$ even case.

If $\len \notin V(\cC_q) \Rightarrow \len \in \cG_q$ and the result follows from Lemma \ref{lemma:CondStrucPersistence}
applied to the initial state $\barbet^t$ for one step along with (\ref{eq:bounding_range}).
Else, $\len \in V(\cC_q) \Rightarrow (\len,\len-1) \in M^*, \gamma_\len^*=\sigma_q, \alf{\len}{\len-1}^*=0$.

Case (i): $\len$ odd\\
Follows from $\bound{\len}{\len-1}^{t+1} \leq 0 \leq \tightbound{\len}{\len-1}^{t+1}$.

Case (ii): $\len$ even\\
From Lemma \ref{lemma:outsideOff_ub} we  obtain
$\off{j}{\len}|_{\bartightbound^t} \leq (\Delta -\delta) \quad \forall\, j \in \partial \len \backslash \len-1$.
Also, $\bet{\len}{\len-1}^t = \tbound{\len}{\len-1}^t \leq \bound{\len}{\len-1}^t$.
$\tightbound{\len}{\len-1}^{t+1} \leq \bound{\len}{\len-1}^{t+1}$ follows.

$\barbound^{t+1}(-) \preceq \baralf^t_P$ follows similarly. Induction on $t$ completes the proof.
\end{proof}
We also generalize our analysis to the case  for use in analysis the cases of blossom and bicycle.
Suppose $P$ is a subgraph of $(V_{\textup{ext}}(\cC_q), E(\cC_q))$, with the subset $V_P$ of nodes in
$V_{\textup{ext}}$ appropriately given the aliases $0,1,\ldots, \len$. We define the bounding
processes as in Section \ref{subsubsec:path_descent} on P, using the matching $M_P^*$ that is the restriction of $M^*$ to $P$. We prove the following refinement of Lemma \ref{lemma:BoundingBounds}.

\begin{lemma}\label{lemma:BoundingBounds_refined}
Let $G$ be an instance admitting a unique NB solution with
gap $\sigma$, and assume its KT sequence
to be $(\cC_0, \cC_1,\cC_2,\dots,\cC_k)$. Let
path $P=P_\len$ be a subgraph of $(V_\textup{ext}(\cC_q), E(\cC_q))$
for some  $q \in \{1,\ldots, k\}$, such that
$\partial P= \{(i,j): i \in V_\textup{ext}(\cC_q)
\backslash V(P), j \in V(P), (i,j) \in E(\cC_q)\}$ is identical
to $\partial P_\textup{end}= \{(i,j): i \in V_\textup{ext}(\cC_q)
\backslash V(P), j \in \{0,\len\}, (i,j) \in E(\cC_q)\}$ i.e.
$P$ touches the rest of $\cC_q$ only at its ends.
Assume $\Bo_\Delta(q,\delta,0)$ holds. Also,
suppose that
\begin{align*}
U_{G,\partial P}(\baralf^t) \leq \Delta -\delta' \qquad  \forall t \geq 0
\end{align*}
for $\delta' \in [0, \delta]$.
If we denote by $\{\barbound^t(s)\}_{t\ge 0}$
the $(s,\Delta,\delta')$ bounding process on $P$, then for any
$t\ge 0$:
\begin{eqnarray}
\barbound^t(-)\preceq \baralf^t_P\preceq \barbound^t(+)\, .
\label{eq:real_path_bound_refined}
\end{eqnarray}
\end{lemma}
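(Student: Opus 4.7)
The plan is to establish Lemma \ref{lemma:BoundingBounds_refined} by induction on $t$, following the template of the proof of Lemma \ref{lemma:BoundingBounds}, but with the endpoint analysis revisited to account for the fact that $P$ is only a sub-path of $\cC_q$ and so its endpoints $0$ and $\len$ may couple to the rest of $\cC_q$ through the edge set $\partial P$. The base case $t=0$ is immediate: $\Bo_\Delta(q,\delta,0)$ gives $|\alf{i}{j}^0-\alf{i}{j}^*|\le\Delta$ on every $(i,j)\in E_P$, which matches the initial deviation of the $(s,\Delta,\delta')$-bounding process (recall $\delta'\le\delta$).

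For the inductive step, I focus on the upper sandwich $\baralf^{t+1}_P\preceq\barbound^{t+1}(+)$; the lower one is symmetric. Exactly as in the original proof, I threshold $\barbound^t(+)$ into $[0,W]^{2|E|}$ to obtain a valid vector $\tbarbound^t$ that still dominates $\baralf^t_P$ in the partial order, extend it by $\baralf^t$ off $E_P$ to a globally valid message vector $\barbet^t$, and run one step of the original dynamics on $\barbet^t$ to obtain messages $\bartightbound^{t+1}$ on $P$. Lemma \ref{lemma:monotonicity} then yields $\bartightbound^{t+1}\succeq\baralf^{t+1}_P$, reducing the task to verifying $\barbound^{t+1}(+)\succeq\bartightbound^{t+1}$ edge by edge. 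For edges incident only to interior nodes of $P$, this verification is verbatim the one in Lemma \ref{lemma:BoundingBounds}: Lemma \ref{lemma:CorrectMatching} resolves the $(\cdot)_+$ terms in the simplified offer formula, coordinatewise monotonicity of the offer function absorbs the thresholding step, and the single non-$P$ incoming direction at an interior node lies in $V\setminus(V(\cC_q)\cup\cG_q)$ and is controlled by Lemma \ref{lemma:outsideOff_ub}.

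The main obstacle, and the only genuinely new step, is the analysis at the endpoints of $P$. At, say, $\len$, the incoming neighbors split into three types: (i) $\len-1\in V(P)$, handled by the simplified-dynamics offer together with the inductive bound $\bet{\len}{\len-1}^t\le\bound{\len}{\len-1}^t$; (ii) neighbors in $V\setminus V_{\textup{ext}}(\cC_q)$, controlled by Lemma \ref{lemma:outsideOff_ub} exactly as in the original proof; and (iii) neighbors $j\in V_{\textup{ext}}(\cC_q)\setminus V(P)$ with $(j,\len)\in\partial P$, which is the new type. For (iii), since $\len$'s unique $M^*$-partner lies in $V(P)$ the edge $(j,\len)$ is not in $M^*$, so Eq.~(\ref{eq:KTstruct_fp}) pins $\alf{j}{\len}^*+\alf{\len}{j}^*=w_{j\len}+\sigma_q$ and hence $\off{j}{\len}^*=\gamma_\len^*-\sigma_q$. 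Exploiting that the offer function is non-increasing in $\alf{j}{\len}$ and non-decreasing in $\alf{\len}{j}$, combined with the hypothesis $|\alf{j}{\len}^t-\alf{j}{\len}^*|\le\Delta-\delta'$ on $\partial P$, the global bound $|\alf{\len}{j}^t-\alf{\len}{j}^*|\le\Delta$, and the positive gap $\sigma_q\ge\sigma\ge\delta'$ to keep the worst-case shifted argument in the regime $x+y\ge w_{j\len}$, one derives $\off{j}{\len}^t|_{\barbet^t}\le\gamma_\len^*-\sigma_q+(\Delta-\delta')$. This is precisely the quantity required for the type-(iii) offers to be absorbed into the boundary contribution of the $(+,\Delta,\delta')$-bounding process at $\len$, closing the update step at the endpoint. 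A symmetric argument handles node $0$ and the lower sandwich, and induction on $t$ completes the proof.
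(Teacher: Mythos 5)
Your plan mirrors the paper's own proof: induct on $t$, threshold $\barbound^t(+)$ into $[0,W]$, extend by $\baralf^t$ outside $E_P$, advance one step of the true dynamics, and invoke Lemma \ref{lemma:monotonicity}; the interior edges are then handled verbatim by the Lemma \ref{lemma:BoundingBounds} machinery, and the only new content is at the endpoints $0,\len$. Your taxonomy of incoming offers at $\len$ into types (i)--(iii) is the right decomposition, and your computation of the upper bound $\off{j}{\len}^t|_{\barbet^t}\le\gamma_\len^*-\sigma_q+(\Delta-\delta')$ for type~(iii) edges is correct (the monotonicity of the offer function plus the hypothesis $|\alf{j}{\len}^t-\alf{j}{\len}^*|\le\Delta-\delta'$ together with the check $\gamma_j^*+\gamma_\len^*\ge w_{j\len}+\sigma_q$ does place the worst-case arguments in the regime $x+y\ge w_{j\len}$).

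However, there is a genuine gap at the endpoint: establishing $\barbound^{t+1}(+)\succeq\bartightbound^{t+1}$ at $\len$ requires a \emph{two-sided} control of $\max_{j\in\partial\len\setminus(\len-1)}\off{j}{\len}^t$, not just the upper bound you derive. Concretely, the boundary message $\bound{\len}{\len-1}^{t+1}(+)$ tracks the constant $b_\R=\alf{\len}{\len-1}^*+ (-1)^\len(\Delta-\delta')$; when $\len$ is even the sandwich direction demands $\max_j\off{j}{\len}^t\le\alf{\len}{\len-1}^*+(\Delta-\delta')$ (your bound), but when $\len$ is odd the direction flips and one must prove the \emph{lower} bound $\max_j\off{j}{\len}^t\ge\alf{\len}{\len-1}^*-(\Delta-\delta')$. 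The paper's proof explicitly covers both directions by proving the absolute-value bound $|\max_{j\in\partial 0\setminus 1}\off{j}{0}^t-\alf{0}{1}^*|\le\Delta-\delta'$, and the lower bound hinges on observing that some $(v,0)\in\partial P$ achieves $\off{v}{0}^*=\alf{0}{1}^*$ so that the $\partial P$ hypothesis yields $\off{v}{0}^t\ge\off{v}{0}^*-(\Delta-\delta')$. Your write-up asserts "this is precisely the quantity required" after only the one-sided estimate, which is incomplete.

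Beyond the parity/lower-bound issue, you have also silently restricted to the case where $\len\in V(\cC_q)$ and $\len$ has at least one $\partial P$ edge (this is what licenses the claim "$\len$'s unique $M^*$-partner lies in $V(P)$" and the identification $\alf{\len}{\len-1}^*=\gamma_\len^*-\sigma_q$). Two further cases must be dispatched. If $\len\notin V(\cC_q)$ then $\len\in\cG_q$ and neither Eq.~(\ref{eq:KTstruct_fp}) nor the $M^*$-partner claim applies; the paper instead invokes Lemma \ref{lemma:CondStrucPersistence} (applied to $\barbet^t$ for one step) together with Eq.~(\ref{eq:bounding_range}) to close the boundary. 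And if $\len\in V(\cC_q)$ but $\len$ has no $\partial P$ neighbor (the paper's "$\alf{0}{1}^*=0$" case), the lower bound becomes trivial but the upper bound needs the direct $\off{j}{\len}^t\le(\Delta-\sigma)_+$ from Lemma \ref{lemma:outsideOff_ub}. These are the actual novelties of the refined lemma over the original one, so they need to be spelled out rather than folded into the type-(iii) computation.
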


\begin{proof}
$\Bo_\Delta(q,\delta,0) \Rightarrow \Bo_\Delta(q,\delta',0)$.
The proof here is almost identical to the one for Lemma \ref{lemma:BoundingBounds},
with $\delta'$ replacing $\delta$.
We use induction on $t$. The required inequalities at $t+1$ for internal messages
are proved exactly as before. This leaves us with the boundary messages.

We must show $\bound{0}{1}^{t+1}(-) \leq \alf{0}{1}^{t+1} \leq \bound{0}{1}^{t+1}(+)$.
$\bound{0}{1}^{t}(-) \leq \alf{0}{1}^{t} \leq \bound{0}{1}^{t}(+)$ holds by hypothesis.
Thus it suffices to show that
\begin{align}
|\max_{j \in \partial 0 \backslash 1} \off{j}{0}^t - \alf{0}{1}^*| \leq \Delta -\delta'
\label{eq:bestoffer_to0_bound}
\end{align}
If $0 \notin V(\cC_q) \Rightarrow 0 \in \cG_q$ and the result follows from Lemma \ref{lemma:CondStrucPersistence}
applied to the initial state $\baralf^t$ for one step along with (\ref{eq:bounding_range}).
Else, $0 \in V(\cC_q) \Rightarrow 0 \in M^*$. Two cases arise.

Case (i): $\alf{0}{1}^* > 0$\\
In this case, there exists a $(v,0) \in \partial P$ for some $v \in V_{\textup{ext}}(\cC_q) \backslash P$
s.t. $\off{v}{0}^*=\alf{0}{1}^* \geq \gamma_0^*-\sigma_q$.\\
$\off{v}{0}^t \geq \off{v}{0}^* - (\Delta -\delta')$
yields the lower bound in (\ref{eq:bestoffer_to0_bound}).\\
For any $(j,0) \in \partial P$, we have
$\off{j}{0}^t \leq \off{j}{0}^*+ \Delta -\delta' \leq \alf{0}{1}^*+ \Delta -\delta'$.
For any $(j,0) \notin \partial P \cup \{(1,0)\}$, Lemma \ref{lemma:outsideOff_ub} yields
$\off{j}{0}^t \leq (\gamma_0^*-\sigma_q-\sigma +\Delta)_+ \leq \alf{0}{1}^*+ \Delta -\delta'$.
Thus we have the required upper bound.

Case (ii): $\alf{0}{1}^* = 0$\\
In this case there is no $(v, 0) \in \partial P$, and we know that
$(0,1) \in M^*$ and $\gamma_0^*=\sigma_q$.
The lower bound is trivial in this case.
Lemma \ref{lemma:outsideOff_ub} leads to
$\off{j}{0}^t \leq (\Delta - \sigma)_+ \,, \forall \, j \in \partial 0 \backslash 1$.
The upper bound follows. The inequality
\[
(-1)^\len\bound{\len}{\len-1}^{t+1}(-) \leq (-1)^\len\alf{\len}{\len-1}^{t+1} \leq (-1)^\len\bound{\len}{\len-1}^{t+1}(+)
\]
follows similarly.
\end{proof}

%
%
%
%
\subsection{Blossom convergence}
\label{app:Blossom}

The blossom consists of a `stem' and a `cycle'. There is a node
shared by the stem and the cycle with its matching partner in the
stem; call this node $0$. Starting at $0$ and going down the stem to its anchored
end, we label the next node $1_\s$, the one after that $2_\s$ and so on up to $(\len_\s)_\s$,
where $\len_\s$ is the number
of edges in the stem. We denote by $P_\s$ the path $0-1_\s-2_\s- \ldots -(\len_\s)_\s$.
The matching of interest in $P_\s$ is always $M_{P_\s}^*=((0,1_\s), (2_\s,3_\s), \ldots )$
with every alternate edge matched.
Similarly, starting at $0$ and going around the cycle in
an arbitrary direction, we label the next node $1_{\c}$, the one after that $2_\c$,
and so on up to $(\len_\c-1)_\c$ after which we return to $0$.
Here $\len_\c$ is the (odd) number of edges in the cycle.
The cycle can be thought of as a path that is closed on itself i.e.
the two ends of the path are the same. Denote by $P_{\c}$ this path
$0_\c - 1_{\c} - 2_\c- \ldots -(\len_\c-1)_\c-(\len_\c)_c$, where $0_\c$ and $\len_\c$
 are copies of the node $0$. The matching of interest in
 $P_\c$ is $M_{P_\c}^*=\{(i_\c,(i+1)_\c): 1 \leq i \leq (\len_\c-2), i \mbox{ odd}\}$.
 Note that both end edges $(0_\c,1_{\c})$ and $((\len_\c-1)_\c,(\len_\c)_c)$ i.e. the edges
 $(0,1_{\c})$ and $((\len_\c-1)_\c,0)$ in the blossom are unmatched.

Let $G$ be an instance admitting a unique NB solution
$\baralf^*$ with
gap $\sigma$. Suppose $\cC_q$ corresponds to a blossom i.e.
$(V_{\textup{ext}}(\cC_q), E(\cC_q))$ is a blossom with stem $P_\s$ and
cycle mapped to $P_\c$. Nodes in
$V_{\textup{ext}}$ are appropriately given the aliases
$0, 1_\s, \ldots, (\len_\s)_\s, 1_\c, 2_\c, \ldots, (\len_\c)_\c$.
$0_\c$ and $\len_\c$ are copies of node $0$.

We develop a new bounding process for the path $P_\c$ and show that
this bounding process `contracts' reducing the error on $P_\c$.
The fact that the cycle is odd plays a critical
role in this contraction. Next, the analysis
from Section \ref{subsubsec:path_descent} is used to show reduction of the error in $P_\s$.
We iterate the reduction in error in the stem and cycle to show that the overall
error in the blossom reduces.

In the rest of this section, we sometimes drop the subscript $\c$ eg. Node $1$ refers to node $1_\c$, since
the new results developed are for $P_\c$ and not $P_\s$.

Let $M_{P_\c}^*$ be the restriction of $M^*$ to $P_\c$.
We know that $\len=\len_\c$ is odd.
Assume $\Delta>0$, $\delta\le \min(\sigma,\Delta)$, $s\in \{+1,-1\}$
and $\Times=(\Times_+,\Times_-)$ be a partition of $\naturals \cup \{0\}$
(i.e. $\Times_+\cup\Times_- = \naturals \cup \{0\}$, $\Times_+\cap\Times_- = \emptyset$).
Let $\delta_\c, \delta_\s \in (0, \delta]$.
The $(s,\Delta,\delta,\Times,\delta_\c,\delta_\s)$-bounding process on $P_\c$ is
the sequence of message assignments $\{\barbound^t\}_{t\ge 0}$ produced by the simplified
dynamics on $P_\c$ wrt $M_{P_\c}^*$, and
the boundary conditions
\begin{eqnarray}
b_\L^t = \bound{0}{1}^*(s)+s\, \delta_\s\, ,\qquad
b_\R^t = \bound{\len}{\len-1}^*(s)\, , \qquad \mbox{if $t\in\Times_+$,} \label{eq:boundary_tplus}\\
b_\L^t =\bound{0}{1}^*(s)\, ,\qquad
b_\R^t = \bound{\len}{\len-1}^*(s)-s\delta_\s\, , \qquad \mbox{if $t\in\Times_-$.}
\label{eq:boundary_tminus}
\end{eqnarray}
with the initial condition
(for $i\in\{0,\dots,\len-1\}$)
\begin{eqnarray}
\bound{i}{i+1}^0 = \bound{i}{i+1}^*(s)+s(-1)^i\delta_\c\, ,\;\;\;\;\;\;\;
\bound{i+1}{i}^0 = \bound{i+1}{i}^*(s)-s(-1)^{i}\delta_\c\, ,
\end{eqnarray}
where $\barbound^*(s)$ is given by Eq.~(\ref{eq:bound_fp}).

The fact that only
one of the boundary conditions is different from $\barbound^*$ at each time is instrumental
in the `contraction' of this bounding process. The odd length of the cycle is crucial
in ensuring that such a bounding process, with a `helpful' boundary condition at atleast
one end, sandwiches the actual dynamics (cf. Lemma \ref{lemma:BoundingBounds_pc}).

\begin{lemma}
Let  $\{\barbound^t\}_{t\ge 0}$ be the $(s,\Delta,\delta,\Times,\delta_\c,\delta_\s)$-bounding process
on $P_\c$ wrt $M_{P_\c}^*$ for some $\delta \leq \min(\sigma,\Delta)$ and $\delta_\c, \delta_\s \in (0, \delta]$.
Then for arbitrary $\Times=(\Times_+,\Times_-)$, $\exists c_1$ finite and $c_2 >0$ s.t. for any
$\eps >0$,

\begin{align}
||\barbound^t-\barbound^*||_\infty &\leq \delta_\s + \eps\\
\max(|\bound{1}{0}^t - \bound{1}{0}^*|, |\bound{\len-1}{\len}^t-\bound{\len-1}{\len}^*|) &\leq \delta_\s \left(1-\frac{c_2}{\len^{3}}\right) +\eps
\end{align}
hold $\forall \, t \geq c_1 \len^{2.1}\lceil\log(1+\delta_c/\epsilon)\rceil$.
\label{lemma:blossom_cycle_bounding_descent}
\end{lemma}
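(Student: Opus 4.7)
The plan is to exploit the affine structure of the simplified dynamics on $P_\c$: the offer rules in (\ref{eq:simplified_offers}) are linear in $\sbaralf^t$ up to the constants $w_{ij}$, so the deviation $\sbarDelta^t \equiv \barbound^t-\barbound^*(s)$ obeys a linear, time-homogeneous recursion driven by boundary inputs. By superposition, I would decompose $\sbarDelta^t = \sbarDelta^t_{\mathrm{I}} + \sbarDelta^t_{\mathrm{B}}$, where $\sbarDelta^t_{\mathrm{I}}$ is the response to the prescribed initial perturbation of magnitude $\delta_\c$ under the unperturbed constant boundary conditions $b_\L^t \equiv \bound{0}{1}^*(s)$, $b_\R^t \equiv \bound{\len}{\len-1}^*(s)$, and $\sbarDelta^t_{\mathrm{B}}$ is the response to the actual time-varying boundary perturbations dictated by $\Times$ but with zero initial deviation.

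First I would bound $\sbarDelta^t_{\mathrm{I}}$ directly by Lemma~\ref{lemma:RW}: $\|\sbarDelta^t_{\mathrm{I}}\|_\infty \le C\len\exp(-t/(C\len^2))\,\delta_\c$, which is at most $\eps/2$ once $t\ge c_1\len^{2.1}\lceil\log(1+\delta_\c/\eps)\rceil$ for a sufficiently large $c_1$. Next I would bound $\sbarDelta^t_{\mathrm{B}}$ by decomposing it further into a superposition, over all $\tau<t$, of the responses to a unit boundary pulse applied at one of the two endpoints at time $\tau$. Each such pulse has magnitude at most $\damp\,\delta_\s$ and affects only one boundary coordinate; iterating the non-expansion bound (\ref{eq:simp_infty_non_expansion}) yields that the cumulative response is bounded in sup-norm by $\delta_\s$. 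Adding the two contributions proves the first inequality of the lemma.

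The harder step is the strict contraction at the near-boundary messages $\bound{1}{0}^t$ and $\bound{\len-1}{\len}^t$. Following the killed lazy random walk picture introduced in the proof of Lemma~\ref{lemma:RW}, each unit of left-boundary injection at time $\tau$ corresponds to a particle launched at the directed edge ``$(0,1)$-right'' and each right-boundary injection to a particle at ``$(\len-1,\len)$-left''; the quantity $|\sDel{1}{0}^t|$ is bounded above by the total probability mass present at the directed edge ``$(0,1)$-left'' at time $t$. A particle coming from $b_\L$ must first undergo a direction reversal at a matching edge (which happens with probability $\damp/2$ per visit), while a particle coming from $b_\R$ must traverse the length of the path; in either case, by a standard gambler's-ruin/survival estimate for the killed lazy walk on a segment of length $\len$, the probability of being absorbed at the opposite boundary during an excursion of duration $O(\len^2)$ is $\Omega(1/\len)$. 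Consequently, within the window of length $O(\len^{2.1})$ on which $\sbarDelta^t_{\mathrm{I}}$ has already decayed below $\eps$, at least a fraction $c_2/\len^{3}$ of the injected mass has been killed before it can ever reach the ``$(0,1)$-left'' state. Combined with the $\eps$ slack this yields the second inequality; exchanging the roles of left and right handles $\bound{\len-1}{\len}^t$.

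The main obstacle I anticipate is making this mass-loss estimate uniform over the adversarial partition $\Times=(\Times_+,\Times_-)$: with rapid alternation the two endpoint sources fire in a correlated time pattern, so one cannot simply sum worst-case per-source contributions. The cleanest route I see is to write the response as the action of the Green's function of the killed lazy walk on the driving measure, extremize over the sign patterns of the injection sequence permitted by $\Times$ at each endpoint, and then invoke the $\Theta(1/\len^2)$ spectral gap of the killed walk to convert the resulting steady-state contraction into the claimed finite-time bound.
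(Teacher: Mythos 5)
Your decomposition of the deviation into an initial-condition response $\sbarDelta^t_{\mathrm{I}}$ (bounded via Lemma~\ref{lemma:RW}) and a boundary-driven response $\sbarDelta^t_{\mathrm{B}}$ (bounded by $\delta_\s$ via the total-mass $\le 1$ property of the associated random-walk process) is a sound route to the first inequality, $\|\barbound^t-\barbound^*\|_\infty\le\delta_\s+\eps$, and matches the paper's reduction to Lemma~\ref{lemma:Pc_RW_bound} combined with Lemma~\ref{lemma:simplified_convergence}.

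The gap is in the second inequality, the strict $(1-c_2/\len^3)$ contraction at the near-boundary edges, and it is exactly at the point you yourself flag as the ``main obstacle.'' Your argument is that ``at least a fraction $c_2/\len^3$ of the injected mass has been killed before it can ever reach the $(0,1)$-left state.'' But mass loss at the boundaries does not by itself imply that the mass residing at $(1,0)$ falls below the stationary value: when both ends inject at every step, the stationary mass at each directed edge is exactly $1$ even though mass is continually killed. What must be shown is that under an adversarial partition $\Times$, the two quantities $\mass{1}{0}$ and $\mass{\len-1}{\len}$ cannot \emph{simultaneously} be close to $1$, and the appeal to ``Green's function'' and ``spectral gap'' does not supply that joint bound uniformly over $\Times$. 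The paper closes this gap with a symmetry argument that your proposal does not contain and that is where the hypothesis that $\len$ is odd enters essentially: left-right inversion is an automorphism of the path-with-matching precisely because $\len$ is odd, and it sends the process with partition $\Times$ to the process with the inverted partition, yielding $\mass{\len-1}{\len}^0=\imass{1}{0}^0$. Writing $\max=\mathrm{sum}-\min$, the sum $\mass{1}{0}^0+\imass{1}{0}^0=1$ follows by linearity (it equals the both-ends-always-injecting stationary mass), so the contraction reduces to showing $\min(\mass{1}{0}^0,\imass{1}{0}^0)\ge c_2/\len^3$. That last step is the one you gesture at with gambler's ruin: since at any fixed time $\tau$ exactly one end injects in each of the two processes and they receive the complementary contributions $p_\L(\tau)$ and $p_\R(\tau)$, it suffices to exhibit a single $\tau$ with $\min\bigl(p_\L(\tau),p_\R(\tau)\bigr)\ge c_2/\len^3$, which is a standard survival estimate for the killed lazy walk. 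Without the inversion symmetry and the $\mathrm{sum}=1$, $\mathrm{max}=1-\mathrm{min}$ reformulation, the per-source ``killed mass'' heuristic cannot be turned into the required uniform bound.
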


\begin{proof}
After taking differences with respect to
$\barbound^*$, $\sbarDelta^t=\barbound^t-\barbound^*$,  the initialization becomes
\begin{eqnarray}
\sDel{i}{i+1}^0 = s(-1)^i\delta_\c\, ,\;\;\;\;\;\;\;
\sDel{i+1}{i}^0 = -s(-1)^{i}\delta_\c\, ,\qquad \mbox{for $i\in\{0,\dots,\len-1\}$}
\end{eqnarray}
the boundary conditions
\begin{eqnarray}
\hat{b}_\L^t = s\, \delta_\s\, ,\;\;\;\;\;\;\;
\hat{b}_\R^t = 0\, ,&& \mbox{if $t\in\Times_+$,}\\
\hat{b}_\L^t =0\, ,\;\;\;\;\;\;\;
\hat{b}_\R^t = -s\delta_\s\, ,&& \mbox{if $t\in\Times_-$.}
\end{eqnarray}
and the dynamics reduces to
(\ref{eq:matching_simple_path}), (\ref{eq:nonmatch_simple_path}) for internal edges,
and
\begin{align}
\sDel{0}{1}^{t+1} &= \damp \hat{b}_\L^t +(1-\damp)\sDel{0}{1}^t\\
\sDel{\len}{\len - 1}^{t+1} &= \damp \hat{b}_\R^t +(1-\damp)\sDel{\len}{\len-1}^t
\end{align}
 for boundary edges, for $t \geq 0$.

Thus, in order to prove Lemma \ref{lemma:blossom_cycle_bounding_descent}, it suffices
to prove  the Lemma \ref{lemma:Pc_RW_bound} which is then combined with the contraction result
in Lemma \ref{lemma:simplified_convergence}.
\end{proof}

\begin{lemma}
Consider a path $P_\len$ (with no edge weights) for $\len$ odd, a valid path matching $M_P$
and arbitrary partition
$\Times=(\Times_+,\Times_-)$ of $\mathbb{N}\cup\{0\}$. Consider the dynamics on the vector
$\barmass$ defined by
(\ref{eq:matching_simple_path_bound}) and (\ref{eq:nonmatch_simple_path_bound})
for interior messages and
\begin{align}
\mass{0}{1}^{t+1}&=\damp \tilde{b}_\L^t +(1-\damp)\mass{0}{1}^t\, ,\\
\mass{\len}{\len-1}^{t+1}&=\damp \tilde{b}_\R^t +(1-\damp)\mass{\len}{\len-1}^t\, ,
\end{align}
for boundary messages, with
\begin{align}
\tilde{b}_\L^t=1\,,\qquad \tilde{b}_\R^t=0 \qquad \mbox{if $t\in\Times_+$,}\\
\tilde{b}_\L^t=0\,,\qquad \tilde{b}_\R^t=1 \qquad \mbox{if $t\in\Times_-$.}
\end{align}
Assume the initial condition to be
$\barmass^0=\underline{0}$. Then for all $t \geq 0$ we have
\begin{align}
\mass{i}{j}^t &\leq 1 \qquad \forall\,  (i,j) \in P\, ,\label{eq:Bound1RW}\\
\max(\mass{1}{0}^t, \mass{\len-1}{\len}^t) &\leq 1-\frac{c_2}{\len^{3}}\, .
\label{eq:Bound2RW}
\end{align}
for some $c_2 > 0$ not dependent on $\len$.
\label{lemma:Pc_RW_bound}
\end{lemma}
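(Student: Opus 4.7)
The plan is to prove (\ref{eq:Bound1RW}) by elementary induction and (\ref{eq:Bound2RW}) via the dual random walk interpretation used in the proof of Lemma~\ref{lemma:simplified_contraction}.

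For (\ref{eq:Bound1RW}), I induct on $t$. The initial condition $\barmass^0=\underline{0}$ trivially satisfies the bound. For the inductive step, each update rule writes $\mass{i}{j}^{t+1}$ as a convex combination of values $\mass{k}{l}^t$ (bounded by $1$ by the inductive hypothesis) and of boundary inputs $\tilde{b}_\L^t,\tilde{b}_\R^t\in\{0,1\}$, with coefficients $\damp,\damp/2,1-\damp$ that sum to $1$. Hence $\mass{i}{j}^{t+1}\le 1$.

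For (\ref{eq:Bound2RW}), I view $\barmass^t$ as the occupation of a killed Markov chain on the oriented edges of $P_\len$, obtained by transposing the update matrix. The walker stays put with probability $1-\damp$; on a non-matching oriented edge it moves one step forward (keeping direction) with probability $\damp$; on a matching oriented edge it moves forward with probability $\damp/2$ and reverses direction with probability $\damp/2$. The two states $\mass{1}{0}$ and $\mass{\len-1}{\len}$ are absorbing at rate $\damp$, because the boundary updates for $\mass{0}{1}$ and $\mass{\len}{\len-1}$ do not accept any reversing contribution and the endpoints of the path block forward motion. Linearity gives
\[
\mass{1}{0}^t\;=\;\damp\sum_{s=0}^{t-1}\bigl[K_\L(t{-}1{-}s)\,\tilde{b}_\L^s + K_\R(t{-}1{-}s)\,\tilde{b}_\R^s\bigr],
\]
where $K_\L(k)$ and $K_\R(k)$ are the occupation probabilities of $\mass{1}{0}$ at time $k$ for walkers started at $\mass{0}{1}$ and at $\mass{\len}{\len-1}$ respectively. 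Because $\tilde{b}_\L^s+\tilde{b}_\R^s=1$ and, by total absorption of the finite chain, $\damp\sum_k(K_\L(k)+K_\R(k))=1$, we obtain
\[
\mass{1}{0}^t\;\le\;\damp\sum_{k\ge 0}\max(K_\L(k),K_\R(k))\;=\;1-\damp\sum_{k\ge 0}\min(K_\L(k),K_\R(k)).
\]

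It remains to lower-bound $\damp\sum_k\min(K_\L,K_\R)$ by $c_2/\len^3$. Projecting the walker onto node positions $\{0,1,\dots,\len\}$ yields a lazy nearest-neighbour random walk killed at the two endpoints; by the random walk estimates used in Lemma~\ref{lemma:simplified_contraction}, its mean absorption time is $\Theta(\len^2)$. On an overlap window $k\asymp\len^2$ of length $\Omega(\len^2)$ the left-started walker is not yet absorbed while the right-started one has already had time to traverse the path, so both $K_\L(k)$ and $K_\R(k)$ are $\Omega(1/\len^2)$ there; integrating gives $\damp\sum_k\min(K_\L,K_\R)=\Omega(1)\ge c_2/\len^3$. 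The reflection symmetry $i\mapsto\len-i$ of the path and matching then yields the same bound on $\mass{\len-1}{\len}^t$. The main obstacle is making the simultaneous lower bound on both kernels rigorous; alternatively, a monotone coupling on node positions forcing the far walker to trail the near walker gives $K_\L(k)\ge K_\R(k)$ pointwise, which collapses the bound to $\damp\sum_k K_\L(k)=p_\L=(\len-1)/(\len+1)\le 1-c_2/\len^3$.
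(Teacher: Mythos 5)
Your proof of (\ref{eq:Bound1RW}) by induction is correct and matches the paper's. Your setup for (\ref{eq:Bound2RW}) is also essentially the paper's argument in adjoint language: writing $\mass{1}{0}^t=\damp\sum_s[K_\L(t-1-s)\tilde b^s_\L+K_\R(t-1-s)\tilde b^s_\R]$, using $\tilde b_\L^s+\tilde b_\R^s=1$ and $\damp\sum_k(K_\L+K_\R)=1$, and reducing to a lower bound on $\damp\sum_k\min(K_\L,K_\R)$. The paper does exactly this via time-translation, monotonicity in the injection set, and left--right inversion (which is where the odd length of the path is used).

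The gap is in the final quantitative step, and you flag it yourself. Your window argument asserts that both $K_\L(k)$ and $K_\R(k)$ are $\Omega(1/\len^2)$ on a window of length $\Omega(\len^2)$, giving $\damp\sum_k\min(K_\L,K_\R)=\Omega(1)$. This is inconsistent: $\damp\sum_k K_\R(k)=p_\R$ is the probability that a walker started near the right boundary is absorbed at the left, which by gambler's ruin is $\Theta(1/\len)$, so $\damp\sum_k\min(K_\L,K_\R)\le p_\R=\Theta(1/\len)$ and cannot be $\Omega(1)$. Likewise ``the left-started walker is not yet absorbed'' at $k\asymp\len^2$ is false with high probability, since its mean absorption time starting one step from the boundary is $\Theta(\len)$, not $\Theta(\len^2)$; at $k\asymp\len^2$ its survival probability is only $\Theta(1/\len)$. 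The correct orders, which the paper obtains, are that at some $t^*\asymp\len^2$ both $K_\L(t^*)$ and $K_\R(t^*)$ are $\Omega(1/\len^3)$; the paper then lower-bounds the sum by this single term, while integrating over a window of length $\Theta(\len^2)$ would instead give $\Omega(1/\len)$, either of which suffices but neither of which is what you claim. The alternative you sketch, a monotone coupling giving $K_\L(k)\ge K_\R(k)$ pointwise in $k$, is asserted without proof; a coupling that keeps the far walker to the right gives a stochastic ordering of hitting times of the left boundary, not a pointwise inequality of occupation probabilities at a fixed interior edge, and I see no easy way to upgrade it. In short: the structure is right and mirrors the paper, but the one step the paper actually has to work for---exhibiting a time at which both kernels are simultaneously bounded below, via a gambler's ruin estimate---is not established by either of your two suggested routes.
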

\begin{proof}
Equation (\ref{eq:Bound1RW}) follows immediately by induction
using equations (\ref{eq:matching_simple_path_bound}) and
(\ref{eq:nonmatch_simple_path_bound}).

In order to prove Eq.~(\ref{eq:Bound2RW}),
recall the random walk interpretation of the  equations
(\ref{eq:matching_simple_path_bound}) and (\ref{eq:nonmatch_simple_path_bound})
provided in the proof of Lemma \ref{lemma:RW}.
With the new boundary conditions, $\mass{i}{j}^t$ is the expected
number of particles on edge $(i,j)$, and directed from $i$ to $j$,
when the path system is initially empty (since $\barmass^0=\underline{0}$)
and a particle is injected with probability $\damp$ at times $t\in\Times_+$
on edge $(0,1)$  (directed to the right), and at times $t\in\Times_+$
on edge $(\len,\len-1)$ (directed to the left).

By invariance of the random walk process under
time traslation, we can consider a process that is initiated at time $-t$,
with particles injected according a partition $\Times^{(-t)} =
(\Times^{(-t)}_+,\Times^{(-t)}_-)$
of $\{0,-1,-2,\dots,-t\}$. In order to prove
Eq.~(\ref{eq:Bound2RW}) it is sufficient to bound the mass on boundary edges
at time $0$, $\max(\mass{1}{0}^0, \mass{\len-1}{\len}^0)$.
By monotonicity of this quantity with respect to the particle injection,
we can consider an extension $\Times^{(-\infty)} =
(\Times^{(-\infty)}_+,\Times^{(-\infty)}_-)$ of the
above partition, and focus on the case $t\to\infty$, which we will do
hereafter.

Since the random walk process is left unchanged under left-right
inversion of the path ($\len$ is odd), we have $\mass{\len-1}{\len}^0 =
\imass{1}{0}^0$, where $\imass{i}{j}^t$
is the expected number of particles on edge $(i,j)$
under boundary condition that inverts
$\Times^{(-\infty)}_+$ and  $\Times^{(-\infty)}_-$.
It is therefore sufficient to bound
\begin{eqnarray}
\max(\mass{1}{0}^0, \imass{1}{0}^0) = \mass{1}{0}^0+
\imass{1}{0}^0-\min(\mass{1}{0}^0, \imass{1}{0}^0)\, .
\end{eqnarray}
Notice that, by linearity,  $\mass{1}{0}^0+
\imass{1}{0}^0$ is the expected number of particles on edge
$(0,1)$ (directed towards $0$) when a particle is injected
at each time $-1$, $-2$, $\dots$ at both ends with probability
$\damp$. It is easy to check that the stationary distribution
is, in this case, of $1$ expected particle per edge, yielding
$\mass{1}{0}^0+\imass{1}{0}^0=1$.

We are left with the task of showing that
$\min(\mass{1}{0}^0, \imass{1}{0}^0)\ge c_2/\len^{3}$. Again by monotonicity
with respect to the particle injection, it is sufficient to
show that there exist at least one time $t<0$
such that the following two facts are true:
$(1)$ A particle injected at time $t$ on edge $(0,1)$
is on edge $(1,0)$ at time $0$ with probability larger than $ c_2/\len^{3}$;
$(2)$ A particle injected at time $t$ on edge $(\len,\len-1)$
is on edge $(1,0)$ at time $0$ with probability larger than $ c_2/\len^{3}$.

Recall that the process when watched on
matching edges, is a lazy random walk.
Therefore, we reduced our problem to proving the following elementary
fact about the gambler's ruin model: There exists a time
$t$ such that both probabilities that the gambler has one dollar
and that he has $m'-1$ dollars ($m'$ being the maximum)
are at least $c_2'/(m')^{3}$. It is easy to check that this is the
case, for some $t = \Theta\left((m')^2\right)$ (see, for example, \cite{Spitzer}).
Note that this analysis, which
holds for $\len \geq \len_0$ for some $\len_0<\infty$,
suffices -- for any fixed value of $\len$, we know that $\exists t < 0$
such that a particle injected at time $t$ on edge $(0,1)$
is on edge $(1,0)$ with non-zero probability AND
a particle injected at time $t$ on edge $(\len,\len - 1)$
is on edge $(1,0)$ with non-zero probability.
\end{proof}

\begin{lemma}\label{lemma:CorrectMatching_pc}
Let  $\{\barbound^t\}_{t\ge 0}$ be the $(s,\Delta,\delta,\Times,\delta_\c,\delta_\s)$-bounding process
on $P_\c$ wrt $M_{P_\c}^*$ for some $\delta \leq \min(\sigma,\Delta)$ and $\delta_\c, \delta_\s \in (0, \delta]$.
Then for any $t$, and for any $(i,j)\in E_P$
\begin{eqnarray}
\bound{i}{j}^t +\bound{j}{i}^t-w_{ij}\le 0&&
\mbox{ if $(i,j)\in M_{P_\c}^*$,}\\
\bound{i}{j}^t +\bound{j}{i}^t-w_{ij}\ge 0&&
\mbox{ if $(i,j)\not\in M_{P_\c}^*$.}
\end{eqnarray}
\end{lemma}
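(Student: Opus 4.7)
The plan is to adapt the partial-order sandwich argument used in the proof of Lemma~\ref{lemma:CorrectMatching} to the cyclic setting. By Eq.~(\ref{eq:KTstruct_fp}) applied to $\cC_q$, the quantity $\alf{i}{j}^{*}+\alf{j}{i}^{*}-w_{ij}$ equals $-2\sigma_q$ when $(i,j)\in M_{P_\c}^{*}$ and $+\sigma_q$ when $(i,j)\in E(\cC_q)\setminus M_{P_\c}^{*}$. Since $\delta_\c,\delta_\s\le\delta\le\sigma\le\sigma_q$, the statement will follow at once from the uniform-in-$t$ sum bound
\begin{align}
\bigl|\,\bound{i}{j}^{t}+\bound{j}{i}^{t}-(\alf{i}{j}^{*}+\alf{j}{i}^{*})\,\bigr| \;\le\; \max(\delta_\c,\delta_\s)\label{eq:cycle_sum_bound}
\end{align}
for every edge $(i,j)\in E_{P_\c}$.

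To prove~(\ref{eq:cycle_sum_bound}), I will sandwich $\barbound^{t}$ between two \emph{stationary} reference states, each of which preserves the sum structure relative to $\baralf^{*}$. Take $s=+1$ for concreteness; the case $s=-1$ is symmetric. Let $\tilde{\barbound}^{*}$ denote the fixed point of the simplified dynamics on $P_\c$ with matching $M_{P_\c}^{*}$ and \emph{constant} boundary values $\tilde b_\L = \bound{0}{1}^{*}(+)+\max(\delta_\c,\delta_\s)$, $\tilde b_\R = \bound{\len}{\len-1}^{*}(+)-\max(\delta_\c,\delta_\s)$. I claim that for all $t\ge 0$,
\begin{align}
\barbound^{*}(+) \;\preceq\; \barbound^{t} \;\preceq\; \tilde{\barbound}^{*}\label{eq:cycle_sandwich}
\end{align}
in the path partial order. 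Both halves are direct applications of Lemma~\ref{lemma:monotonicity_simplified}: the initial-order inequalities $\barbound^{*}(+)\preceq\barbound^{0}\preceq\tilde{\barbound}^{*}$ follow from the $\pm s(-1)^{i}\delta_\c$ offsets in the definition of $\barbound^{0}$ together with $\delta_\c\le\max(\delta_\c,\delta_\s)$; the left-boundary condition $b_\L^{t}\in[\bound{0}{1}^{*}(+),\tilde b_\L]$ is immediate from~(\ref{eq:boundary_tplus})-(\ref{eq:boundary_tminus}); and, because $\len_\c$ is odd, the right-boundary hypothesis in Lemma~\ref{lemma:monotonicity_simplified} (which carries a factor $(-1)^{\len}$) reduces to $\tilde b_\R\le b_\R^{t}\le\bound{\len}{\len-1}^{*}(+)$, which also holds by inspection.

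The main obstacle is to verify the sum-preservation property $\tilde{\bound{i}{j}}^{*}+\tilde{\bound{j}{i}}^{*}=\alf{i}{j}^{*}+\alf{j}{i}^{*}$ on every edge, since this is what allows~(\ref{eq:cycle_sandwich}) to translate into~(\ref{eq:cycle_sum_bound}). Setting $\sbarDelta^{*}=\tilde{\barbound}^{*}-\barbound^{*}(+)$, the fixed-point equations for the simplified dynamics become $\sDel{i}{i+1}^{*}=-\sDel{i-1}{i}^{*}$ on matching edges and $\sDel{i}{i+1}^{*}=\tfrac{1}{2}(-\sDel{i-1}{i}^{*}+\sDel{i}{i-1}^{*})$ on non-matching ones, with prescribed boundary values $\sDel{0}{1}^{*}=\max(\delta_\c,\delta_\s)$ and $\sDel{\len}{\len-1}^{*}=-\max(\delta_\c,\delta_\s)$. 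The alternating ansatz $\sDel{i}{i+1}^{*}=(-1)^{i}\max(\delta_\c,\delta_\s)$, $\sDel{i+1}{i}^{*}=-(-1)^{i}\max(\delta_\c,\delta_\s)$ satisfies every interior equation, and its consistency with the right-boundary value requires $(-1)^{\len-1}=1$, i.e.\ exactly $\len_\c$ odd. This is precisely where the parity of the blossom cycle is essential; under the ansatz, $\sDel{i}{j}^{*}+\sDel{j}{i}^{*}=0$ on every edge, which is the desired sum-preservation.

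Finally, reading off the component-wise inequalities from~(\ref{eq:cycle_sandwich}) and pairing, on each edge $(i,j)$, the direction bounded above by $\tilde{\bound{i}{j}}^{*}$ with the direction bounded below by $\tilde{\bound{j}{i}}^{*}$---exactly as in the closing step of the proof of Lemma~\ref{lemma:CorrectMatching}---yields~(\ref{eq:cycle_sum_bound}), and combining this with~(\ref{eq:KTstruct_fp}) produces the signed inequalities claimed in the lemma.
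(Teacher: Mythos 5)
Your proof follows the same strategy as the paper's: sandwich $\barbound^t$ between two stationary simplified-dynamics profiles via Lemma~\ref{lemma:monotonicity_simplified}, exploit the odd parity of $\len_\c$ to show the alternating-offset profiles are fixed points, observe that the per-edge sums $\bound{i}{j}+\bound{j}{i}$ match $\alf{i}{j}^*+\alf{j}{i}^*$ at both ends of the sandwich, and close with Eq.~(\ref{eq:KTstruct_fp}). Your version is in fact slightly more careful than the paper's: the paper takes the upper stationary profile to be $\barbound^0$ (offset $\delta_\c$) and asserts $b_\L^t\le b_{0,\L}$, which implicitly requires $\delta_\s\le\delta_\c$; that holds in every application in the paper (where $\delta_\c=\delta$), but not under the lemma's stated hypotheses $\delta_\c,\delta_\s\in(0,\delta]$. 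Your choice of offset $\max(\delta_\c,\delta_\s)$ handles all cases. One tiny slip: you label the fixed-point recursion $\sDel{i}{i+1}=-\sDel{i-1}{i}$ as holding "on matching edges" and the averaging one on "non-matching," but it is the reverse (the equation for $\salf{i}{i+1}$ is driven by the \emph{incoming} edge $(i-1,i)$: non-matching gives the flip, matching gives the average). The alternating ansatz satisfies both, so the conclusion is unaffected.
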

\begin{proof}
Consider $s=(+)$. We will use Lemma \ref{lemma:monotonicity_simplified} to bound $\barbound^t$
from both sides. Let $\{b_\L^t\}, \{b_\R^t\}$ denote the boundary conditions for
$\{\barbound^t(+)\}$ as defined in (\ref{eq:boundary_tplus}), (\ref{eq:boundary_tminus}).
Let $b_{*,\L}=\bound{0}{1}^*\,,\ b_{*,\R}=\bound{\len}{\len-1}^*$.
Let $b_{0,\L}=\bound{0}{1}^*+\delta_\c\,,\ b_{0,\R}=\bound{\len}{\len-1}^*-\delta_\c$.
(Recall that $\len$ is odd.)
Clearly, $\barbound^0$ is a fixed point with respect to the constant boundary conditions
$b_{0,\L}, b_{0,\R}$. We have,
\begin{align*}
b_{*,\L} \leq b_\L^t \leq b_{0,\L} \qquad b_{*,\R} \geq b_\R^t \geq b_{0,\R} \qquad \forall \, t \geq 0
\end{align*}
Also, $\barbound^* \preceq \barbound^0$.
Using Lemma \ref{lemma:monotonicity_simplified} twice we have
\begin{align}
\barbound^*(+) \preceq \barbound^t(+) \preceq \barbound^0(+) \qquad \forall \,t \geq 0
\label{eq:bounding_range_pc}
\end{align}
Recall $\delta_\c \leq \delta$.
The rest of the proof mirrors that of Lemma \ref{lemma:CorrectMatching}.

A similar proof can be constructed for $s=(-)$.
\end{proof}

\begin{lemma}\label{lemma:BoundingBounds_pc}
Let $G$ be an instance admitting a unique NB solution with
gap $\sigma$, and assume its KT sequence
to be $(\cC_0, \cC_1,\cC_2,\dots,\cC_k)$, with $\cC_q, q \in \{1,\ldots, k\}$ a blossom
mapped to stem $P_\s$ and cycle $P_\c$. Suppose $\Bo_\Delta(q,\delta,0)$ holds
for some $\delta \leq \min(\sigma, \Delta)$.
Also, assume
\begin{align}
U_{G,P_\s}(\baralf^t) &\leq \Delta-\delta+\delta_\s  \qquad \forall \,t \geq 0 \\
U_{G,P_\c}(\baralf^0) &\leq \Delta-\delta+\delta_\c
\end{align}
for $\delta_\c, \delta_\s \in (0,\delta]$.
Define
\begin{align*}
\Times_+ &=\{t: t\in \mathbb{N}\cup\{0\}, \off{1_\s}{0}^t \geq \off{1_\s}{0}^*\}\\
\Times_- &=\{t: t\in \mathbb{N}\cup\{0\}, \off{1_\s}{0}^t < \off{1_\s}{0}^*\}\\
\Times &=(\Times_+,\Times_-)
\end{align*}
Denote by $\{\barbound^t(s)\}_{t\ge 0}$
the $(s,\Delta,\delta,\Times,\delta_c,\delta_s)$ bounding process on $P_\c$. Then for any
$t\ge 0$:
\begin{eqnarray}
\barbound^t(-)\preceq \baralf^t_{P_\c}\preceq \barbound^t(+)\, .
\label{eq:real_pc_bound}
\end{eqnarray}
\end{lemma}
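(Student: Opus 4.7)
The plan is to prove the sandwich $\barbound^t(-)\preceq \baralf^t_{P_\c}\preceq \barbound^t(+)$ by induction on $t$, following the same template as Lemma \ref{lemma:BoundingBounds_refined} but adapting it to the cyclic topology of $P_\c$ (whose endpoints $0_\c,\len_\c$ are both aliases of the blossom apex node $0$) and to the time-varying boundary conditions keyed on the partition $\Times$. I focus on the upper bound; the lower bound follows by the mirror argument with $s$ negated.

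At $t=0$, the sandwich reduces to a signed pointwise comparison: $\bound{i}{i+1}^0(+)-\alf{i}{i+1}^*$ equals $(-1)^i(\Delta-\delta+\delta_\c)$ by the initialization of $\barbound^0$, while $|\alf{i}{i+1}^0-\alf{i}{i+1}^*|\leq \Delta-\delta+\delta_\c$ by the $U_{G,P_\c}(\baralf^0)$ hypothesis, and the signs built into $\preceq$ give the inequality. For the inductive step I apply the standard template of Lemma \ref{lemma:BoundingBounds}: first I threshold $\barbound^t(+)$ componentwise to the valid range $[0,W]$, producing a valid message vector $\tbarbound^t$ satisfying $\baralf^t_{P_\c}\preceq \tbarbound^t\preceq \barbound^t(+)$; then I form the hybrid state $\barbet^t = (\baralf^t_{G\setminus E_{P_\c}},\tbarbound^t)$, evolve one step of the original dynamics, restrict to $P_\c$ to get $\bartightbound^{t+1}$, and use Lemma \ref{lemma:monotonicity} to reduce the goal to $\bartightbound^{t+1}\preceq \barbound^{t+1}(+)$. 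On interior edges of $P_\c$ this follows exactly as in the path case, by comparing the simplified and real offer formulas under the thresholding bounds, with Lemma \ref{lemma:CorrectMatching_pc} ensuring the matching/non-matching classification along $\barbound^t$ is consistent with the one along the real messages.

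The main obstacle is the two boundary messages at $0_\c$ and $\len_\c$, because in the real dynamics each is driven by a max of $\off{j}{0}^t|_{\barbet^t}$ ranging over $1_\s$, the opposite-end cycle neighbor, and nodes outside $\cC_q$, whereas the bounding process replaces each by the single constant $b_\L^t(+)$ or $b_\R^t(+)$. Since $\len_\c$ is odd, both endpoint indices $i=0$ and $i=\len_\c-1$ have even parity, so the $\preceq$ ordering demands an \emph{upper} bound on the max at the $0_\c$ end and a \emph{lower} bound on the max at the $\len_\c$ end; the asymmetric $\pm\delta_\s$ slack in $b_\L^t,b_\R^t$, allocated to one end or the other according to $\Times_+$ vs.\ $\Times_-$, is designed exactly to satisfy both inequalities simultaneously. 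I verify this by bounding each of the three offer contributions separately in all four combinations of $\{+,-\}$ and $\{\Times_+,\Times_-\}$: $\off{1_\s}{0}^t$ via Claim \ref{claim:offer_non_expansion} applied to the hypothesis $U_{G,P_\s}(\baralf^t)\leq \Delta-\delta+\delta_\s$, where the sign information from $\Times_\pm$ sharpens the bound in one direction and makes the unadorned constant tight in the other; the offer $\off{(\len_\c-1)_\c}{0}^t|_{\barbet^t}$ (and its counterpart $\off{1_\c}{0}^t|_{\barbet^t}$ at the other end) using $U_{\cG}(\baralf^t)\leq \Delta$ from Lemma \ref{lemma:CondStrucPersistence}, combined with the second-best-offer identity $\off{\cdot}{0}^* = \gamma_0^*-\sigma_q$ and the gap inequality $\delta\leq \sigma\leq \sigma_q$; and contributions from outside $\cC_q$ via Lemma \ref{lemma:outsideOff_ub}. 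The delicate step, and the essential use of the odd cycle length, is checking that in each of the four cases the $\delta_\s$ slack sits at exactly the end where it is needed, so the same trajectory $\barbound^t(+)$ delivers both the upper bound at $0_\c$ and the lower bound at $\len_\c$; once this is done for both boundary edges, combined with the interior analysis, one obtains $\bartightbound^{t+1}\preceq \barbound^{t+1}(+)$ and the induction closes.
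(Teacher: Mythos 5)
Your proposal is correct and follows essentially the same route as the paper: induction on $t$, a base case from the $U_{G,P_\c}(\baralf^0)$ hypothesis, internal edges handled by the path template, and the two boundary messages out of the apex handled by a case split on $\Times_\pm$ in which the $\delta_\s$ slack in $b_\L^t$ or $b_\R^t$ absorbs the one-sided deviation of $\off{1_\s}{0}^t$ from $\off{1_\s}{0}^*$ while the far-cycle and out-of-$\cC_q$ offers are controlled by the gap $\sigma_q\ge\delta$ and Lemma \ref{lemma:outsideOff_ub}. The only cosmetic difference is that you route the boundary-edge comparison through the thresholded hybrid $\bartightbound^{t+1}$ as in the path case, whereas the paper bounds $\max_{j\in\partial 0\setminus 1}\off{j}{0}^t$ directly at $\baralf^t$; both close the induction.
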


\begin{proof}
The proof is by induction on $t$. (\ref{eq:real_pc_bound}) holds at
$t=0$ by assumptions. Suppose it holds at $t$. The proof of the required inequalities
at $t+1$ for internal messages mirrors the proof of Lemma \ref{lemma:BoundingBounds}.

We must show $\bound{0}{1}^{t+1}(-) \leq \alf{0}{1}^{t+1} \leq \bound{0}{1}^{t+1}(+)$.
By hypothesis, $\bound{0}{1}^{t}(-) \leq \alf{0}{1}^{t} \leq \bound{0}{1}^{t}(+)$.
From Claim \ref{claim:offer_non_expansion}, we know that
\begin{align}
|\off{1_\s}{0}^t-\off{1_\s}{0}^*| \leq \Delta - \delta +\delta_s
\label{eq:off_1s_0}
\end{align}
Case (i): $\off{1_\s}{0}^t \geq \off{1_\s}{0}^*$ i.e. $t \in \Times_+$\\
It suffices to show that
\begin{align}
\max_{j \in \partial 0 \backslash 1} \off{j}{0} \in
[\alf{0}{1}^*-(\Delta-\delta), \alf{0}{1}^*+(\Delta-\delta)+\delta_s]
\end{align}
The lower bound follows from
$\off{1_\s}{0}^t \geq \off{1_\s}{0}^*=\alf{0}{1}^* \geq \alf{0}{1}^*-(\Delta-\delta)$.
We have
\begin{align}
\off{j}{0} \leq \off{j}{0}^* + \Delta \leq \off{1_\s}{0}^*-\sigma_q+\Delta
\leq \alf{0}{1}^*+(\Delta-\delta) \,, \qquad \forall \, j \in \partial 0 \backslash 1_\s
\label{eq:off_not1s_0}
\end{align}
(\ref{eq:off_1s_0}) and (\ref{eq:off_not1s_0}) give the upper bound.

Case (ii): $\off{1_\s}{0}^t < \off{1_\s}{0}^*$ i.e. $t \in \Times_-$\\
It suffices to show that
\begin{align}
\max_{j \in \partial 0 \backslash 1} \off{j}{0} \in
[\alf{0}{1}^*-(\Delta-\delta)-\delta_s, \alf{0}{1}^*+(\Delta-\delta)]
\end{align}
The lower bound follows from (\ref{eq:off_1s_0}). The upper bound follows from
(\ref{eq:off_not1s_0}) and $\off{1_\s}{0}^t < \off{1_\s}{0}^*=\alf{0}{1}^*$.

$\bound{\len}{\len-1}^{t+1}(-) \geq \alf{\len}{\len-1}^{t+1} \geq \bound{\len}{\len-1}^{t+1}(+)$ follows similarly.

Induction on $t$ completes the proof.
\end{proof}

\begin{lemma}
Let $G$ be an instance admitting a unique NB solution
$\baralf^*$. Suppose $\cC_q$ corresponds to a blossom and
suppose the $\Bo(q, \delta, 0)$ holds for some $\delta \leq \min(\sigma, \Delta)$.
We have
\begin{align}
U_{G,P_\s}(\baralf^t)
&\leq \Delta-\delta + \frac{\delta}{10n}
\qquad \forall \, t \geq C n^6
\label{eq:Ps_bound}
\end{align}
for some $C$ finite.
\label{lemma:blossom_stem_bound}
\end{lemma}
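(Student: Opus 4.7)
The plan is to establish the bound via an iterative refinement that alternates between applying the cycle bounding of Lemma \ref{lemma:BoundingBounds_pc} and a stem bounding argument in the spirit of Lemma \ref{lemma:BoundingBounds_refined}. At each round the current bound on the stem error enters as the boundary perturbation parameter $\delta_\s$ driving the cycle; the odd-cycle contraction factor $(1-c_2/\ell_c^3)$ from Lemma \ref{lemma:blossom_cycle_bounding_descent} then produces a strictly better bound on the offers entering node $0$ from the cycle, which feeds back into a tighter stem bound. After $O(\ell_c^3\log n)$ such rounds this multiplicative contraction drives the stem error below $\delta/(10n)$.

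Concretely, I would prove by induction on $k\ge 0$ that there exist times $T_k = k\cdot T_*$ with $T_* = O(n^{2.1}\log n)$ and parameters $\delta_\s^{(k)}$ satisfying $\delta_\s^{(0)} = \delta$ and $\delta_\s^{(k+1)} = (1-c_2/\ell_c^3)\,\delta_\s^{(k)} + c_3\epsilon$, such that $U_{G,P_\s}(\baralf^t) \le \Delta - \delta + \delta_\s^{(k)}$ for every $t\ge T_k$; here $\epsilon$ is a small parameter to be fixed at the end. The base case reduces to $\Bo(q,\delta,0)$ combined with the non-expansion Lemma \ref{lem:infty_non_expansion}. For the inductive step, restart the clock at time $T_k$ and apply Lemma \ref{lemma:BoundingBounds_pc} on $P_\c$ with parameters $(\Delta,\delta,\Times,\delta,\delta_\s^{(k)})$, choosing $\Times = (\Times_+,\Times_-)$ as the partition determined by the sign of $\off{1_\s}{0}^t - \off{1_\s}{0}^*$ for $t\ge T_k$; the contraction Lemma \ref{lemma:blossom_cycle_bounding_descent} applies uniformly in $\Times$, so after $T_*$ further steps the bounding processes $\barbound^t(\pm)$ sandwich $\baralf^t_{P_\c}$ and are within $\delta_\s^{(k)}(1-c_2/\ell_c^3)+\epsilon$ of $\barbound^*(\pm)$ at the two cycle-boundary edges incident to node $0$.

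Next I would convert this contraction into a contraction of the offers $\off{1_\c}{0}^t$ and $\off{(\ell_c-1)_\c}{0}^t$ reaching node $0$ from the cycle side. Since $\off{j}{i}$ is non-increasing in $\alf{j}{i}$ and non-decreasing in $\alf{i}{j}$, the partial order $\barbound^t(-)\preceq \baralf^t_{P_\c} \preceq \barbound^t(+)$ yields $\off{1_\c}{0}|_{\barbound^t(-)} \le \off{1_\c}{0}|_{\baralf^t} \le \off{1_\c}{0}|_{\barbound^t(+)}$, and by Lemma \ref{lemma:CorrectMatching_pc} the bounding-process edge sums have the correct matching sign so that the bracketing offers coincide with the simplified-dynamics offers and depend only on the outgoing boundary message. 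The contracted bound at the cycle boundary thus transfers to $|\off{1_\c}{0}^t - \off{1_\c}{0}^*| \le (\Delta-\delta) + \delta_\s^{(k)}(1-c_2/\ell_c^3) + \epsilon$, and similarly for the other cycle-boundary neighbour. Combined with Lemma \ref{lemma:outsideOff_ub} to control offers from outside $\cC_q$, this gives $|\max_{k\in\partial 0\setminus 1_\s}\off{k}{0}^t - \alf{0}{1_\s}^*| \le (\Delta-\delta) + \delta_\s^{(k)}(1-c_2/\ell_c^3) + \epsilon$. Feeding this offer bound into the proof of Lemma \ref{lemma:BoundingBounds_refined} applied to $P_\s$ (whose boundary-handling argument in fact only uses bounds on offers along $\partial P_\s$, not the full message bound $U_{G,\partial P_\s}$) propagates the contraction through the stem in a further $O(\ell_s^2\log(\delta/\epsilon))$ steps, producing $U_{G,P_\s}(\baralf^t) \le \Delta - \delta + \delta_\s^{(k)}(1-c_2/\ell_c^3) + c_3\epsilon$ for all $t\ge T_{k+1}$ and closing the induction.

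The main technical obstacle I anticipate is making rigorous the claim that the stem-bounding step needs only bounds on offers along $\partial P_\s$ rather than the stronger message bound $U_{G,\partial P_\s}\le \Delta-\delta'$ assumed in the statement of Lemma \ref{lemma:BoundingBounds_refined}; one must revisit that proof and check that the intermediate inequalities $\off{j}{0}^t \le \off{j}{0}^* + \Delta - \delta'$ and $\off{v}{0}^t \ge \off{v}{0}^* - (\Delta-\delta')$ can be replaced verbatim by our direct offer bound without disturbing the rest of the argument. Granting this refinement, taking $\epsilon = \delta/(Cn^4)$ for a large enough constant $C$ and iterating $K = \lceil (\ell_c^3/c_2)\log(20n)\rceil = O(n^3\log n)$ rounds yields $\delta_\s^{(K)}\le \delta/(10n)$, and the total running time is $K\cdot T_* = O(n^3\log n)\cdot O(n^{2.1}\log n) = O(n^{5.1}\log^2 n) \le Cn^6$ as required.
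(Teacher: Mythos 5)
Your proposal matches the paper's proof in all essentials: the same two-phase iterative refinement alternating the odd-cycle contraction (Lemmas~\ref{lemma:BoundingBounds_pc} and~\ref{lemma:blossom_cycle_bounding_descent} with $\delta_\c=\delta$ and $\delta_\s$ set to the running stem error) and a stem-propagation step (Lemma~\ref{lemma:BoundingBounds_refined} together with Lemma~\ref{lemma:BoundConv}); the same geometric recursion $\delta_\s^{(k+1)}=(1-c_2/\ell_\c^3)\delta_\s^{(k)}+O(\epsilon)$ with $\epsilon\sim\delta/n^4$; the same round count $O(n^3\log n)$ and per-round time $O(n^{2.1}\log n)$, summing to a bound dominated by $Cn^6$. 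This is exactly how the paper argues, with only cosmetic differences in how the sequences $\delta_\s(i)$, $\delta_\b(i)$ are parametrized.

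One small point: the detour through offer bounds $\off{1_\c}{0}^t$ and the worry about needing a ``refinement'' of Lemma~\ref{lemma:BoundingBounds_refined} are unnecessary. Lemma~\ref{lemma:blossom_cycle_bounding_descent} controls $|\bound{1}{0}^t-\bound{1}{0}^*|$ and $|\bound{\len-1}{\len}^t-\bound{\len-1}{\len}^*|$, and via the sandwich of Lemma~\ref{lemma:BoundingBounds_pc} and the identities~\eqref{eq:bound_fp} this directly yields $|\alf{1_\c}{0}^t-\alf{1_\c}{0}^*|\le(\Delta-\delta)+\delta_\s^{(k)}(1-c_2/\ell_\c^3)+\epsilon$ (and likewise for $(\len_\c-1)_\c$). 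Since $\partial P_\s$ consists exactly of the directed edges $(1_\c,0)$ and $((\len_\c-1)_\c,0)$ pointing into the stem, this is already the hypothesis $U_{G,\partial P_\s}(\baralf^t)\le\Delta-\delta'$ that Lemma~\ref{lemma:BoundingBounds_refined} asks for, with $\delta'=\delta-\delta_\s^{(k+1)}$; one only has to check $\delta_\b(i+1)\le\delta_\s(i+1)$ to see that the bound is strong enough. So the lemma can be invoked verbatim, and the inductive step closes without revisiting its boundary argument.
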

\begin{proof}
Define $\delta_{\s}(0)=\delta_{\b}(0)=\delta$. For $i=1, 2, \ldots, N$ define
\begin{align}
\delta_\b(i) &= \delta \left(1-\frac{c_2}{4n^{3}}\right)^{i-1}\left(1-\frac{c_2}{2n^{3}}\right)\\
\delta_\s(i) &= \delta \left(1-\frac{c_2}{4n^{3}}\right)^{i}
\end{align}
where $N$ is such that $\delta_\s(N) < \frac{\delta}{10n} \leq \delta_\s(N-1)$.
Clearly, $N \leq c_5 n^{3.1}$ for some finite $c_5$.

We show that $\exists c_3, c_4$ finite such that $\forall i \in \{1, 2, \ldots, N\}$
\begin{align}
\max(|\alf{1_\c}{0}^t-\alf{1_\c}{0}^*|, |\alf{(\len_\c-1)_\c}{0}^t-\alf{(\len_\c-1)_\c}{0}^*|)
&\leq \Delta-\delta + \delta_\b(i)
\qquad \forall \, t \geq \left( ic_3+(i-1)c_4\right) n^{2.2}
\label{eq:Pc_iterative_bound}\\
U_{G,P_\s}(\baralf^t)
&\leq \Delta-\delta + \delta_\s(i)
\qquad \forall \, t \geq i\left( c_3+c_4\right) n^{2.2}
\label{eq:Ps_iterative_bound}
\end{align}
where $c_2$ if from Lemma \ref{lemma:blossom_cycle_bounding_descent}.

We prove (\ref{eq:Pc_iterative_bound}), (\ref{eq:Ps_iterative_bound}) by induction on the iteration number $i$,
defining $c_3$ and $c_4$ along the way.

Clearly, (\ref{eq:Ps_iterative_bound}) holds for $i=0$. Suppose it holds for $i$. The iteration $i+1$ consists of
two phases.

Phase I:\\
Use Lemma \ref{lemma:BoundingBounds_pc} and Lemma \ref{lemma:blossom_cycle_bounding_descent}
with $\delta_s=\delta_\s(i), \delta_\c=\delta, \epsilon = \frac{\delta}{10n}.\frac{c_2}{2n^{3}}$.
We have
\begin{align}
\max(|\alf{1_\c}{0}^t-\alf{1_\c}{0}^*|, |\alf{(\len_\c-1)_\c}{0}^t-\alf{(\len_\c-1)_\c}{0}^*|)
&\leq \Delta-\delta + \delta_\b(i)( 1- \frac{c_2}{n^{3}}) + \frac{\delta}{10n}.\frac{c_2}{2n^{3}}\\
&\leq \Delta-\delta + \delta_\b(i)( 1- \frac{c_2}{2n^{3}})
\end{align}
holds after additional time
$c_3 n ^{2.2}$, where $c_3$ does not depend on $i$.
Hence, (\ref{eq:Pc_iterative_bound}) holds for $i+1$.

Phase II:\\
Use Lemma \ref{lemma:BoundingBounds_refined} on $P_\s$  with
$\delta'= \delta - \delta_\s(i+1),$ and Lemma \ref{lemma:BoundConv} with $\epsilon = \frac{\delta}{10n}.\frac{c_2}{4n^{3}}$
on the $(s, \Delta, \delta')$-bounding process
to show that
(\ref{eq:Ps_iterative_bound}) holds for $i+1$. The constant $c_4$ is chosen such that
$c_4 n^{2.2} \geq c_\RW n^{2.1}\left(1+ \log(\delta/\eps)\right) \geq c_\RW n^{2.1}\left(1+ \log(\delta'/\eps)\right)$,
since $\delta' \leq \delta$.
\end{proof}

\begin{proof}[Proof (Lemma \ref{lemma:KeyLemma}: Blossom)]
We use Lemma \ref{lemma:blossom_stem_bound} to bound the error on $P_\s$. Next we use
Lemma \ref{lemma:BoundingBounds_pc} and Lemma \ref{lemma:blossom_cycle_bounding_descent}
with $\delta_\s=\delta_\s(N), \delta_\c=\delta, \epsilon = \frac{\delta}{20n}$ to show
that
\begin{align}
U_{G,P_\c}(\baralf^t) \leq \Delta - \delta + \frac{3\delta}{20n}
\end{align}
after additional time $c_5 n^{2.2}$ for finite $c_5$. Finally, we use Lemma \ref{lemma:outgoing_error_bound} with
$\eps = \frac{\delta}{20n}$. This finally gives that $\Bo(q+1, \delta(1-(5n)^{-1}),t)$ holds for all
$t \geq Cn^6$ as required.
\end{proof}
%
%
\subsection{Bicycle convergence}
\label{app:Bicycle}

An almost identical argument works for the  bicycle as for the blossom.
We define $P_\s$ as the `rod' or `frame' of the bicycle, joining the
two cycles mapped to $P_{\c 1}$ and $P_{\c 2}$, analogously to the definitions
of $P_\s$ and $P_\c$ for the blossom. Lemma \ref{lemma:blossom_stem_bound}
can be shown to hold for the bicycle as well, with an almost identical procedure.
In this case also, we have an iterative descent, alternating between the `rod' and the
two cycles (which descend simultaneously), as per
\begin{align}
U_{G,\partial P_\s}(\baralf^t)
&\leq \Delta-\delta + \delta_\b(i)
\qquad \forall \, t \geq \left( ic_3+(i-1)c_4\right) n^{2.2}
\label{eq:Pc_iterative_bound_bicycle}\\
U_{G, P_\s}(\baralf^t)
&\leq \Delta-\delta + \delta_\s(i)
\qquad \forall \, t \geq i\left( c_3+c_4\right) n^{2.2}
\label{eq:Ps_iterative_bound_bicycle}
\end{align}
where $\partial P= \{(i,j): i \in V_\textup{ext}(\cC_q)
\backslash V(P), j \in V(P), (i,j) \in E(\cC_q)\}$.
Phase I works by using Lemmas \ref{lemma:BoundingBounds_pc} and Lemma \ref{lemma:blossom_cycle_bounding_descent}
\textit{simultaneously for both $P_{\c 1}$ and $P_{\c 2}$}. Phase II works exactly as before.

\begin{proof}[Lemma \ref{lemma:KeyLemma}: Bicycle]
We first use Lemma \ref{lemma:blossom_stem_bound}. Next, we use
Lemma \ref{lemma:BoundingBounds_pc} and Lemma \ref{lemma:blossom_cycle_bounding_descent} simultaneously
for each of $P_{\c 1}$ and $P_{\c 2}$. Finally, we use Lemma \ref{lemma:outgoing_error_bound}.
Note that we thus prove an identical bound on the descent time of a bicycle as for a blossom.
\end{proof}

\end{document}